\def\etc{\emph{etc}}
\def\eg{\emph{e.g. }}
\def\U{\mathcal{U}}
\def\T{\mathcal{T}}
\def\G{\mathcal{G}}
\newcommand{\vi} [2]{ {#1}^{(#2)} }
\newtheorem{corollary}{Corollary}
\newtheorem{lemma}{Lemma}
\title{An Efficient Web Traffic Defence Against Timing-Analysis Attacks}
 \author{Saman Feghhi and Douglas J. Leith\thanks{Work supported by SFI grant 11/PI/1177 and 13/RC/2077.}\\Trinity College Dublin, Ireland} 
\begin{document}

\maketitle

\begin{abstract}
We introduce a new class of lower overhead tunnel that is resistant to traffic analysis.  The tunnel opportunistically reduces the number of dummy packets transmitted during busy times when many flows are simultaneously active while maintaining well-defined privacy properties.   We find that the dummy packet overhead is typically less than $20$\% on lightly loaded links and falls to zero as the traffic load increases i.e. the tunnel is capacity-achieving.  The additional latency incurred is less than 100ms.  We build an experimental prototype of the tunnel and carry out an extensive performance evaluation that demonstrates its effectiveness under a range of network conditions and real web page fetches.
\end{abstract}

\begin{IEEEkeywords}
Timing-analysis attacks, Traffic analysis defence, Website fingerprinting, Network privacy, VPN.
\end{IEEEkeywords}

\section{Introduction}

Encrypting traffic to protect it from eavesdroppers is one of the basic building blocks of secure networks.   However, while encryption conceals the contents of transmitted packets other features of the transmitted packet stream, such as the packet sizes and timings, often still contain revealing information.   Taking advantage of this, over the last 10 years a number of increasingly powerful attacks have been demonstrated against encrypted packet streams, and encrypted web traffic in particular.   While attacks against HTTPS were initially based on packet sizes and counts, attacks using only packet timing information have recently been demonstrated, \eg \cite{feghhi161}.  The importance of the latter is that they are immune to defences currently being rolled out, such as padding packets to make them all the same size, plus they do not require \emph{a priori} knowledge of the start and end of each web fetch.   Such developments motivate revisiting how we transmit encrypted traffic.   

Three main approaches to defence have been considered to date.  One is to obfuscate the timing of information packets by transmitting packets at a constant rate, using buffering and insertion of dummy packets as needed \cite{dyer12,cai14,cai14b}.  While effective at disrupting attacks, this is recognised to involve such high overheads as to be unsuitable for general use.  A second, application-layer, approach is to randomise the pipelining of the requests forming the fetch a web page and also to inject dummy requests \cite{luo11}.   However, while incurring a low overhead (few dummy packets and low latency) this is largely ineffective against modern attacks \cite{cai12,wang14,juarez14}.   A third line of work is based on shaping traffic so that the transmitted trace is similar, in some appropriate sense, to that for a different web page or application \cite{wright09,nithyanand14,wang14,wang15}.   However, either these approaches are of limited effectiveness \cite{cai12,wang14,feghhi161} or again entail high overheads \cite{nithyanand14,wang14,wang15}.  In summary, existing defences are either of limited ineffectiveness or carry an excessive overhead. 

In this paper we introduce a new class of lower overhead tunnel that is resistant to traffic analysis.  The basic idea used is to ensure that, given an observed packet trace, many different sequences of web fetches could reasonably generate this trace.   Users are therefore provided strong deniability that a specific web page was fetched.  This indistinguishability idea is not new, but the approach used to realise it is novel.  Important features are that (i) our approach is lightweight and does not, for example, require pre-clustering of web pages (and so maintainance of a packet trace database etc) and (ii) packets from multiple web fetches/flows are aggregated thereby allowing us to opportunistically reduce the number of dummy packets transmitted during busy times when many flows are simultaneously active while maintaining well-defined privacy properties.   We find that the dummy packet overhead is typically less than $20$\% on lightly loaded links.  Further, the fraction of dummy packets falls to zero as the traffic load increases i.e. the tunnel is capacity-achieving.  The additional latency introduced by the tunnel is less than 100ms.  Importantly, we build an experimental prototype of the tunnel and carry out an extensive performance evaluation that demonstrates its effectiveness under a range of network conditions and real web page fetches.   {The importance of using a working implementation is twofold.  Firstly, the buffering associated with traffic shaping affects the behaviour of the web fetch, e.g. delay of a request has knock-on effects on the timing of the response which may in turn affect subsequent requests, and such interactions are not captured by the usual simulation approach of replaying a recorded packet sequence.  Secondly, practical issues such as handling of DNS requests and interactions between traffic shaping and TCP congestion control, which have received little attention in the traffic analysis literature, are highlighted.}

{The rest of the paper is organised as follows.   In Section II we formalise the threat model considered and set the present work in the context of the existing literature.   Section III introduces the use of a trace-based traffic shaping approach to achieve indistinguishability and in Section IV we show that it is possible to  perform traffic shaping in a way that, while ensuring indistinguishability, reduces the dummy packet overhead during busy times and importantly that the fraction of dummy packets falls to zero as the traffic load increases.  Section V presents experimental measurements on the performance of a prototype VPN implementing the proposed approach, evaluating a number of design choices and presenting performance data for both TCP and UDP traffic.   Section VI evaluates the performance of the tunnel against a state of the art traffic analysis attack and also compares its performance with other proposed defences based on indistinguishability.  Our conclusions are summarised in Section VII.
}

\begin{figure}
\centering
\includegraphics[width=0.7\columnwidth]{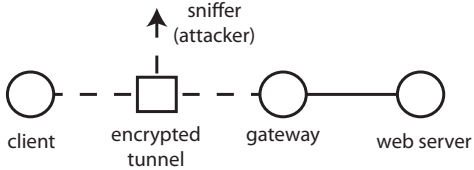}
\caption[Schematic illustrating threat model]{Schematic illustrating attacker of the type considered.  A client machine is connected to an external network via an encrypted tunnel (ssh, SSL, IPSec etc.).  The attacker can sniff packets traversing the tunnel, but has no other information about the clients activity.}\label{fig:intoexample}
\end{figure}


\section{Preliminaries}
\subsection{Threat Model}
The privacy disclosure scenario we consider arises naturally from the prevalence of relatively easy access to the network path connecting a client device to the internet.  We imagine two main types of attack/threat:
\begin{enumerate}
\item The first hop is a WiFi link and a nearby device equipped with a wireless interface sniffs packets transmitted across the wireless hop.  
\item ISPs and hot spot providers sniff the packets carried on the DSL/cable/ethernet link connecting the edge network to the internet.  
\end{enumerate}
Notice that the sensitive asset here is the packet trace, consisting of the time when a packet is transmitted together with the packet header and contents.   We assume that packet contents are encrypted and packets are padded to be of the same size, that is the user already has baseline defences in place by routing traffic via a suitable tunnel as illustrated schematically in Figure \ref{fig:intoexample}.  The relevant part of the packet trace for an attacker therefore consists of the timing of packet transmissions.   

We focus on web traffic, since this is a major source of traffic in the internet and has been the subject of much of the literature on traffic analysis attacks.    We do not seek to conceal the fact that the client device is browsing the web, but rather to prevent an attacker from inferring which pages are being browsed.   Packet timing data is a rich source of information, as discussed in more detail in Section \ref{sec:http} below, and in particular it is known to be sufficient to allow an attacker to infer with high probability the web page being browsed by a user \cite{feghhi161} while in \cite{dyer12} it is shown that packet padding is not sufficient to hide coarse grained features such as bursts in traffic or the total size and load time of a page.   

Although a client device may transmit packets from multiple flows, e.g. web and email, across its internet link, we want to avoid making strong assumptions about the number and characteristics of flows since this can change over time in ways that are difficult to predict i.e. we require a defence that provides a reasonable level of protection whether a single web fetch is in progress or many parallel flows are active.   That said, we expect that flow re-identification attacks may sometimes be harder to carry out when many flows are active simultaneously and of course a defence can and should opportunistically take advantage of this e.g. to reduce the overheads induced by the defence.

Note that we do not seek to address attacks that target the end host itself (viruses etc), nor attacks based on active packet injection by the attacker ( e.g. to force packet loss so as to manipulate user or web server behaviour).

\subsection{Anatomy of a Web Page Fetch}\label{sec:http}
We briefly review the reasons why packet timing information can be highly informative.    When traffic is encrypted the packet source and destination addresses and ports and the packet payload are hidden and the packets may be padded to be of equal size, so that packet size information is also concealed.   An attacker sniffing such encrypted traffic is therefore able only to observe the direction and timing of packets.   Figure \ref{fig:difweb} plots the timestamps of the uplink packets sent during the course of fetching five different health-related web pages (see below for details of the measurement setup).    The $x$-axis indicates the packet number $k$ within the stream and the $y$-axis indicates the corresponding timestamp.   It can be seen that these timestamp traces are distinctly different for each web site.

\begin{figure}
\centering
	\includegraphics[width=0.8\columnwidth]{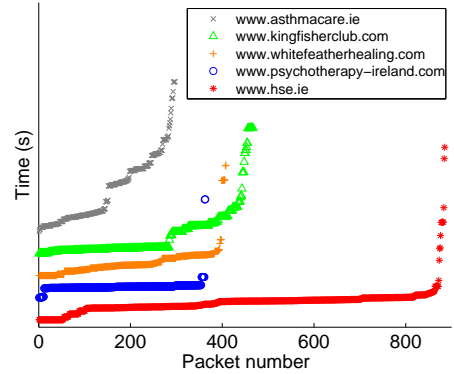}
	\caption{ Time traces of uplink traffic from 5 different Irish health-related web sites are shown. It can be seen that the web site time traces exhibit distinct patterns.  The traces are shifted vertically to avoid overlap and facilitate comparison. Figure reprinted from \protect\cite{feghhi161}.}
	\label{fig:difweb}
\end{figure}

To gain insight into the differences between the packet timestamp sequences in Figure \ref{fig:difweb}, it is helpful to consider the process of fetching a web page in more detail.    To fetch a web page the client browser starts by opening a TCP connection with the server indicated by the URL and issues an HTTP GET or POST request to which the server then replies.  As the client parses the server response it issues additional GET/POST requests to fetch embedded objects (images, css, scripts \etc.).   These additional requests may be to different servers from the original request (\emph{e.g.} when the object to be fetched is an advert or is hosted in a separate content-delivery network),  in which case the client opens a TCP connection to each new server in order to issue the requests.   Fetching of these objects may in turn trigger the fetching of further objects.   Note that asynchronous fetching of dynamic content using, \emph{e.g.} AJAX, can lead to a complex sequence of server requests and responses even after the page has been rendered by the browser.   Also, typically the TCP connections to the various servers are held open until the page is fully loaded so that they can be reused for later requests (request pipelining in this way is almost universally used by modern browsers).    It is this sequence of events while fetching the objects in a page that creates a packet timing ``signature'' which can be used to deanonymise the encrypted traffic and estimate the web page fetched with high probability.  For example, in \cite{feghhi161} web pages are correctly identified with $>90\%$ success rate.

\subsection{Related Work}
\label{sec:related}
\subsubsection{Attacks}
With regard to attacks, a fairly large body of literature now exists.  Some of the earliest work specifically focussed on attacks against encrypted web traffic appears to be that of Hintz \cite{hintz02}.   In this setup (i)  web page fetches occur sequentially with the start and end of each web page fetch known, and for each packet (ii) the client-side port number, (iii) the direction (incoming/outgoing) and (iv) the size is observed.   A web page signature is constructed consisting of the aggregate bytes received on each port (calculated by summing packet sizes), effectively corresponding to the number and size of each object within the web page.   

Subsequently,  Bissias \emph{et al} \cite{bissias06} considered an encrypted tunnel setup where (i) web page fetches occur sequentially with the start and end of each web page fetch known, and for each packet (ii) the size, (iii) the direction (incoming/outgoing) and (iv) the time (and so also the packet ordering)  is observed.    The sequence of packet inter-arrival times and packet sizes from a web page fetch is used to create a profile for each web page in a target set and the cross correlation between an observed traffic sequence and the stored profiles is then used as a measure of similarity.   

Most later work has adopted essentially the same model as \cite{bissias06}, making use of packet direction and size information and assuming that the packet stream has already been partitioned into individual web page fetches.  For example in \cite{wang14} the timing information is not considered in the feature set, hence the attack can be countered with defences such as BuFLO in \cite{dyer12} leading to a success rate of only $10\%$. In  \cite{liberatore06,herrmann09} Bayes  classifiers based on the direction and size of packets are considered while in \cite{panchenko11} an SVM classifier is proposed.  In \cite{lu10} classification based on direction and size of packets is studied using Levenshtein distance as the similarity metric, in \cite{miller14} using a Gaussian Bag-of-Words approach and in \cite{wang14} using $K$-NN classification.  In \cite{cai12} using a SVM approach a classification accuracy of over 80\% is reported for both SSH and Tor traffic and the  defences considered were generally found to be ineffective.   Similarly, \cite{dyer12} considers Bayes and SVM classifiers and finds that a range of proposed defences are ineffective.  In \cite{gong10} remote inference of packet sizes from queuing delay is studied.   

Recently, powerful attacks based on using packet timing information alone have been demonstrated \cite{feghhi161}.  Such attacks are immune to defences based on packet padding etc, do not require a priori knowledge of the start/end of a web fetch and have low false positive rates in open-world settings where the set of web sites being browsed is not restricted to a defined set.

\subsubsection{Defences}
With regard to defences, one fairly obvious approach to obfuscating the timing of information packets is to transmit packets at a constant rate, inserting dummy packets when an information packet is not available to send (when encrypted dummy packets are indistinguishable from information packets to an attacker), and buffering information packets when a transmit slot is not available.  This type of strategy is considered in detail by \cite{dyer12}, although to reduce the bandwidth overhead transmission only continues until either a page has loaded or a minimum time has elapsed.   It is concluded that this approach, referred to as BuFLO (Buffered Fixed-Length Obfuscator), involves excessive bandwidth overhead ($>$100\%) and high latency (a $\times$2-3 increase) in return for uncertain security (the duration of pages longer than the minimum time is still revealed).   Tamaraw \cite{cai14}  and CS\_BuFLO \cite{cai14b} both attempt to refine BuFLO.  Tamaraw ends transmission only at fixed multiples of a design parameter $L$ which allows for stronger security guarantees at the cost of increased overhead.  CS\_BuFLO adjusts the transmit rate based on traffic load.

In view of the high overheads associated with constant rate transmission, a number of alternative approaches have been considered.  In HTTPOS \cite{luo11} and the work of Wright et al \cite{wright09} packet timing is modified by injecting dummy HTTP requests and randomising the pipelining of the requests forming the fetch a web page.  Randomised pipelining is also used in Tor.  However, this defence has been shown to be of limited effectiveness at protecting against attacks in several subsequent evaluations \cite{cai12,wang14,juarez14}.    Another line of work is based on shaping traffic so that the transmitted trace is similar, in some appropriate sense, to that for a different web page or application.  In early work \cite{wright09} traffic morphing is considered whereby packet padding/splitting is used so that the distribution of packet sizes over a web fetch resembles that of a target web page.  However this was found to be ineffective against later attacks \cite{cai12,wang14} and packet padding also provides little defence against attacks based only on timing information such as that of \cite{feghhi161}.  In \cite{nithyanand14} anonymity sets are created by clustering the packet traces for web page fetches and morphing them to look like the centroid of their cluster.  A similar approach is considered by \cite{wang14,wang15}.  However, in addition to still entailing a relatively high bandwidth overhead ($>$60\%) these approaches also require maintaining a database of packet traces that may be both large and strongly dependent on the location of a client device (e.g. when moved from work to home, or between hot spots with different qualities of network connection).

On links carrying multiple web fetches interleaving of packets from these fetches can be expected to make it harder for an attacker to infer the web pages being browsed.  Intuitively, the inference task of the attacker becomes that of estimating the components of a mixture, which can be expected to increase in difficulty as the number of components (web fetches) increases.   The great advantage of this approach is that it avoids the costs associated with buffering and injecting dummy packets.   However, it relies on the availability of sufficiently many active fetches at any given time.  Further, while only a limited number of studies have been carried out that evaluate the protection provided by combining multiple web fetches, generally the results point towards this approach being of limited effectiveness against modern attacks, see for example \cite{cai12}.  
In summary,  it is difficult to quantify the degree of privacy provided at any given time when relying on interleaving of multiple fetches, and due to its opportunistic nature it is probably inherently impossible to provide up front guarantees of privacy to users using this type of approach.

\section{Achieving Privacy}
\label{sec:privacy}


{
In this section we discuss how traffic shaping can be used to ensure indistinguishability i.e that a given packet trace can be explained by several combinations of web fetches.   While this can be trivially achieved by sending packets at a constant rate and using buffering plus insertion of dummy packets as required, the overhead involved is excessive.  The challenge is to strike a balance between indistinguishability and the associated traffic shaping overheads of buffering delay and dummy packets.   

We make two main observations.  Firstly, traffic shaping can be adapted on the fly using what we refer to as a trace-based approach.  As alternative is offline optimisation of a traffic profile using a database of web fetch packet traces.  However, while this can lead to traffic profiles which use a lower number of dummy packets, it can be difficult to deal with previously unseen web pages.   Secondly, rather than applying traffic shaping to each web fetch individually we instead apply shaping to the aggregate traffic sharing a tunnel.    This small change has significant consequences.   When shaping is applied to web fetches individually then the overhead per fetch is fixed and thus the overall tunnel overhead of scales with the number of fetches e.g. if the overhead is 100\% for a fetch of 100 packets then for 100 fetches the overhead is 10,000 packets.   In contrast, when shaping is applied to the aggregate traffic then as the traffic load increases the overhead falls to zero since there are always useful packets available to send and so no need to insert dummy packets.

The rest of the section proceeds as follows.  In Section \ref{sec:Indistinguishability} we discuss indistinguishability in more detail.  Then in Section \ref{sec:traces} we introduce our online adaptive trace-based approach and in Section \ref{sec:over} the reduction in dummy packet overhead obtained by shaping of aggregated traffic.
}

\begin{figure}
\centering
\includegraphics[width=\columnwidth]{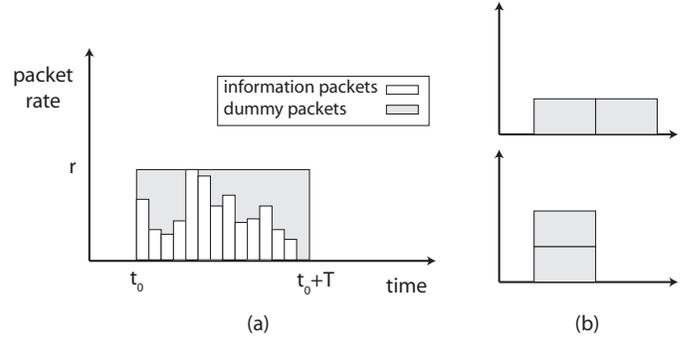}
\caption[Use of dummy packets to construct an uninformative pattern]{Illustrating use of dummy packets to force the number and timing of transmitted packets to conform to an uninformative pattern or ``trace''. When insufficient information packets are available, dummy packets are inserted as needed {as indicated in figure (a)}.  When too many information packets are available they are buffered until a transmission opportunity becomes available. {Additional traces can be activated when the duration of a web fetch exceeds that of a single trace, top figure in (b), or when the rate is greater than that of a single trace, bottom figure in (b). }}\label{fig:trace_schematic}
\end{figure}

\subsection{Indistinguishability}\label{sec:Indistinguishability}

As already noted, we are not concerned with concealing the fact that the user is browsing the web but rather with concealing the particular pages visited.   
Privacy is therefore embodied in the indistinguishability of sequences of web fetches given with regard to the packet sequences which they generate.    


It is important to note that we do not insist that fetches of single web pages individually be indistinguishable.    Rather, we require that the observed packet sequence corresponding to the fetch of each web page can be reasonably explained by other combinations of web fetches.   So, for example, the packet sequence generated by the fetch of a large web page could equally have been generated by sufficiently many sequences of fetches of small web pages.   Fetches of single web pages must still be indistinguishable from one another when they cannot plausibly be generated by combinations of other page fetches.  That is, we require fetches of sufficiently small web pages to be atomic in the sense that they are indistinguishable from one another.


\subsection{Defence By Use of ``Traces''}\label{sec:traces}
\label{sec:traces}
The basic problem with the packet sequence generated by a web fetch is that it contains many packets (often several hundred, frequently more) and so amounts to an observation of a point in high dimensional space.  It is increasingly recognised that high dimensional data carries a major risk of de-anonymisation since data points tend to be sparsely distributed in high dimensions (each point can be individually distinguished).  Our approach is therefore to reduce the dimension of the observed packet sequence data.  We do this by gathering packets into larger groups and transmitting these groups in a uniform way.   

{
The approach is illustrated in Figure \ref{fig:trace_schematic}a.   Here, a web fetch starts at time $t_0$.   Dummy packets (indicated in grey) are inserted so that the observed sequence of packets, which is a combination of dummy plus user packets, has a uniform profile and duration.   We refer to this uniform profile as a ``trace''.   Formally, a trace consists of a sequence of $\{(\tau_i,p_i),i=1,2,\dots,n\}$, with $\tau_j>\tau_i$ for $i>j$.    This says that at time $\tau_i$ after the start of the trace {$p_i$ packets are to be transmitted.  The duration of the trace is $\tau_n-\tau_1$ and the number of packets contained in the trace is $P=\sum_{i=1}^n p_i$.   In the rest of this paper we will use traces with uniform $p_i=p$, $i=1,\dots,n$ and evenly spaced $\tau_i=(i-1)T/n$, where $T$ is the trace duration.}}   The rate $p$ and duration $T$ of the trace are design parameters, that we will return to later\footnote{We might also use a trace where the $p_i$ are not constant and the $\tau_i$ are not evenly spaced, and also allow the trace used to be drawn from a pre-defined set of traces.}.   In this example the web fetch completes by time $t_0+T$ and so fits within a single trace.  However, if the duration exceeded $T$ then a second trace would be started so as to mask the additional user packets, see upper schematic in Figure \ref{fig:trace_schematic}b.   Similarly, if the rate at which user packets arrive much exceeds the rate of a trace then two traces can be started in parallel, see lower schematic in Figure \ref{fig:trace_schematic}b.   With this approach the observed packet sequence now consists of a sequence of traces.

\begin{figure}
\centering
\includegraphics[width=0.65\columnwidth]{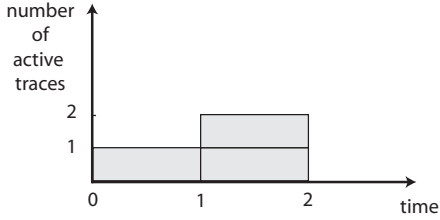}
\caption{Example illustrating an observed sequence of traces.}\label{fig:traceexample}
\end{figure}

We note that \cite{nithyanand14,wang14,wang15} also consider indistinguishability, doing this through the creation of so-called anonymity sets by clustering the packet traces for web page fetches.   However the approach considered here differs in a number of important and fundamental respects.  One is that the sequence of traces is constructed on the fly as packets are transmitted.  There is no need to pre-cluster web fetches and construct a trace in advance for each cluster, therefore also no need to maintain a history of web fetches (needed to perform pre-clustering) nor to explicitly adapt to location and time (which may change the link characteristics and so the packet traces and their clusters).   A second key difference is that packets from multiple web fetches/flows can be packed together into the same trace, thereby allowing us to opportunistically reduce the number of dummy packets transmitted during busy times when many flows are simultaneously active while maintaining well-defined privacy properties.   We discuss these in more detail below, where we show that they allow a tunnel based on our trace-based approach to be capacity achieving i.e. as the traffic load increases the number of dummy packets sent falls to zero.

Given an observed sequence of traces, the indistinguishability question becomes: how many different sequences of web fetches can reasonably generate the observed trace sequence.    {The analysis is basically that of a packing problem, namely given an observed sequence of traces what combinations of web fetches can be packed into that sequence.    Since all web pages may not be equally likely to be fetched, if we know the probability of fetching each page we can use this to calculate the probability of each combinations of web fetches.   For plausible deniability we need there to be several combinations with reasonably high probability.   There are two main difficulties with actually carrying out such calculations however.  One is that the probability of fetching a page is usually unknown and user dependent, although it might be estimated from historical data.  The other is that the pages fetched may be correlated, and again would have to be estimated from historical data.   With these caveats in mind, we give a simple example calculation for illustrative purposes. } Consider the sequence of traces shown in Figure \ref{fig:traceexample}.   Time is slotted, with each slot corresponding to the duration of a trace.  Traces start at the beginning of a slot, and suppose a maximum of $n$ traces are possible in each slot (e.g. limited by the link capacity).   Suppose we have a set $W$ of web pages of interest.  Of these, fetches for web pages $W_1\subset W$  can be covered by one trace, $W_2\subset W$ by two traces and so on.   Let $q_w$ be the probability that web page $w\in W$ is fetched and assume, for simplicity, that pages are fetched independently.  Let $X_w$ be a random variable which takes value 1 when web page $w$ is fetched and $0$ otherwise.  Let $T\subset \{0,n\}^k$ denote the observed sequence of traces of length $k$ slots, with $T=(1,2,0,\dots,0)$ in Figure \ref{fig:traceexample} .   Then,
\begin{align}
\textsc{P}(X_w=1 | T, w\in W_1) &= 1-\prod_{j=1}^k\left(1-\frac{q_w}{\sum_{i\in W_1} q_i}\right)^{T_j}
\end{align}
and similarly we can calculate $\textsc{P}(X_w=1 | T, w\in W_2)$ etc.   In the example in Figure \ref{fig:traceexample} suppose $w_w=1/|W|$.  Then for a web page $w\in W_1$ that fits inside a single trace the probability that it was fetched is $1-\left(1-\frac{1}{|W_1|}\right)\left(1-\frac{1}{|W_1|}\right)^2$.  For example, when $|W_1|=100$ this probability is $0.03$ and when $|W_1|=1000$ the probability is $0.003$ and it can be seen that provided the cardinality of $W_1$ is reasonably large a user can with high plausibility deny that they fetched a given web page.   


Note that in the above example an attacker can infer that the user did not fetch any web pages that cannot fit inside three traces or fewer.   However, given the extremely large number of possible web pages and the fact that most of these pages are not fetched by a given user we argue that the absence of a fetch for a web site is much less informative to an attacker than the presence of a fetch.   

We note also that the option exists to inject dummy traces into the tunnel link to add a further level of deniability.   Using such an approach the potential exists to formulate indistinguishability as a form of differential privacy.  Namely, such that the addition of the fetch of any individual web page has limited impact on the sequence of traces observed on the link.   However, we leave this as future work and do not pursue it further here. 


\subsection{Impact of Rate and Duration of a Trace on Overhead}\label{sec:over}

\begin{figure}
\centering
\begin{subfigure}[]{0.46\columnwidth}
\includegraphics[width=0.9\columnwidth]{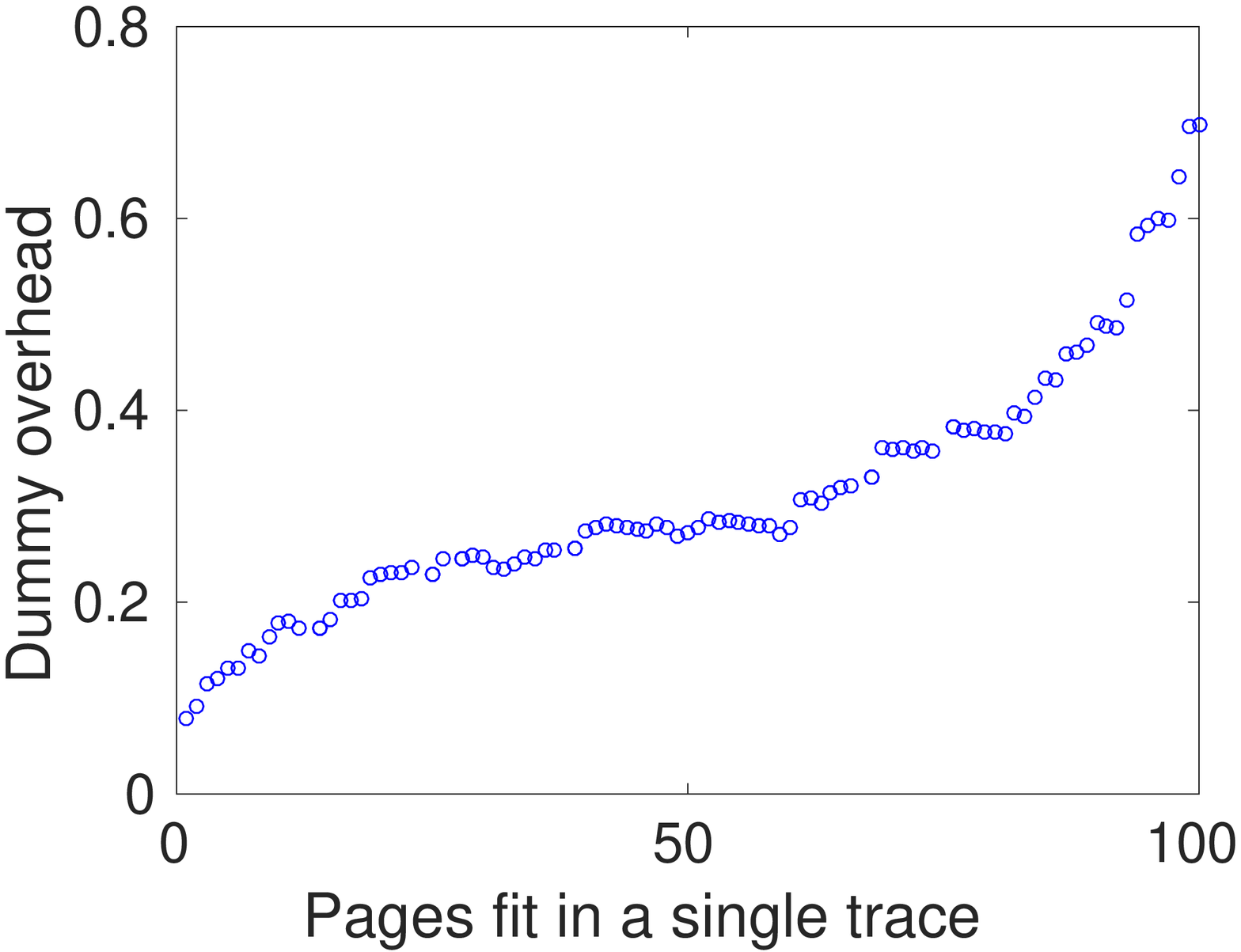}
	\caption{}
	\label{fig:dummy_vs_fitted_pf}
\end{subfigure}
\begin{subfigure}[]{0.46\columnwidth}
\includegraphics[width=0.9\columnwidth]{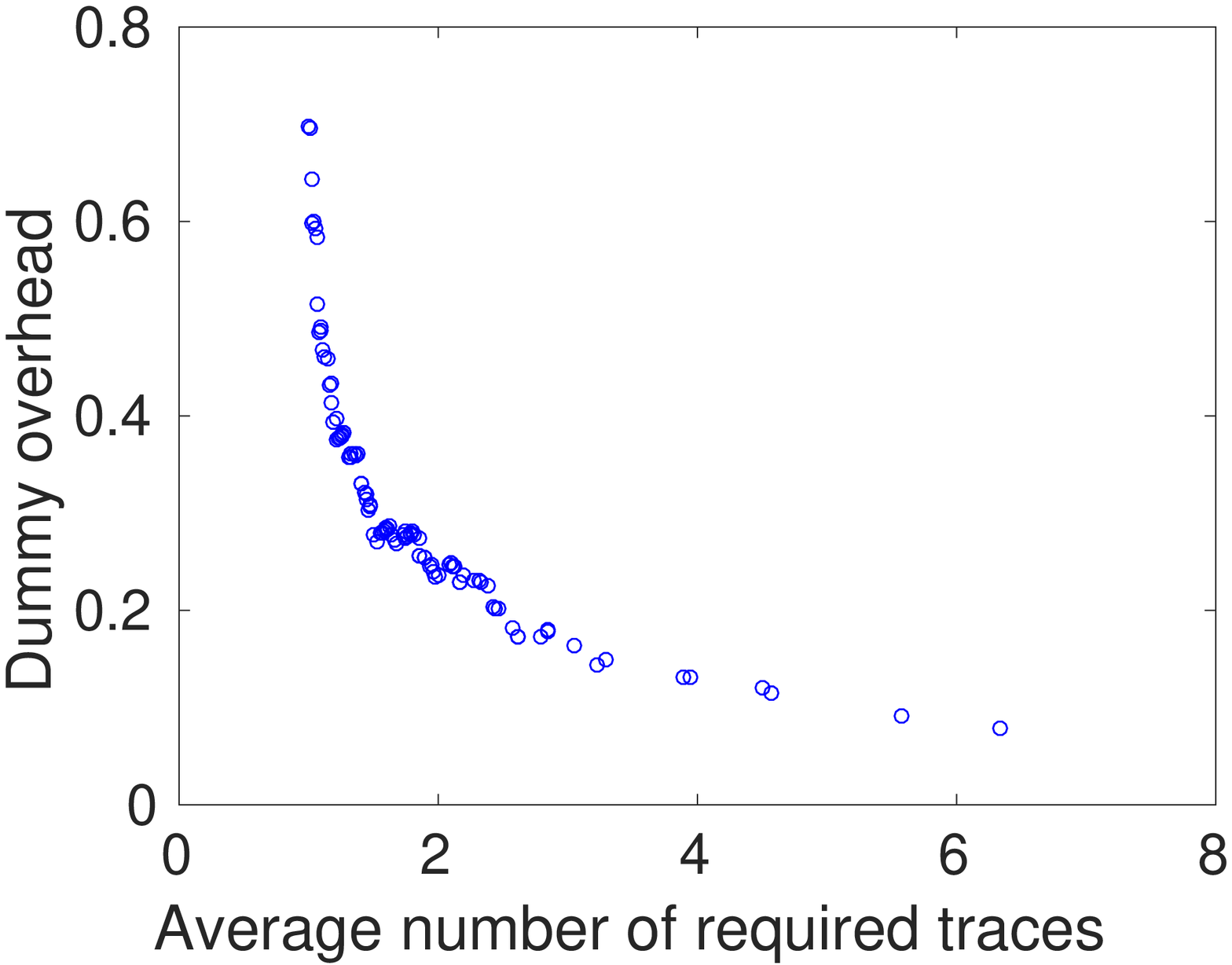}
	\caption{}
	\label{fig:dummy_vs_fitted_at}
\end{subfigure}
\caption[Overhead of dummy packets vs trace duration for 100 web pages]{Illustrating the overhead of dummy packets vs trace duration. Data is for Alexa's top 100 web pages. }\label{fig:dummy_vs_fitted}
\end{figure}

By buffering user packets at the tunnel ingress until enough are available to fill a trace we can avoid the overhead of sending dummy packets.  However, this comes at the cost of increased delay, perhaps much increased delay.   However, in modern networks quality of service is frequently achieved by over-provisioning of network capacity and as a result bandwidth is often relatively plentiful.   Conversely, users are known to be sensitive to delay when using online services.  For example, Amazon estimates that a 100ms increase in delay reduces its revenue by 1\% \cite{amazon}, Google measured a 0.74\% drop in web searches when delay was artificially increased by 400ms \cite{google} while Bing saw a 1.2\% reduction in per-user revenue when the service delay was increased by 500ms \cite{bing}.  Hence, in order to enhance resistance to traffic analysis attacks it is preferable to sacrifice some bandwidth by sending dummy packets rather than incurring excessive delay.

To gain some insight into the overhead of dummy packets associated with different durations of trace we fetched the home pages from the Alexa top 100 finance and health web sites in Ireland.   Figure \ref{fig:dummy_vs_fitted_pf}  plots the average fraction of packets in a trace which are dummy packets vs the number of web sites that fit inside the trace.   A trace of duration 10890ms is needed to cover all 100 web sites and a trace of duration 4947ms to cover 50 web sites.   As might be expected, the fraction of dummy packets increases as the duration of the trace is increased to include more web pages.    When all 100 web pages fit inside a single trace the overhead is around 70\% but this falls to around 30\% for a trace that covers 50 web pages.  Figure \ref{fig:dummy_vs_fitted_at} plots the fraction of dummy packets vs the maximum number of consecutive traces used to cover all 100 web pages (so the value for one trace corresponds to the data in Figure \ref{fig:dummy_vs_fitted_pf}).  It can be seen that as the number of traces used increases the overhead falls.  However, use of smaller traces reduces the level of indistinguisability provided and so a trade-off exists between privacy and dummy packet overhead.   

\begin{figure}
\centering
\includegraphics[clip,width=0.65\columnwidth]{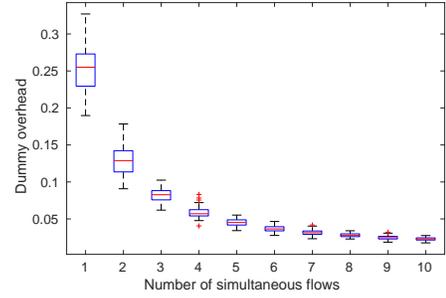}
\caption{Illustrating the overhead of dummy packets vs the number of simultaneous flows. Each flow is a sequence of fetched pages randomly selected from Alexa's top 100 web pages. Uniform traces of duration 5000ms and rate of 0.2 pkt/s are activated every 5 seconds, number of which corresponds to the incoming traffic load. Each box plot represent 100 measurements.}\label{fig:dummy_vs_flows}
\end{figure}

Importantly, the trade-off between privacy and dummy packet overhead also depends on the number of flows simultaneously active, since packets from multiple flows can be packed together into the same trace thereby reducing the number of dummy packets needed without compromising privacy.  This is illustrated in Figure \ref{fig:dummy_vs_flows}, which plots the fraction of dummy packets against the number of simultaneous flows (here a flow is a sequence of fetched pages randomly selected from Alexa's top 100 web pages).  It can be seen that as the number of flows increases, the level of dummy traffic falls towards zero.   We will show analytically in the next section that the trace-based method is, in fact, capacity achieving as the traffic load increases, in contrast to previous traffic shaping approaches that come with privacy guarantees.   


Since it is hard to analytically quantify the trade-off between privacy, dummy packet overhead and delay (the delay aspect is especially difficult to analyse mathematically), we will revisit this trade-off shortly using experimental data.

\section{Throughput and Dummy Packet Overhead}

A privacy-enhanced tunnel using the trace approach starts up traces\footnote{There is no need to manage the termination of traces, each trace simply stops once its alloted time has expired.} dynamically so as to manage user delay and dummy packet overhead while ensuring privacy.     In this section we consider policies for start-up of traces and analyse how the dummy packet overhead changes with the traffic load (analysis of delay is left to the experimental section below).  We develop a scheduling policy that has the following important properties:
\begin{enumerate}
 \item \emph{Adaptive}.  The number of dummy packets transmitted adapts with the traffic load.  When the traffic load is light the scheduling policy inserts dummy packets as needed to fill out the active packet trace(s) and thus maintain privacy.
 \item \emph{Capacity Achieving}. The number of dummy packets falls to zero as the traffic load increases i.e. user packets are able to make full use of the available network throughput capacity as traffic load increases.   
\end{enumerate}

Our analysis is made complicated by (i) the non-preemptive nature of traces, namely once a trace has started it will continue until its alloted time has expired, and (ii) a new trace can start at any time and so when multiple traces are active their start (and so end) times need not be not aligned i.e. the traces may only partially overlap.  We therefore begin by relaxing (ii) and requiring the synchronised start/stop of traces, and later relax this requirement.

\subsection{Network Setup and Notation}
\begin{figure}
\centering
\includegraphics[width=0.95\columnwidth]{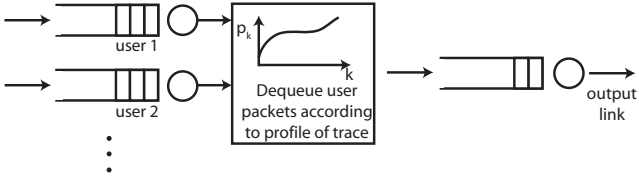}
\caption{Schematic illustrating the VPN gateway setup considered.}\label{fig:setup}
\end{figure}

The setup considered is illustrated in Fig \ref{fig:setup}.  Let $\U=\{1,2,\cdots,n_u\}$ denote the set of users, $\vi{a}{u}_{k}\in\mathbb{N}$ denote the number of packet arrivals for user $u$ at time $k$ and $\vi{\bar{a}}{u}:=\lim_{K\rightarrow\infty}\frac{1}{K}\sum_{k=1}^K \vi{a}{u}_k$ the average number of packet arrivals.  Packets for each user $u$ are held in separate queues, with the queue occupancy for user $u$ at time $k$ being denoted by $\vi{q}{u}_k$.  Letting $\vi{w}{u}_{k}\in\mathbb{N}$ be the number of packets dequeued at time $k$ then $\vi{q}{u}_{k+1}=[\vi{q}{u}_k+\vi{a}{u}_k-\vi{w}{u}_k]^+$ where the notation $[\cdot]^+$ has the usual meaning that $[x]^+=x$ when $x\ge0$ and $[x]^+=0$ otherwise.

User packets are dequeued according to a predefined \emph{trace}.  Multiple traces may be active simultaneously in order to increase the rate at which user packets are dequeued.  A trace $f$ is a sequence ${p}_j\in\mathbb{N}$, $j=1,2,\dots,{n}$  where ${p}_j$ is the number of packets to be transmitted at time $j$ and ${n}$ is the duration of the trace.  Importantly, when no user packets are available to send at time $j$, then dummy packets will be transmitted to ensure that ${p}_j$ packets are always transmitted.  Let ${P}=\sum_{j=1}^{{n}} {p}_j$ denote the total number of packets in a trace.   

Packets from a trace are queued at the output link before transmission over the tunnel, see Fig \ref{fig:setup}.  As is usual in Internet links we leave rate control to the end hosts.  If the aggregate send rate persistently exceeds the output link capacity then the queue at this link will eventually overflow and cause packet loss which end hosts can then use as a congestion indicator, e.g. by using TCP congestion control.

New traces are started as needed to service user packet arrivals, and multiple traces may be active at the same time so as to increase the sending rate if needed.   Indexing these active traces by $1,2,\cdots$ let $\T$ denote the set of trace indices, $\tau_t$ the start time of trace $t\in\T$.  At time $k$ the set of active traces is $\T_k:=\{t\in\T: k\in\{\tau_t,\dots,\tau_t+{n}\}\}$ and the number of packets that are transmitted at time $k$ is $\sum_{t\in\T_k} {p}_{k-\tau_t}$, therefore $\sum_{u\in\U} \vi{w}{u}_k \le \sum_{t\in\T_k} p_{k-\tau_t}$.   The average rate of packet transmissions can be no more than the capacity $c$ in packets/slot of the outgoing link i.e. $\lim_{K\rightarrow\infty} \frac{1}{K}\sum_{k=1}^K \sum_{t\in\T_k} {p}_{k-\tau_t} < c$.  

Since $\sum_{t\in\T_k} {p}_{k-\tau_t}$ packets are sent then if there are insufficient user packets available to send at time $k$ we need to send $\sum_{t\in\T_k} {p}_{k-\tau_t}-\sum_{u\in\U} \vi{w}{u}_k$ dummy packets.  Our task is to activate traces and thereby adjust the number of user packets transmitted $\vi{w}{u}_{k}$ so as to minimise the number of dummy packets sent while still servicing all of the user packets in a timely manner.


\subsection{Synchronised Scheduler}

To proceed we first assume that active traces start/stop in a synchronised fashion.   Since traces started in slot $1$ finish at slot $n$, traces started at slot $n+1$ finish at slot $2n$ and so on then by restricting ourselves to starting traces at times $1,n+1,2n+1,\dots$ we can ensure synchronised start/stop of traces and avoid partial overlapping of traces.   We will relax this restriction later, but the absence of overlapping traces simplifies the initial analysis and assists with gaining insight into the design issues to be considered.

\begin{figure}
\centering
\includegraphics[width=0.4\columnwidth]{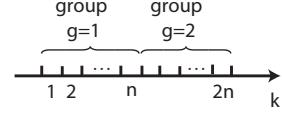}
\caption[Gathering slots into groups of $n$ consecutive slots]{Gathering slots into groups of $n$ consecutive slots.  In the synchronised case a trace can only be activated in the first slot of a group, and will have completed by the last slot in that same group since traces are of duration $n$ slots.}\label{fig:sync_ex}
\end{figure}

Formally, partition time slots $k=1,2,\dots$ into groups of $n$ slots, see Fig \ref{fig:sync_ex}.  Denote the first group of slots by $\G_1:=\{1,\dots,n\}$, the second by $\G_2:=\{n+1,\dots,2n\}$ and so on.  A new trace can only be activated at the start of a group of slots.    
Letting $y_g$ denote the number of traces active in group $g$, consider the following trace scheduler:

 \begin{align}
&y_g\in\  \arg\min_{y\in [0,y_{max}]}(\gamma-Q_gP  )x \label{eq:delta1}\\
&Q_{g+1} = [Q_g+b_{g} - {y}_{g}P]^+\label{eq:mult1}
\end{align}
where $\gamma >0$ is a design parameter, ${b}_g:=\sum_{u\in\U}\sum_{k\in \G_g}\vi{a}{u}_k$ is the number of user packet arrivals in group of slots $g$ and recall $P$ is the total number of packets in a trace.  

The following lemma establishes that the scheduler maintains a bounded queue for all packet arrival rates up to $y_{max}P$.
\begin{lemma}[Queue Boundedness]\label{lem:bounded}
Consider update (\ref{eq:delta1})-(\ref{eq:mult1}).  Suppose that $b_g\le b_{max}$, $\bar{b}=E[{b}_g]$ and $y_{max}P>\bar{b}+\epsilon$, $\epsilon>0$. Then as $g\rightarrow \infty$ we have $|\gamma-Q_gP|\le2\delta$ where $\delta:={b}_{max}+y_{max}P$.
\end{lemma}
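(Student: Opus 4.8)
The plan is to read the update (\ref{eq:delta1})--(\ref{eq:mult1}) as a bang-bang controller that regulates the queue about the set point defined by $Q_gP=\gamma$, and then to show that this controller confines the queue to a band of half-width $2\delta$ about that set point. First I would dispose of the optimisation in (\ref{eq:delta1}): since the objective $(\gamma-Q_gP)y$ is linear in $y$ on $[0,y_{max}]$, its minimiser is $y_g=0$ when $\gamma-Q_gP>0$ (queue below the set point) and $y_g=y_{max}$ when $\gamma-Q_gP<0$ (queue above it), with any value admissible at equality. Thus the scheduler withholds all traces when $Q_gP<\gamma$ and activates the maximum number $y_{max}$ when $Q_gP>\gamma$, i.e.\ it always acts so as to drive $Q_g$ towards the set point.

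Next I would record the one-step increment bounds implied by (\ref{eq:mult1}). Because $0\le b_g\le b_{max}$ and $0\le y_g\le y_{max}$, in any group of slots the queue can rise by at most $b_{max}$ (when no traces are active) and fall by at most $y_{max}P$ (when $y_{max}$ traces fire and no packets arrive), the truncation $[\cdot]^+$ only helping the lower bound. These are precisely the two terms making up $\delta=b_{max}+y_{max}P$, so a single transition moves the queue by at most one increment of size $\delta$. The lower side of the band is then immediate: whenever $Q_gP<\gamma$ we have $y_g=0$, so the queue can only grow back towards the set point, and the largest undershoot it can have suffered is a single maximal draining step, giving a lower confinement of order $\delta$.

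The content of the lemma lies in the upper side, and this is the step I expect to be the main obstacle. A purely pathwise invariance argument fails here, because the hypothesis controls only the \emph{mean} arrival rate, $y_{max}P>\bar b+\epsilon$, and not the peak: if $b_{max}>y_{max}P$ then a run of large arrivals can make $Q_gP$ increase for several consecutive groups even while the controller serves at full rate, so no single-step invariant band is forward-invariant. I would therefore argue by drift rather than by invariance. Taking $Q_gP-\gamma$ (or its square) as a Lyapunov function, the bang-bang rule gives a strictly negative expected drift, of magnitude at least $\epsilon$, whenever $Q_gP>\gamma$, since above the set point the service rate exceeds the mean arrival rate $\bar b$ by $\epsilon$; below the set point the drift is positive. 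Hence the set point is attracting and $Q_gP$ returns to within one increment of $\gamma$ infinitely often. Combining this negative drift above $\gamma$ with the one-step bound $\delta$ yields the stated confinement: for all large $g$ the process can sit at most one drift-controlled excursion of order $\delta$ above the set point and be carried at most one further maximal increment $\delta$ by an arrival burst before the controller pulls it back. Together with the lower confinement of the previous paragraph this gives $|\gamma-Q_gP|\le2\delta$, the factor two accounting for ``one threshold crossing plus one maximal increment'' on each side of the set point.
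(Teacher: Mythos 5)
Your proposal follows essentially the same route as the paper's proof: it reads (\ref{eq:delta1}) as a bang-bang rule ($y_g=0$ when $\gamma-Q_gP>0$, $y_g=y_{max}$ when $\gamma-Q_gP<0$), establishes a negative expected drift of magnitude at least $\epsilon P$ toward the set point whenever $|\gamma-Q_gP|>\delta$, and invokes the one-step increment bound $\delta=b_{max}+y_{max}P$ to conclude that, once the drift has brought the process inside the $\delta$-ball, a single step can carry it no further than the $2\delta$-ball. If anything you are more candid than the paper about the fact that, because only the \emph{mean} arrival rate is controlled, the confinement holds only in a drift/expectation sense rather than pathwise -- a point the paper's proof quietly elides by stating its conclusion in terms of $\gamma-E[Q_{g+1}\mid Q_g]P$.
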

\begin{proof} 
Case(i) $\gamma-Q_gP>\delta$.  Then $y_g=0$ and $Q_{g+1} = [Q_g+b_{g}]^+=Q_g+b_g$ since $Q_g, b_g>0$.   Hence, $\gamma-Q_{g+1}P=\gamma-Q_gP-b_gP$.   Taking expectations conditioned on $Q_g$, $\gamma-E[Q_{g+1}|Q_g]P=\gamma-Q_gP-\bar{b}P$

Case (ii) $\gamma-Q_gP<-\delta$.  Then $y_g=y_{max}$ and $Q_{g+1} = [Q_g+b_{g}-y_{max}P]^+$.  Since $\gamma-Q_gP<-\delta$ then $Q_g>(\gamma+\delta)/P>y_{max}$.  Hence, $Q_{g+1}=Q_g+b_{g}-y_{max}P$ and $\gamma-Q_{g+1}P=\gamma-Q_gP-b_{g}P+y_{max}P^2$.  Taking expectations, $\gamma-E[Q_{g+1}|Q_g]P=\gamma-Q_gP-\bar{b}P+y_{max}P^2\ge\gamma-Q_gP+\epsilon P$.

Case (iii) $|\gamma-Q_gP|<\delta$.  Since $Q_{g+1}\ge 0$ then $\gamma-Q_{g+1}P\le \gamma$.  For the left-hand bound, we use the fact that $|Q_{g+1} -Q_g|\le b_{max}+y_{max}P$ and so $\gamma-Q_{g+1}P \ge -\delta-(b_{max}+y_{max}P)$.  

Hence, when $|\gamma-Q_gP|>\delta$ then $\gamma-E[Q_{g+1}|Q_g]P$ is strictly decreasing and when $|\gamma-Q_gP|\le\delta$ then it remains in the ball $|\gamma-E[Q_{g+1}|Q_gP|\le 2\delta$.    
\end{proof}

We have the following immediate consequence.  
\begin{corollary}[Capacity Achieving]
Suppose $y_{max}>\bar{b}/P$.  Then the mean service rate provided by the traces is sufficient to serve the mean rate of user packet arrivals.   That is, the trace-based tunnel with scheduler (\ref{eq:delta1})-(\ref{eq:mult1}) is capacity achieving.
\end{corollary}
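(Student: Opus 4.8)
The plan is to obtain the corollary directly from Lemma~\ref{lem:bounded} by arguing that a bounded queue forces the long-run service capacity supplied by the traces to match the mean arrival rate. First I would check that the corollary's hypothesis implies the lemma's: rearranging $y_{max}>\bar{b}/P$ gives $y_{max}P>\bar{b}$, so I may fix any $\epsilon$ with $0<\epsilon<y_{max}P-\bar{b}$, whence $y_{max}P>\bar{b}+\epsilon$ holds as required. Lemma~\ref{lem:bounded} then applies and yields $|\gamma-Q_gP|\le 2\delta$ as $g\to\infty$, which bounds the queue occupancy from above, $Q_g\le(\gamma+2\delta)/P=:Q_{max}$ for all sufficiently large $g$, while $Q_g\ge 0$ holds trivially since it is a queue length.

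Next I would convert this boundedness into a throughput statement via a telescoping drift argument on the recursion (\ref{eq:mult1}). Dropping the projection $[\cdot]^+$ only decreases the right-hand side, so $Q_{g+1}\ge Q_g+b_g-y_gP$, i.e. $y_gP\ge Q_g-Q_{g+1}+b_g$. Summing over $g=1,\dots,G$ the queue terms telescope, leaving $\sum_{g=1}^G y_gP\ge Q_1-Q_{G+1}+\sum_{g=1}^G b_g$. Dividing by $G$ and letting $G\to\infty$, the boundary term $Q_{G+1}/G\to 0$ because $Q_{G+1}\le Q_{max}$, while $\frac{1}{G}\sum_{g=1}^G b_g\to\bar{b}$ by the definition of the mean arrival rate (using $\bar{b}=E[b_g]$ together with an ergodicity/law-of-large-numbers assumption on the arrivals). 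This delivers $\liminf_{G\to\infty}\frac{1}{G}\sum_{g=1}^G y_gP\ge\bar{b}$, so the mean transmission capacity made available by the active traces is at least the mean rate of user packet arrivals, which is exactly the capacity-achieving claim.

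The main obstacle I anticipate is not the algebra but pinning down what ``mean service rate'' and the associated limits mean. In particular I would need to decide whether the relevant quantity is the average service capacity $\frac{1}{G}\sum_g y_gP$ or the average number of user packets actually dequeued (the two differ precisely by the dummy packets), and to state the arrival-process assumptions under which $\frac{1}{G}\sum_g b_g\to\bar{b}$ holds, e.g. stationarity and ergodicity, and whether convergence is almost sure or in expectation. The boundedness supplied by Lemma~\ref{lem:bounded} is the essential ingredient that kills the boundary term $Q_{G+1}/G$; without it the telescoping inequality would be vacuous. I would also remark that, since the backlog cannot grow without bound, no persistent queue accumulates, so in the limit the transmission opportunities in excess of $\bar{b}$ are the only ones filled by dummy packets, thereby reconnecting this stability statement to the vanishing-dummy-overhead reading of ``capacity achieving'' used earlier in the paper.
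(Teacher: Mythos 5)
Your proposal is correct and follows essentially the same route as the paper: drop the projection in (\ref{eq:mult1}) to get $Q_{g+1}\ge Q_g+b_g-y_gP$, telescope, divide by the number of groups, and use the boundedness of $Q_g$ from Lemma~\ref{lem:bounded} to kill the boundary term. Your additional remarks on verifying the lemma's $\epsilon$-hypothesis and on the ergodicity needed for $\frac{1}{G}\sum_g b_g\to\bar{b}$ are points the paper leaves implicit, but they do not change the argument.
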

\begin{proof}
Observe from (\ref{eq:mult1}) that,
\begin{align}
&{Q}_{g+1} \ge {Q}_g+b_g - {y}_g P
\end{align}
and so when $Q_1=0$ it follows that
\begin{align}
\frac{1}{g} \sum_{i=1}^{g-1} (b_i - {y}_iP)=\bar{b}_g-y^\diamond_g P
 \le Q_g/g
\end{align}
where $\bar{b}_g:=\frac{1}{g} \sum_{i=1}^{g-1}{b}_g$ and $y^\diamond_g:=\frac{1}{g} \sum_{i=1}^{g-1}{y}_g$.  Since, by Lemma \ref {lem:bounded}, $Q_{g}$ is bounded then $\frac{Q_{g}}{g}\rightarrow 0$ as $g\rightarrow\infty$ and therefore $\bar{b}_g \le {y}^\diamond_gP$ as $g\rightarrow\infty$.   That is, the mean service rate provided by the traces is sufficient to serve the mean rate of user packet arrivals.   This holds for every $\epsilon>0$ such that $y_{max}P>\bar{b}+\epsilon$.
\end{proof}

\subsection{Discussion}
The foregoing analysis establishes that the trace-based tunnel approach is capacity achieving.  Intuitively, as the traffic load increases the probability increases that a user packet is available in the queue when a transmission slot arises and so the number of dummy packets decreases, eventually falling to zero.

Scheduler (\ref{eq:delta1})-(\ref{eq:mult1}) is easy to analyse but rather harsh in that it selects either $y_{max}$ or $0$ active traces in each group of slots.  This can be alleviated by modifying the scheduler as follows:
 \begin{align}
&x_g\in\  \arg\min_{x\in [-1,1]}(\gamma-Q_gP  )x \label{eq:delta2}\\
&{y}_{g+1}=[{y}_g+ x_g]^{[0,y_{max}]}\\
&Q_{g+1} = [Q_g+b_{g+1} - {y}_{g+1}P]^+\label{eq:mult2}
\end{align}
where $[x]^{[0,y_{max}]}=x$ when $x\in[0,y_{max}]$ and equals 0 when $x<0$ and $y_{max}$ when $x>y_{max}$.   This scheduler is intuitive, although the use of incremental updates to $y_g$ is somewhat unusual.  Namely, queue $Q_g$ measures the accumulated mismatch between the arrival of user packets and service of packets.  When this mismatch grows too large and exceeds $\gamma/P$ then we increase the number of active traces, when the mismatch falls to less than $\gamma/P$ we decrease the number of active traces.   

%
%



We can also relax the synchronisation assumption made in the previous section by allowing new traces to start in any slot.  This means that traces can, for example, now partially overlap.   The payoff is that the potential exists to exploit this extra freedom to achieve improved performance, especially improved delay performance as a result of being able to start new traces in a more timely way.

As our baseline unsynchronised scheduler we use the following update,
 \begin{align}
&x_k\in\  \arg\min_{x\in [-1,1]}(\gamma-Q_kP  )x \\
&{z}_{k+1}=[{z}_k+  x_k]^{[0,y_{max}]}\\
&\hat{Q}_{k+1} = [\hat{Q}_k+\alpha(\sum_{u\in\U}\vi{{a}}{u}_k - \sum_{t\in\T_k} p_{k-\tau_t}) ]^+\label{eq:mult3}
\end{align}
where a new trace is started in slot $k$ (added to $\T_k$) when $z_k$ increases or when a trace completes and the number of active traces $|\T_k|$ falls below $z_k$.

\section{Implementing A Fully Functional Protoype}

We built a prototype VPN implementation of the scheduler in (\ref{eq:delta2})-(\ref{eq:mult2}) and also of its unsynchronised variant. The entire project including codes, scripts and datasets for all measurements in this paper is available at \cite{implementation}.   {Many of the evaluation studies of proposed traffic analysis defences in the literature use a simulation approach whereby previously collected packet traces are replayed.  While easy to implement and fast to run, this replay approach ignores a number of important effects when traffic shaping is used.  Namely, the buffering associated with traffic shaping affects the behaviour of the web fetch process, e.g. delay of a request has knock-on effects on the timing of the response which may in turn affect subsequent requests, and such interactions are not captured by simply replaying a recorded packet sequence.  Further, traffic shaping can interact with TCP congestion control in unexpected ways that impact on timing and performance.  Translating performance data, not only on latency and dummy packet overhead but also on resistence to attack, derived from replay-based evaluations to real-world usage can therefore be far from straightforward.  While developing a full implementation involves significant extra time and effort, we argue that it allows more accurate prediction of real-world performance.

In this section we begin by describing the hardware and software setup used in our tests and outlining the implementation approach used.  We then discuss how the flexibility in the scheduler design noted in the previous section can be exploited to improve latency and how we carry out selection of trace parameters (duration and rate).   We finish by presenting experimental measurements for both UDP and TCP traffic.   These measurements allow us to evaluate the trade-off between the delay and dummy packet overheads and how this can be controlled via the choice of scheduler parameters. 
}

\subsection{Hardware Setup}
\begin{figure}
\centering
\includegraphics[width=0.8\columnwidth]{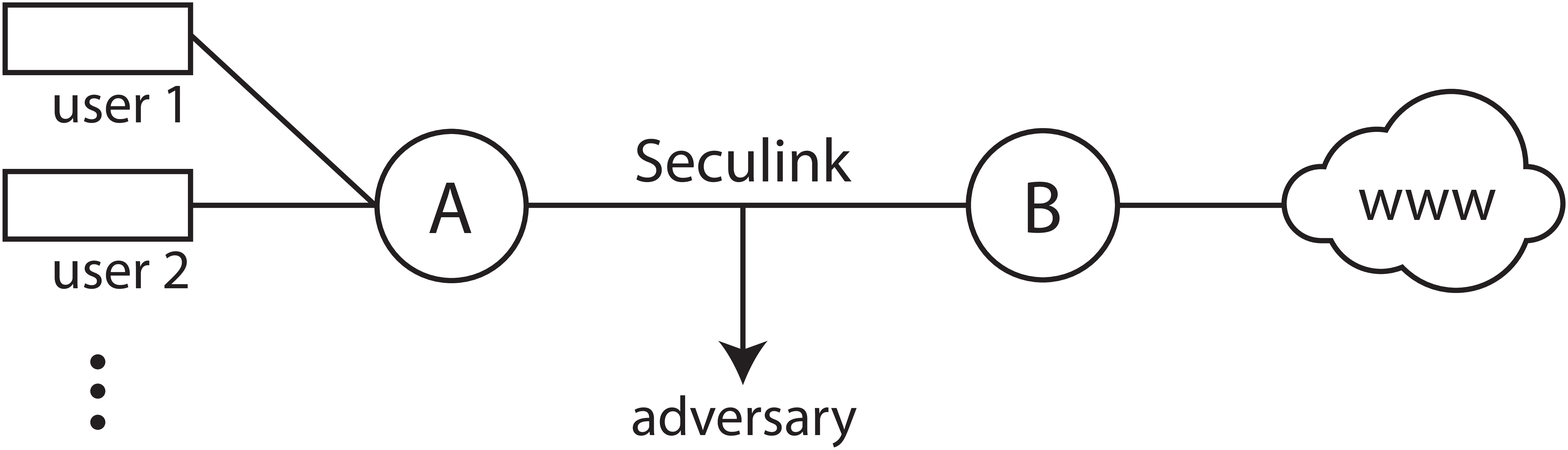}
\caption{Schematic illustrating the tunnel setup considered.}\label{fig:tunnel}
\end{figure}

We begin by describing the hardware setup used.   The network topology is shown schematically in Figure \ref{fig:tunnel}.	 Clients are connected to the VPN gateway A using an IPSec protected link. The link between nodes A and B (labelled Seculink) is the encrypted privacy-enhanced VPN tunnel.  This is a public link, \textit{i.e.} it is exposed to sniffing from adversaries.  Seculink gateway A is a commodity server with an Intel(R) Core 2 Quad @2.66GHz CPU and 4GB RAM running Ubuntu 14.04.4 LTS.  The other end of the tunnel, node B, is another commodity server with an Intel(R) Core 2 Duo @ 3.00GHz and 2GB RAM running Ubuntu 12.04.5 LTS.   Nodes A and B use a mix of TP-LINK TG-3468 1000Mbps external ethernet cards and on-board gigabit ethernet cards (RTL8111 PCI Express and Intel 82567LM-3 respectively) for network connectivity.  Traffic from node B is routed to the Internet via a campus gateway using a 100Mbps link (it is this link which limits the network rate since the tunnel between A and B is a gigabit connection).   A NETGEAR JGS516 Gigabit switch is used to connect client machines to Seculink gateway A.  The Seculink tunnel uses IPSec and traffic shaping to protect against DPI and traffic analysis attacks. Outgoing traffic from clients is sent to node B via node A and then forwarded by B to the campus gateway.  Incoming traffic arriving at node B is forwarded to node A and then sent to the corresponding client. {The proposed tunnel is designed to perform the scheduling and traffic shaping on aggregated traffic from multiple clients. However in order to compare traffic patterns before and after traffic shaping we use a single client generating multiple flows in the experiments of this section unless stated otherwise. The client is a Dell Inspiron 15 5000-Series laptop with Intel Core i3-4005U Processor and 4GB RAM running an Ubuntu 14.04.4 LTS.}

\subsection{Software Setup}
Both Seculink and the links between clients and node A are protected with IPSec protocols AES-256 and SHA-256.  Traffic shaping on Seculink is implemented using the \texttt{netfilter} and \texttt{netfilter}\_\texttt{queue} libraries in C in conjunction with the \texttt{iptables} module in Linux.  Gateway A queues packets from clients and transmits them over Seculink using our traffic shaping scheduler with slots of 1ms duration.  When no client packets are queued for transmission yet the active traces require a packet to be sent then a dummy packet is generated and sent.  Traffic received at node B is first filtered to remove dummy packets and remaining packets are then forwarded to the campus gateway over an unencrypted link.  Responses received from the Internet are treated similarly, transmitted by node B to node A using our traffic shaping scheduler.   In the UDP experiments traffic is generated using the PERIODIC method of the MGEN traffic generator.   TCP traffic is generated by using \texttt{wget} to fetch dummy files of various sizes from a server located in the campus network.

\subsection{Synchronised vs Unsynchronised Schedulers}
\label{sec:syncvsunsync}
\begin{figure}
\centering
\begin{subfigure}[Synchronised]{0.5\columnwidth}
  \includegraphics[clip, width=0.98\columnwidth]{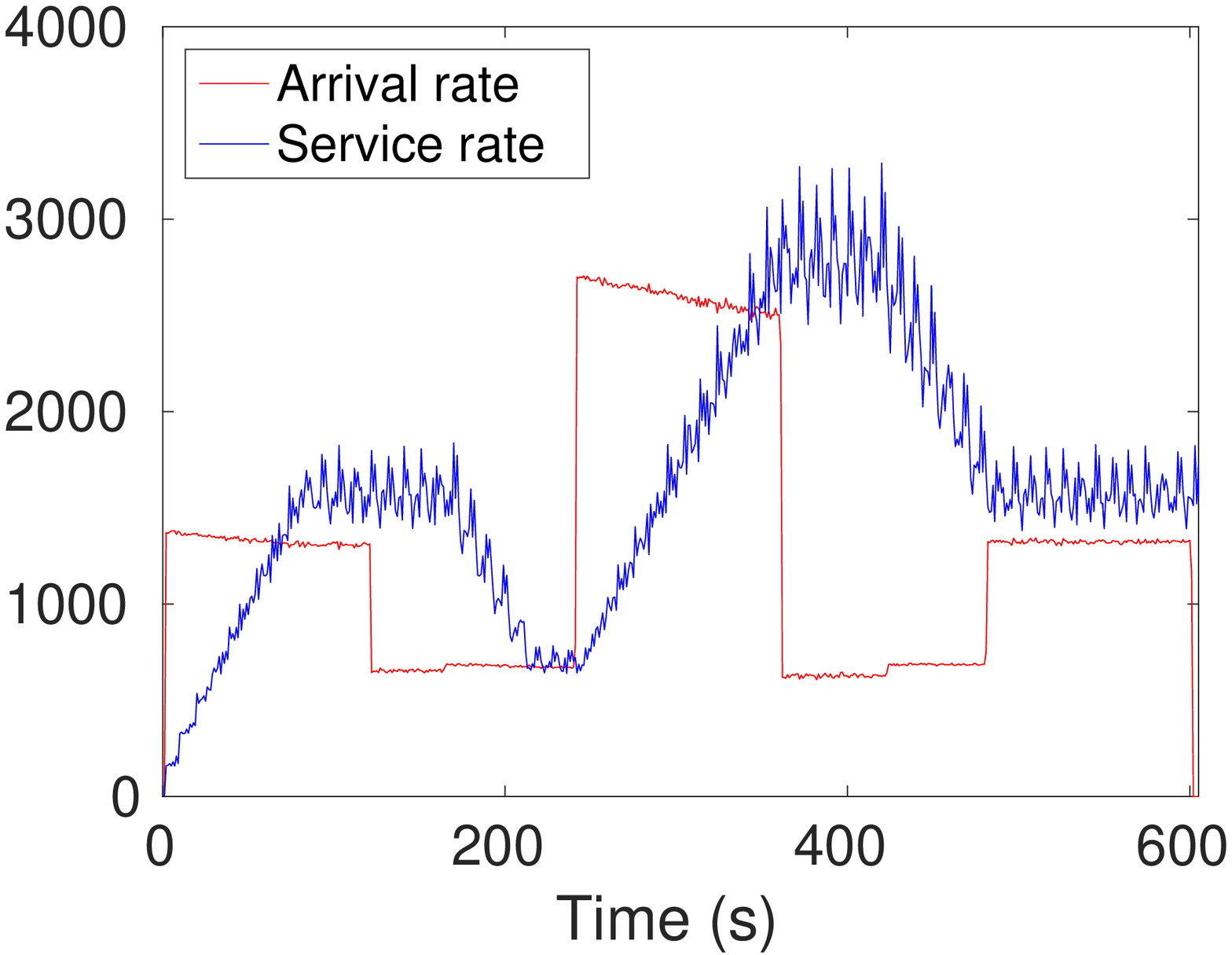}
	\label{fig:sync}
	\caption{Synchronised}
\end{subfigure}%
\begin{subfigure}[Unsynchronised]{0.5\columnwidth}
  \includegraphics[clip, width=0.98\columnwidth]{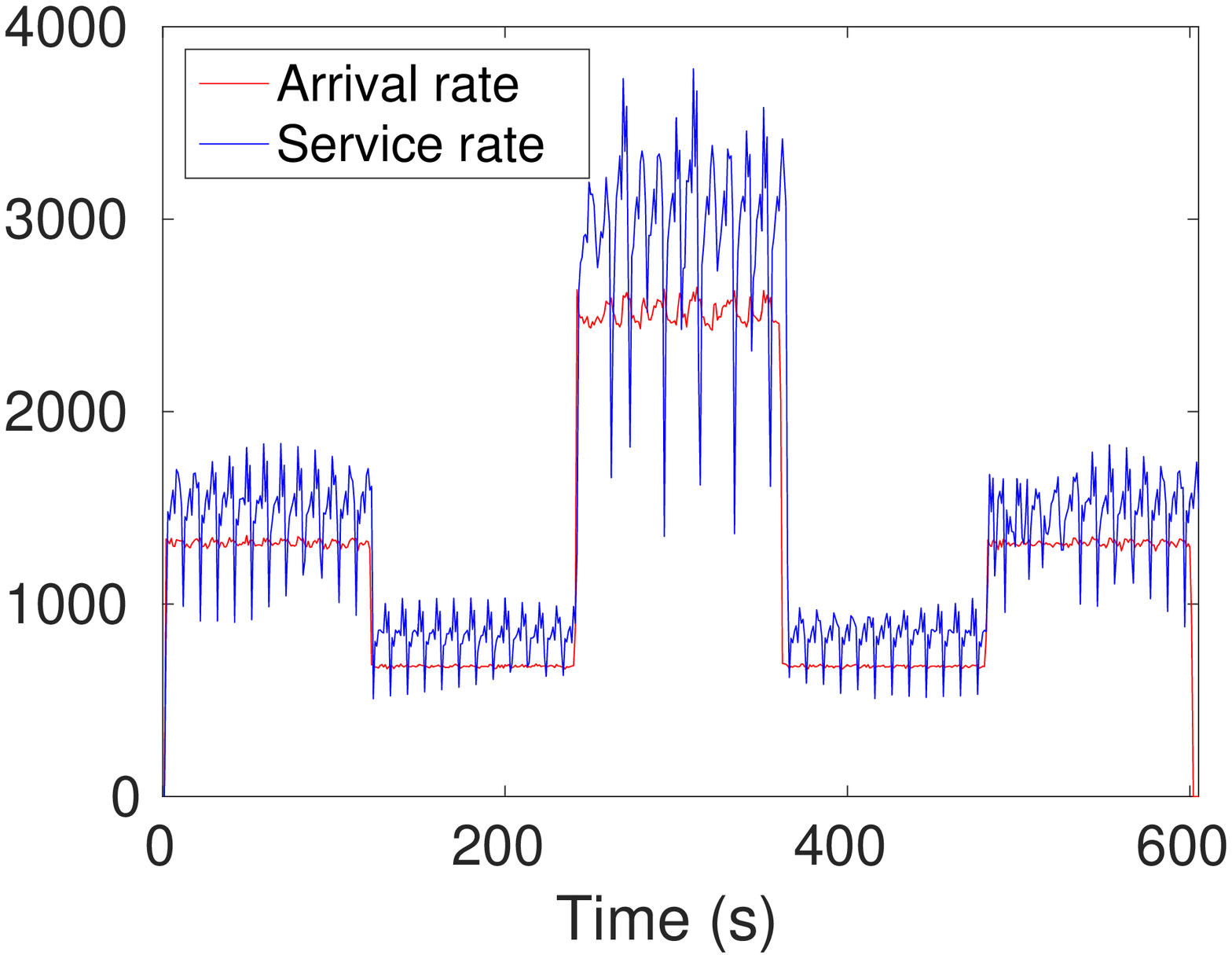}
	\label{fig:unsync}
	\caption{Unsynchronised}
\end{subfigure}
\caption[Rate adaptation in synchronised and unsynchronised scenarios]{{Comparison of tunnel rate adaptation in synchronised and unsynchronised scenarios.$\alpha$ in synchronised scenario. (duration: 10min, $\alpha=1$, $P = 1682, n=9615, c = 20, \gamma = 100$)}}
\label{fig:sync_unsync}
\end{figure}

With a synchronised scheduler traces only start and finish at the beginning of each cycle, a cycle being the duration of a trace. This means that when the arrival rate is changed in the middle of a cycle we have to wait until the start of the next cycle to update the number of active traces and so when new flows start they may experience significant delay.   This is particularly an issue when no traces are active in the tunnel and a new client flow starts since in this case the scheduler cannot transmit any packets until the next cycle starts (when some traces are already active the scheduler has the option to transmit some packets from the new flow e.g. by substituting for packets from other flows or for dummy packets).    Similarly, when the rate of incoming traffic is less than the rate of a single trace then sufficient packets need to be queued before a trace is activated, causing delay.

To mitigate such delays on lightly loaded links, two unsynchronised modifications are implemented to ``wake'' the channel from silence upon sensing new incoming traffic:

\begin{enumerate}
\item \emph{DNS Triggering}.  Upon receiving a DNS packet when no traces are active, a new trace is immediately activated.  
\item \emph{Use of Running Average When Lightly-Loaded}.  We maintain a running average of packet arrivals, $\bar{a} = (1-\zeta)  \bar{a} + \zeta  \sum_{u\in\U}\vi{a}{u}_k$, where $\zeta>0$ is a design parameter.  When $\bar{a}$ exceeds threshold $a^*$ and no traces are active then a new trace is activated.   In our experiments we use $\zeta= 0.001$, $a^*=0.005$.
\end{enumerate}

In addition, to avoid adding new traces in response to small spikes in arrivals, when $\gamma - \hat Q_k P < 0$ we only activate a new trace when also $\hat Q_{k}-\hat Q_{k-m} > 0$, where $m$ is a design parameter.  In this way we only respond to a sustained increase in $\hat Q$.  Similarly, when $\gamma - \hat Q_k P > 0$ we only decrease $x_k$ when also  $\hat Q_{k}-\hat Q_{k-m} < 0$.   In our experiments we use $m = 100$ unless otherwise stated, which is found to provide a good compromise between responsiveness and sensitivity to small fluctuations (since slots are 1ms duration, this correponds to a window of 100ms duration). 

To reduce sensitivity to small fluctuations in arrivals we also modify the $\hat Q$ update to use $\max\{\bar{a} , \sum_{u\in\U}\vi{a}{u}_k\}$, namely:
\begin{align*}
\hat Q_{k+1} = [\hat Q_k + \alpha(\max\{\bar{a} , \sum_{u\in\U}\vi{a}{u}_k\} - \sum_{t\in\T_k} p_{k-\tau_t})]^+
\end{align*}
In this way when the number of packet arrivals falls temporarily $\bar{a}$ is used instead of $a_k$.  

Figure \ref{fig:sync_unsync} illustrates the impact of these changes, comparing time histories of the arrival and services rates when using a synchronised scheduler vs an unsynchronised scheduler.   It can be seen that the unsynchronised scheduler is much more responsive to changes to traffic load.  Not only is the delay in increasing the service rate in response to increases in arrival rate reduced (so reducing the delay experienced by user packets) but also the delay in reducing the service rate in response to a fall in the arrival rate is also reduced (so reducing the number of dummy packets transmitted).  {Since it has better performance, from now on we will make use of this unsynchronised scheduler unless otherwise stated.}

\subsection{Choice of Trace Parameters}

\begin{figure}
\centering
\begin{subfigure}[Uplink]{0.48\columnwidth}
\centering
  \includegraphics[clip, width=0.98\columnwidth]{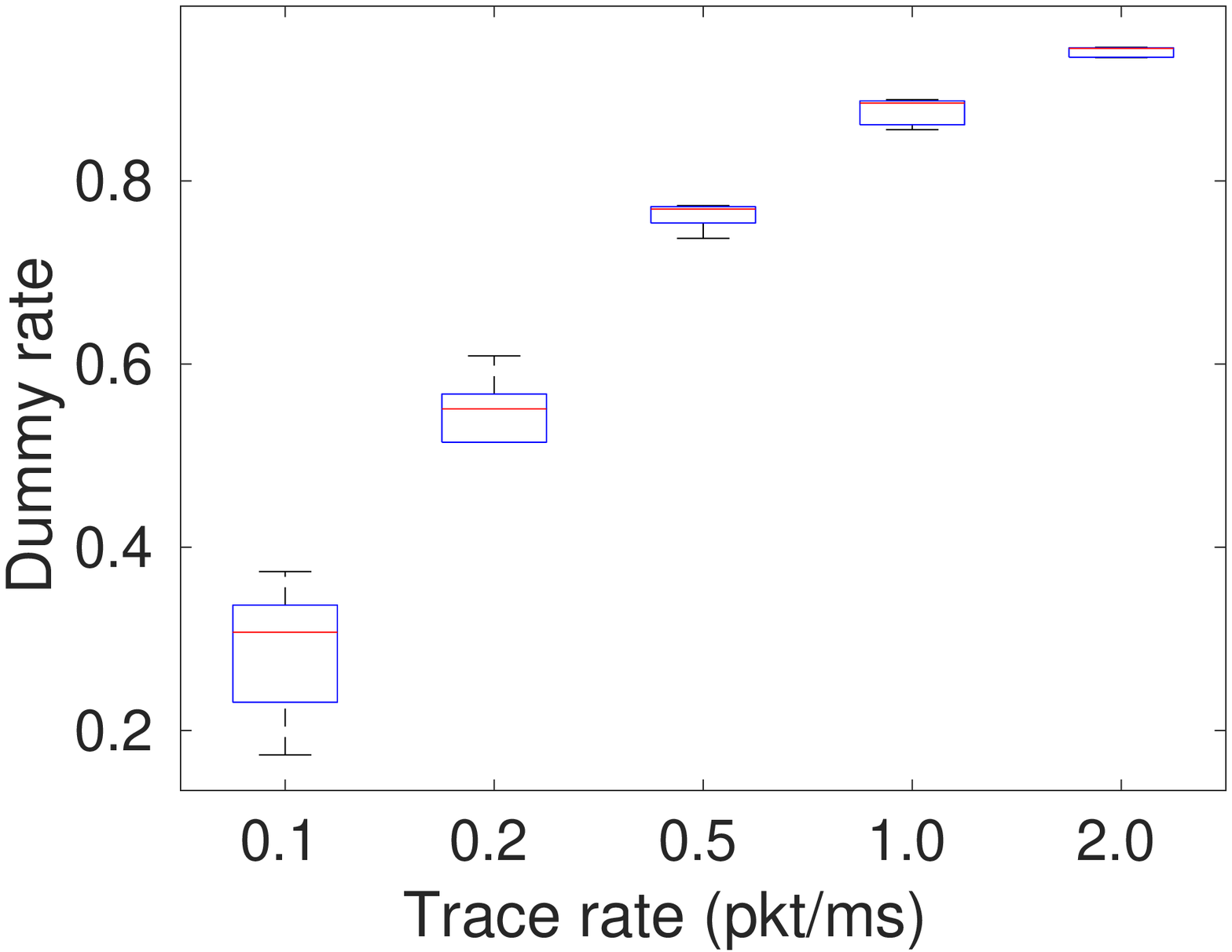}
  	\caption{Uplink}
\end{subfigure}%
\begin{subfigure}[Downlink]{0.48\columnwidth}
\centering
  \includegraphics[clip, width=0.98\columnwidth]{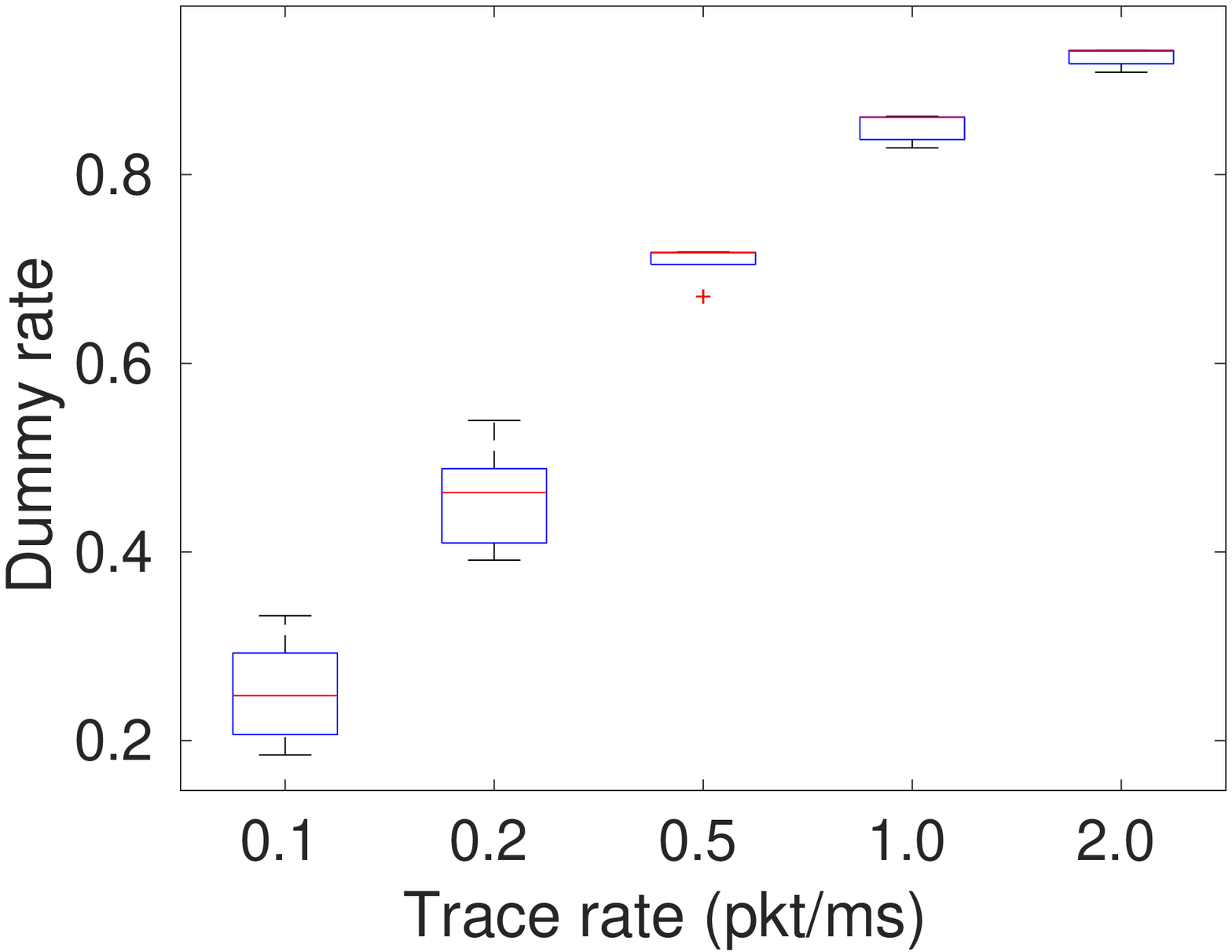}
  	\caption{Downlink}
\end{subfigure}
\caption[Effect of trace rate on dummy overhead]{Dummy overhead vs trace rate for a sample website.  For each rate data is shown for traces of duration 1, 2, 4, 8 and 16 seconds.  ($\gamma = 4096$).}
\label{fig:website_dummy_vs_rate}
\end{figure}

\begin{figure}
\centering
\begin{subfigure}[rate]{0.48\columnwidth}
\centering
  \includegraphics[clip, width=0.98\columnwidth]{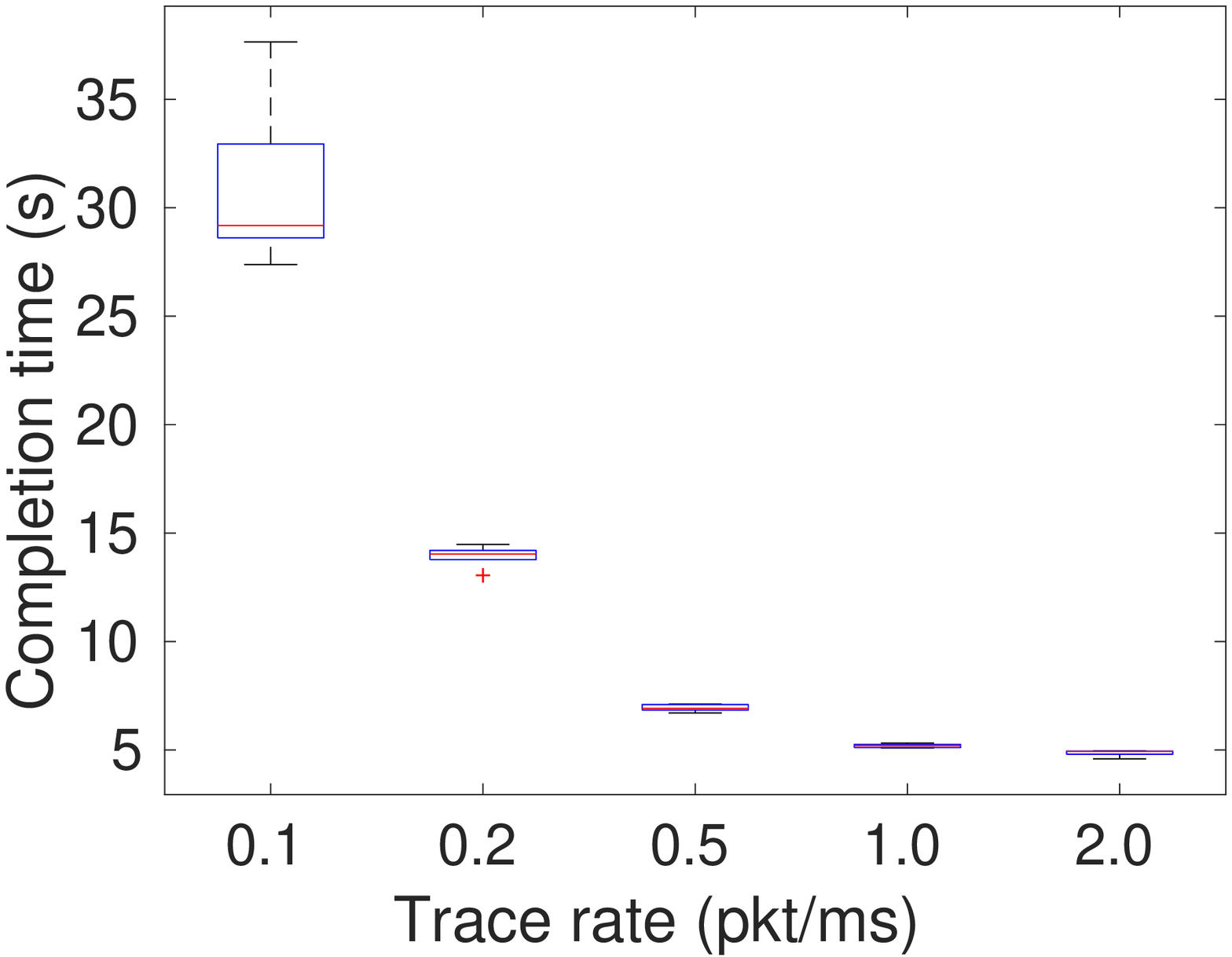}
  	\caption{Different rates}
\end{subfigure}%
\begin{subfigure}[duration]{0.48\columnwidth}
\centering
  \includegraphics[clip, width=0.98\columnwidth]{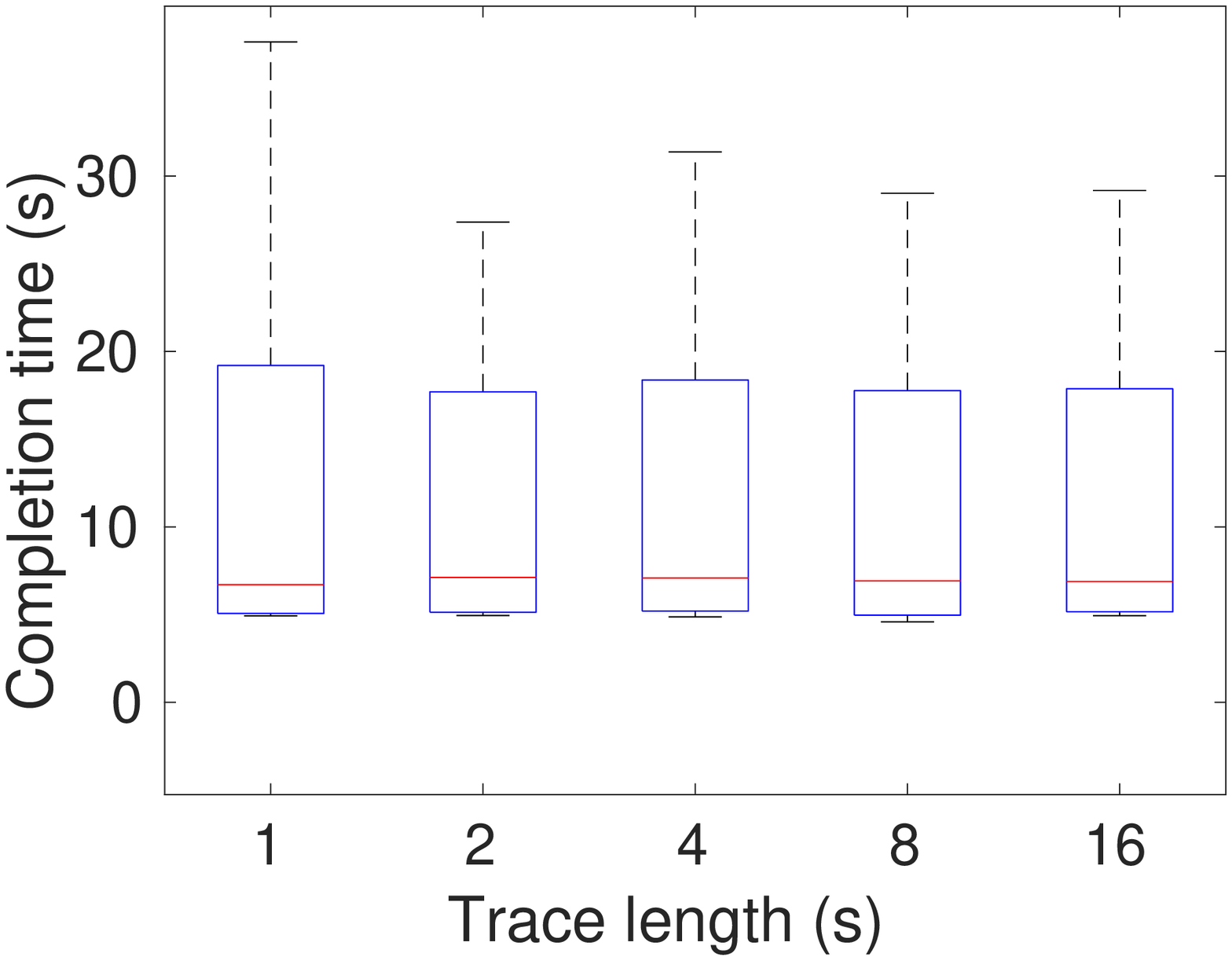}
  	\caption{Different durations}
\end{subfigure}
\caption[Effect of trace rate and duration on completion time]{Illustrating the effect of trace rate and duration on the completion time of a sample web page.  For each rate the data shown in (a) is for traces of duration 1, 2, 4, 8 and 16 second.  Similarly, for each duration the data shown in (b) is for rates of 0.1, 0.2, 0.5, 1.0 and 2.0 pkts/ms. ($\gamma = 4096$). }
\label{fig:website_completion}
\end{figure}

\begin{figure}
\centering
\begin{subfigure}[Uplink]{0.48\columnwidth}
\centering
  \includegraphics[clip, width=0.98\columnwidth]{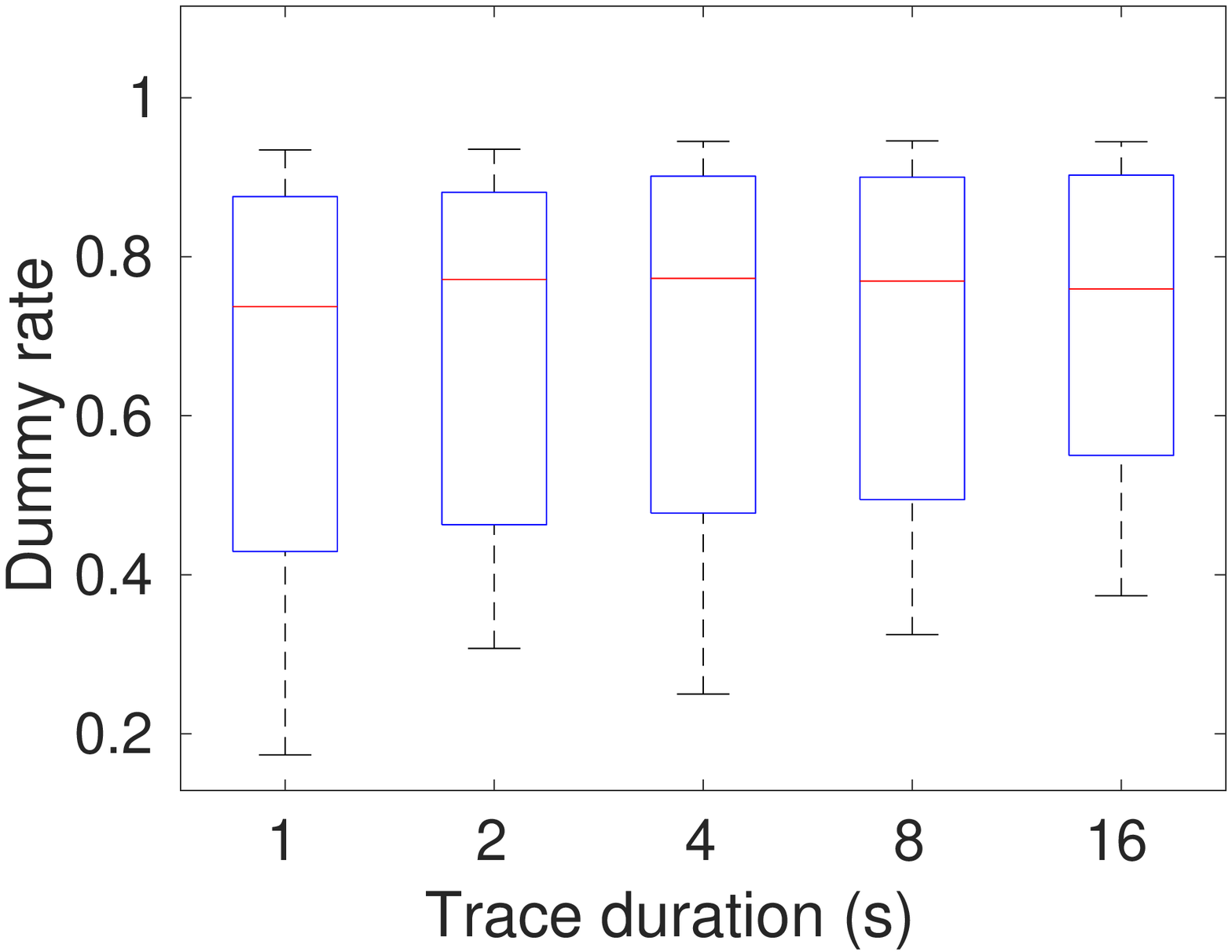}
  	\caption{Uplink}
\end{subfigure}%
\begin{subfigure}[Downlink]{0.48\columnwidth}
\centering
  \includegraphics[clip, width=0.98\columnwidth]{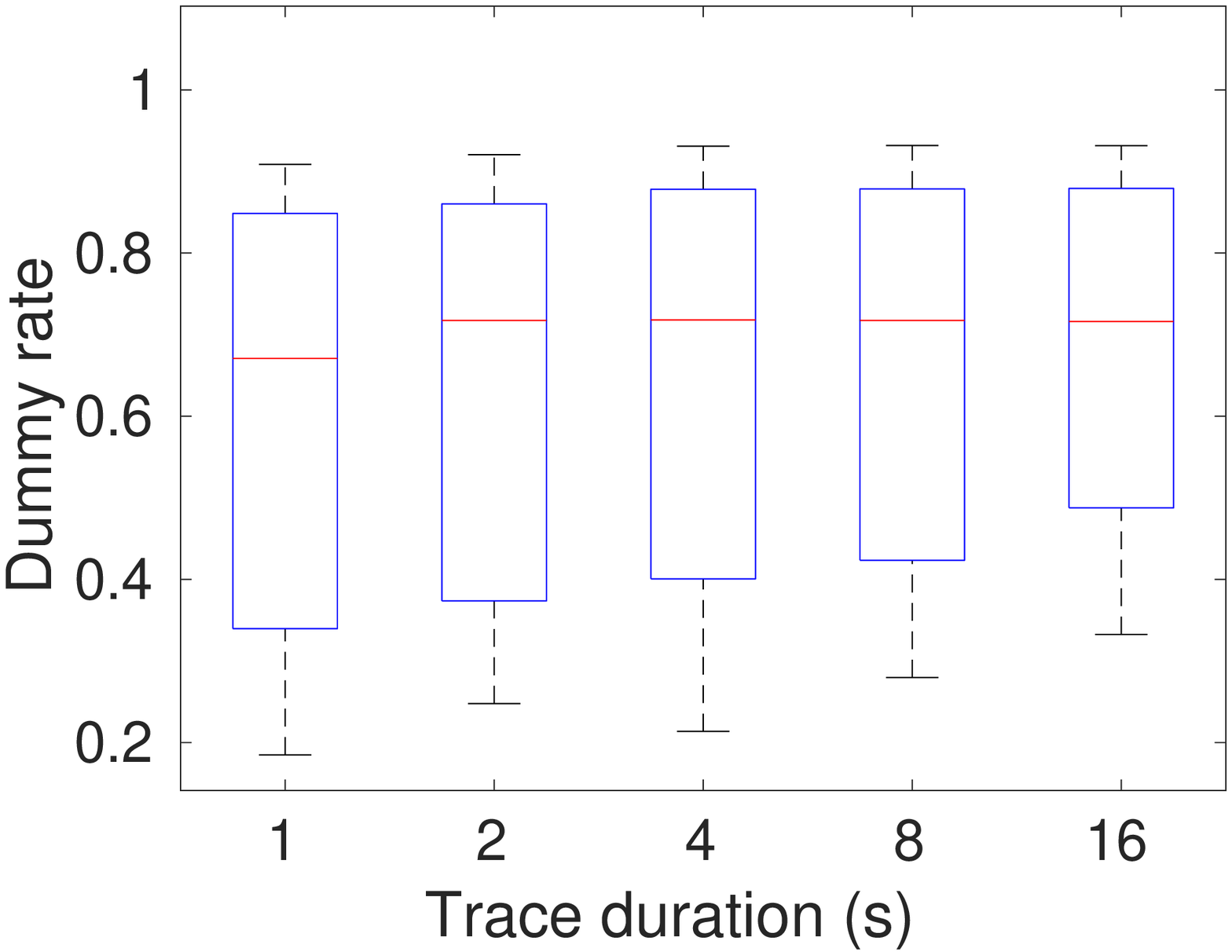}
  	\caption{Downlink}
\end{subfigure}
\caption[Effect of trace duration on dummy overhead]{Illustrating the effect of trace duration on dummy overhead for a sample website.  For each duration the data shown for rates of 0.1, 0.2, 0.5, 1.0 and 2.0 pkts/ms.  ($\gamma = 4096$).  }
\label{fig:website_dummy_vs_duration}
\end{figure}

We make use of traces which transmit at a constant rate for a defined duration.    Figure \ref{fig:website_dummy_vs_rate} shows measurements of the delay and fraction of dummy packets vs the trace rate and duration.  This data is for an example web fetch using TCP and shows measurements for both the downlink and uplink (the uplink carries the TCP ACKs).   Figure \ref{fig:website_completion} shows the corresponding impact of the trace rate and duration on the web fetch completion time.   It can be seen from these plots that increasing the trace rate reduces completion time but increases the dummy packet overhead, but the duration of trace has limited effect on completion time. The dummy packet overhead for different trace durations is also shown in Figure \ref{fig:website_dummy_vs_duration}.    

{The duration and rate of a trace also has an effect on the ability to defend against attacks. As explained in Section \ref{sec:traces}, pages that are served with the same pattern of traces are indistinguishable to an attacker.  Web pages that can be fetched using only one or two traces are likely to share the same pattern since there are few ways in which a small number of traces can be generated.  However when the number of traces needed to fetch a web page increases then the chance of having distinct patterns is increased.  Hence, better privacy is achieved by selecting the trace parameters such that the tunnel is able to serve most websites using only a small number of traces.   Based on these considerations and the data in Figure \ref{fig:website_dummy_vs_rate}-\ref{fig:website_dummy_vs_duration} we selected a trace of duration 9s and rate 0.2 pkts/ms ($n=9615$ slots and $P=1682$ packets, slots are of 1ms duration) and use these parameter values for the rest of our experiments.}

\subsection{Performance with CBR UDP Traffic}

\begin{figure}
\centering
  \includegraphics[clip, width=0.48\columnwidth]{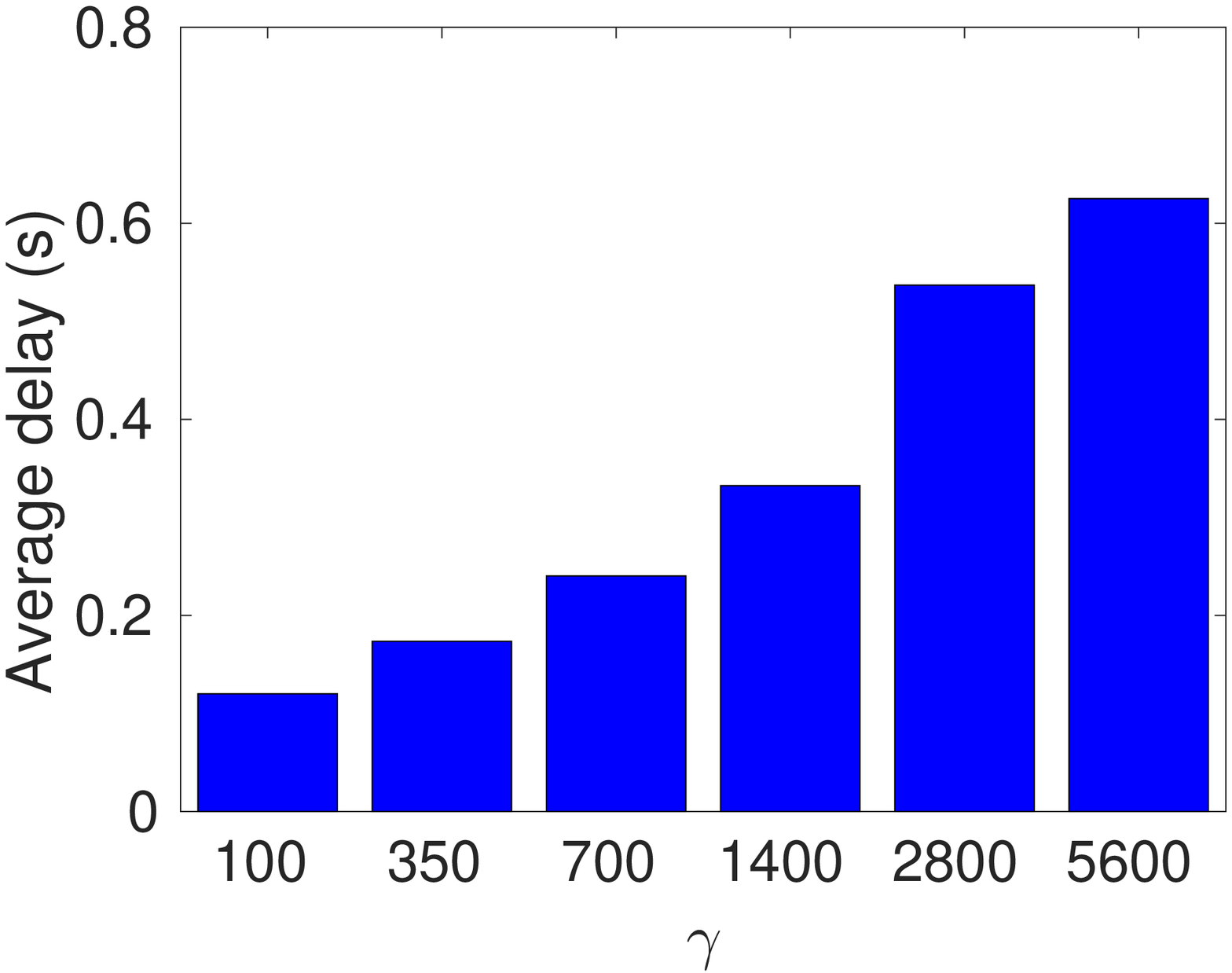}\hspace{0.03\columnwidth}%
  \includegraphics[clip, width=0.48\columnwidth]{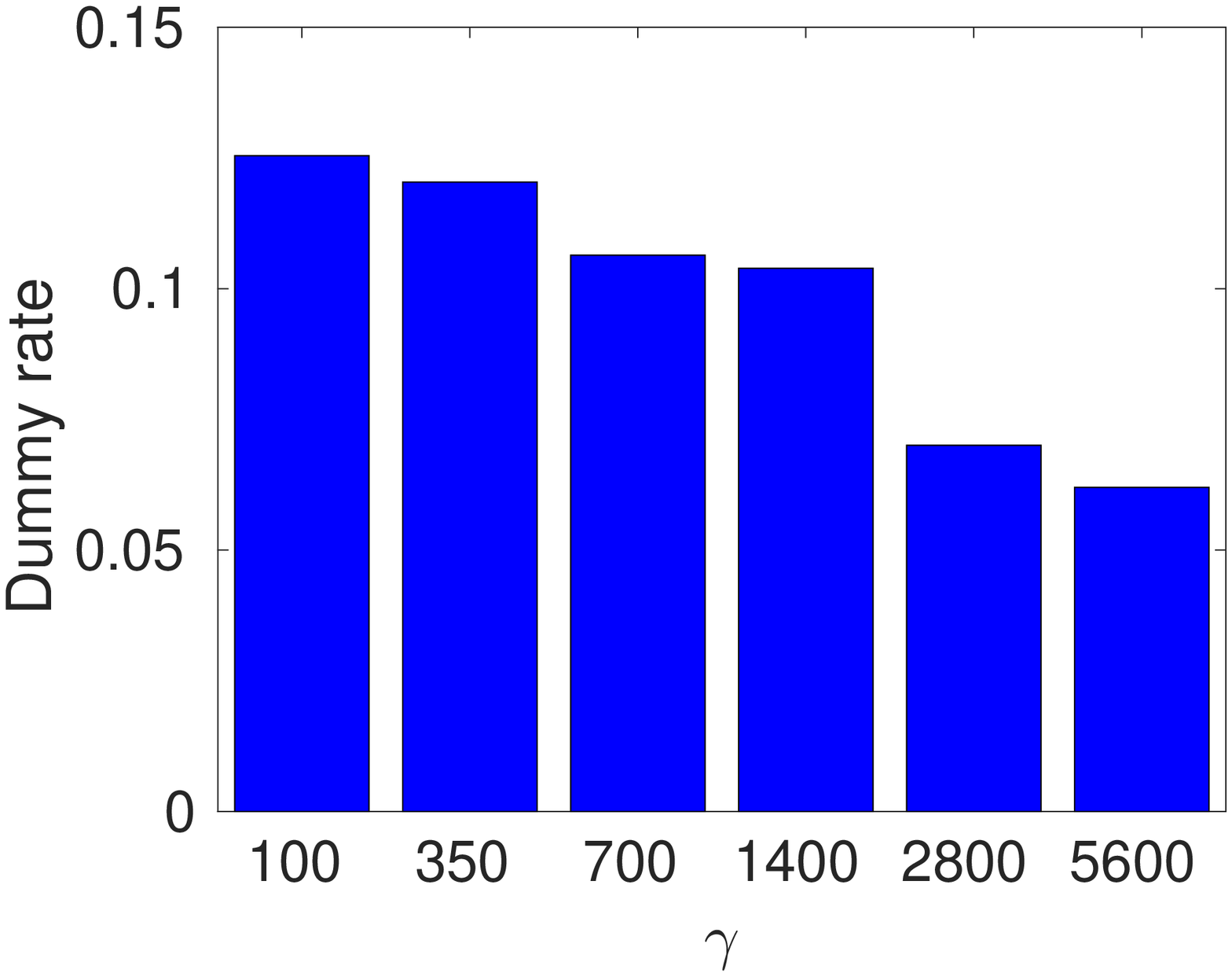}
\caption[Delay and dummy rates vs $\gamma$ for CBR UDP]{Comparison between dummy and delay rates for CBR UDP traffic and different choices of $\gamma$. Given the negligible variance, average over 5 tests are presented for each $\gamma$. (duration: 5min, $\bar{a} = 2800$, $\alpha = 1$, $P = 1682, n=9615, c = 20$). Values for each $\gamma$ are averaged over 5 tests (error bars are not shown since they are too small to be clearly visible).}
\label{fig:udp_diff_gamma}
\end{figure}


\begin{figure}
\centering 
  \includegraphics[clip, width=0.48\columnwidth]{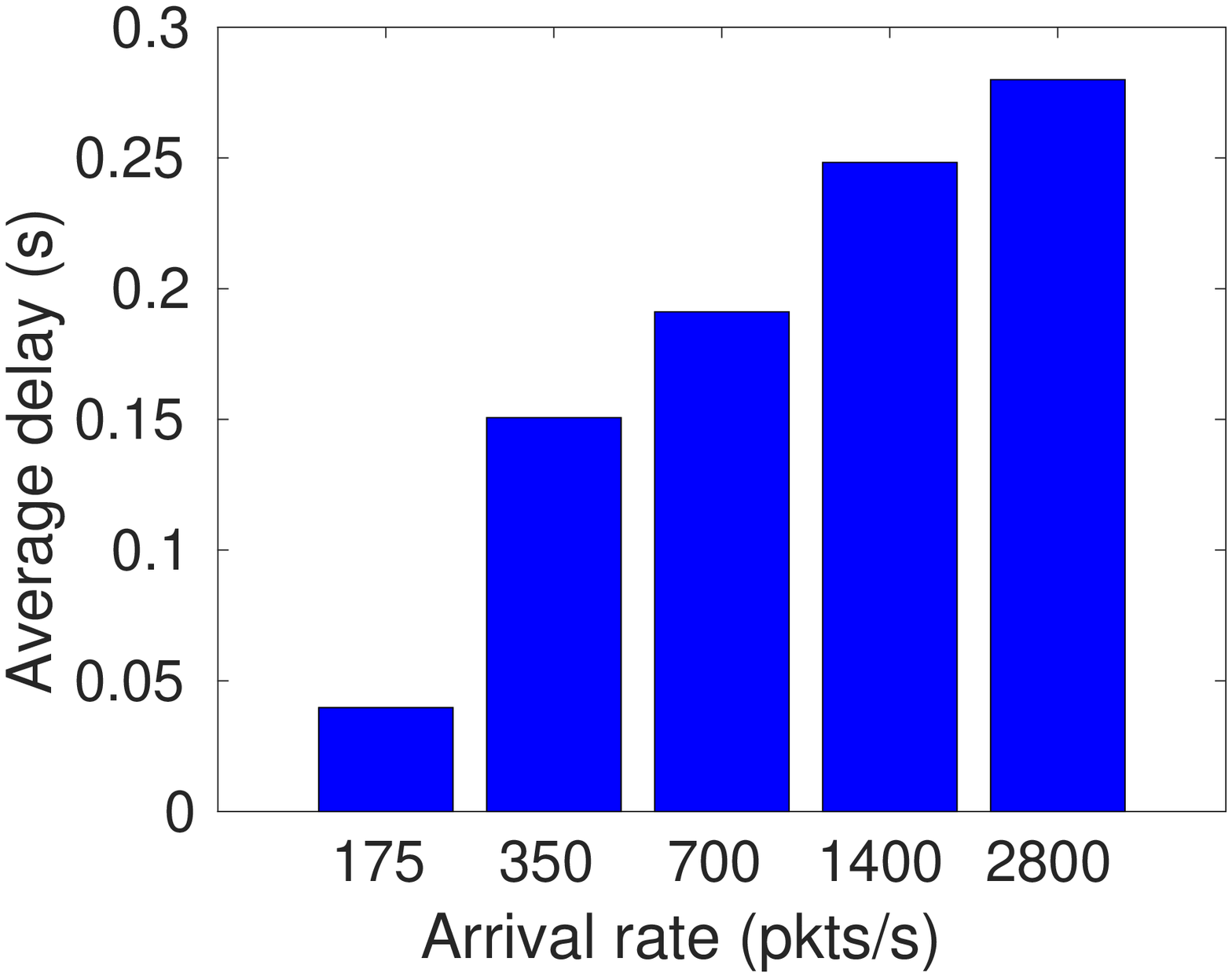}\hspace{0.03\columnwidth}%
  \includegraphics[clip, width=0.48\columnwidth]{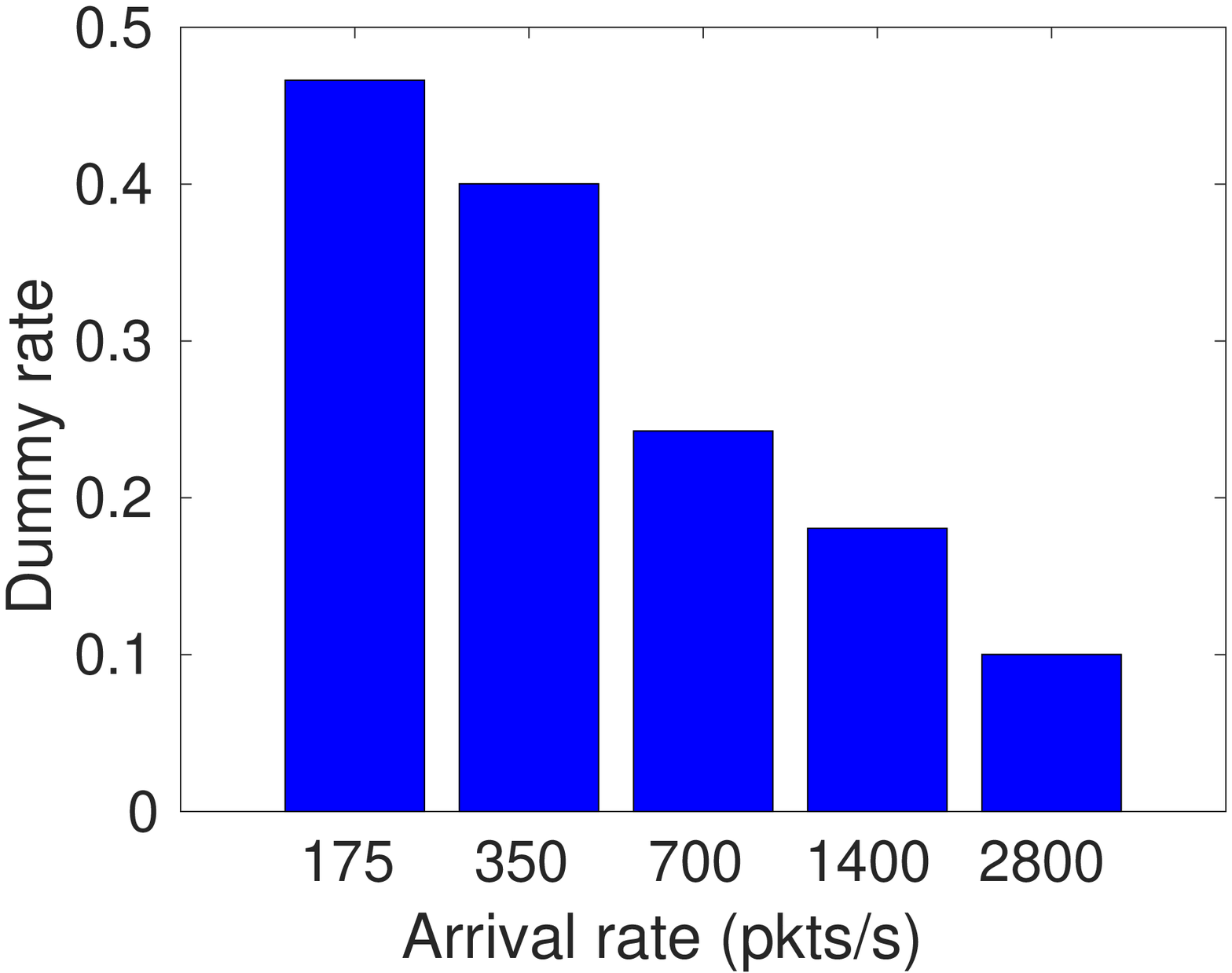}
\caption[Delay and dummy rates vs arrival rate for CBR UDP]{Comparison between delay and dummy rates  vs arrival rate for CBR UDP traffic with $\gamma = 1024$. (duration: 5min, $\alpha = 1$, $P = 1682, n=9615, c = 20$). Values are averaged over 5 tests.}
\label{fig:udp_diff_rates}
\end{figure}

In this section we study throughput efficiency and delay performance with constant bitrate (CBR) UDP arrivals.   By throughput efficiency we mean the mismatch between the transmit rate of the scheduler and the arrival rate of client packets.  Recall that when there are insufficient client packets buffered at the scheduler then the scheduler transmits dummy packets so that its transmissions can continue to follow the predefined trace pattern.  These dummy packets provide resistance to traffic analysis attacks but increase the load on the network and so we would like to minimise these.   There is a fairly direct trade-off between delay and the volume of dummy packets transmitted since by increasing the backlog of client packets buffered  at the input to the scheduler we reduce the need for dummy packets but increase the delay experience by client packets.  Conversely, reducing the number of client packets buffered tends to increase the need for dummy packets but decrease delay.

This trade-off between throughput efficiency and delay can be seen in Figure \ref{fig:udp_diff_gamma}, which plots measurements of the delay and dummy rate vs scheduler parameter $\gamma$ which directly influences the queue backlog (with increases in $\gamma$ increasing the backlog).  
The dummy rate value shown is the ratio of the dummy packets sent to the total number of packets sent (dummy plus processed packets).  Note that there is no packet loss within the scheduler in these tests.  It can be seen from Figure \ref{fig:udp_diff_gamma} that as $\gamma$ is increased the delay rises but the rate at which dummy packets are sent falls.   Observe, however, that even for relatively small values of $\gamma$ the dummy rate remains reasonably small, which is encouraging.

Figure \ref{fig:udp_diff_rates} shows corresponding measurements of delay and dummy rate as the arrival rate is varied (scheduler parameter $\gamma$ is held constant at $1024$ in these plots).    It can be seen that the delay tends to increase with arrival rate and the dummy rate to fall.   This can be understood by noting that as the arrival rate rises more user packets tend to be buffered at the scheduler.  Hence, the delay rises but also user packets are more likely to be available when a transmisison opportunity occurs and so the number of dummy packets needed falls.
Importantly, observe that the delay is consistently reasonable, rising to no more than 275ms at higher arrival rates, despite the extensive traffic shaping being carried out.   Also, the decrease in dummy rate with increasing arrival rate means that the throughput efficiency increases with increasing load, so mitigating the dummy packet cost incurred by the traffic shaping.     



\subsection{Performance with TCP Flows}
\label{sec:realTCP}


We finish this section by presenting data on the scheduler performance with TCP traffic (for the default Linux Cubic TCP variant).   Figure \ref{fig:tcp_diff_gamma_rates} plots measurements of delay and fraction of dummy packets vs design parameter $\gamma$ when fetching a 1024MB file from a campus server using \texttt{wget}.  Data is shown for both the downlink and uplink, the packet rate on the uplink being roughly half of that on the downlink due to delayed acking by TCP.   It can be seen that the delay increases with $\gamma$, as expected, but remains less than 100ms even for relatively large values of $\gamma$.  The fraction of dummy packets is relatively insensitive to $\gamma$ when using TCP and remains consistently below 20\%.

\begin{figure}[!t]
\centering
\begin{subfigure}[Uplink]{1.0\columnwidth}
  \includegraphics[clip, width=0.46\columnwidth]{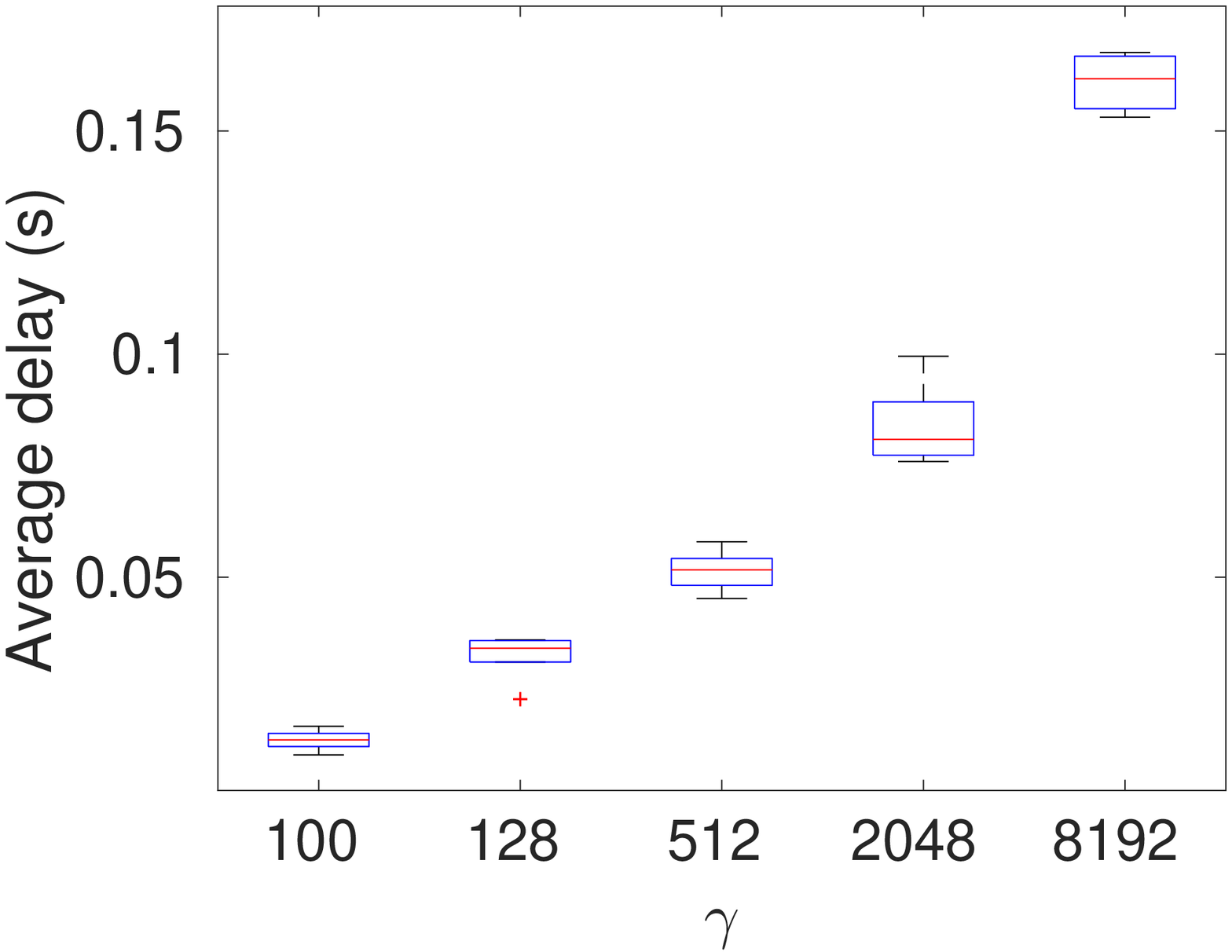}
  \includegraphics[clip, width=0.46\columnwidth]{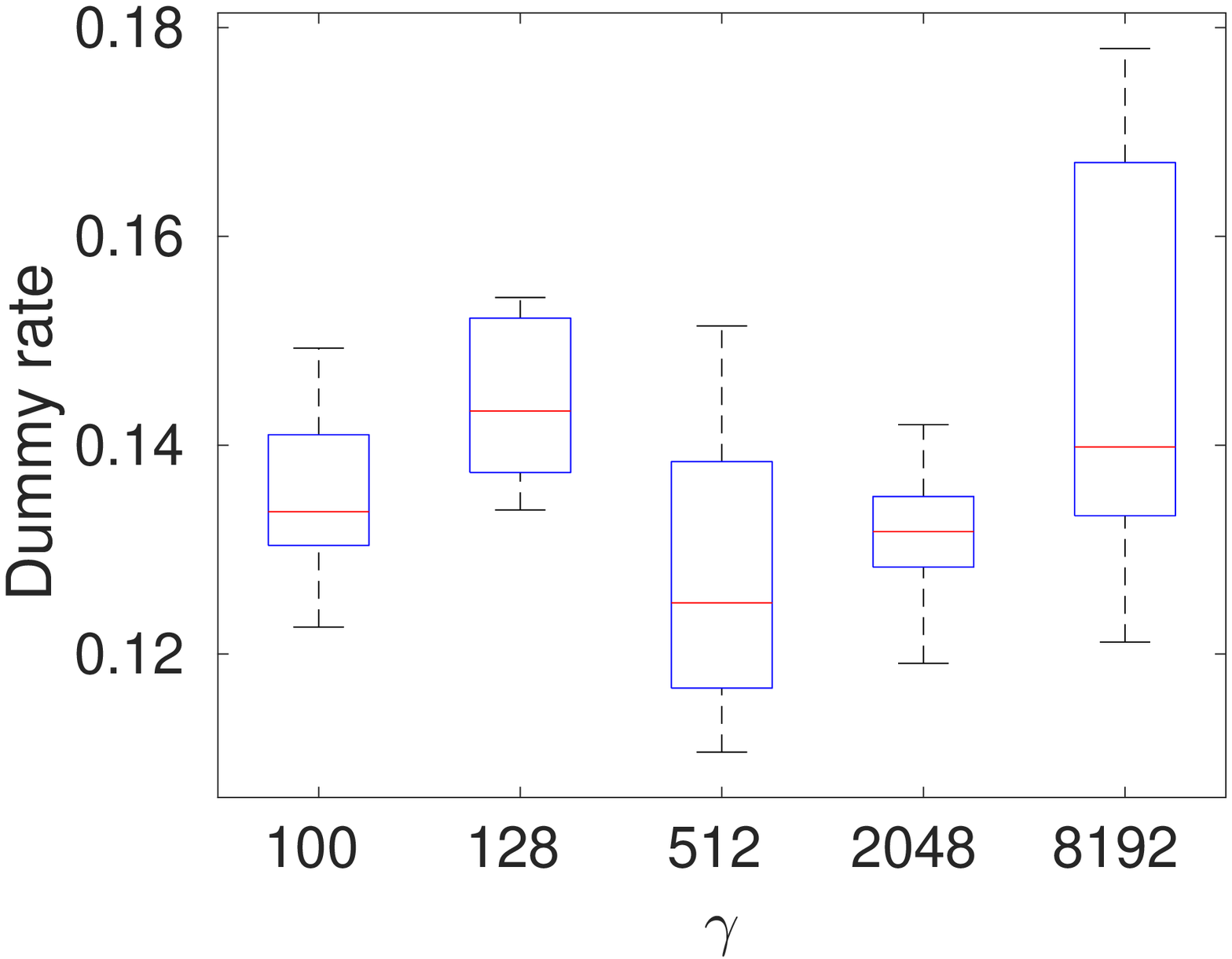}
	\label{fig:tcp_gamma_uplinkk}
	\caption{Uplink}
\end{subfigure}
\begin{subfigure}[Downlink]{1.0\columnwidth}
  \includegraphics[clip, width=0.46\columnwidth]{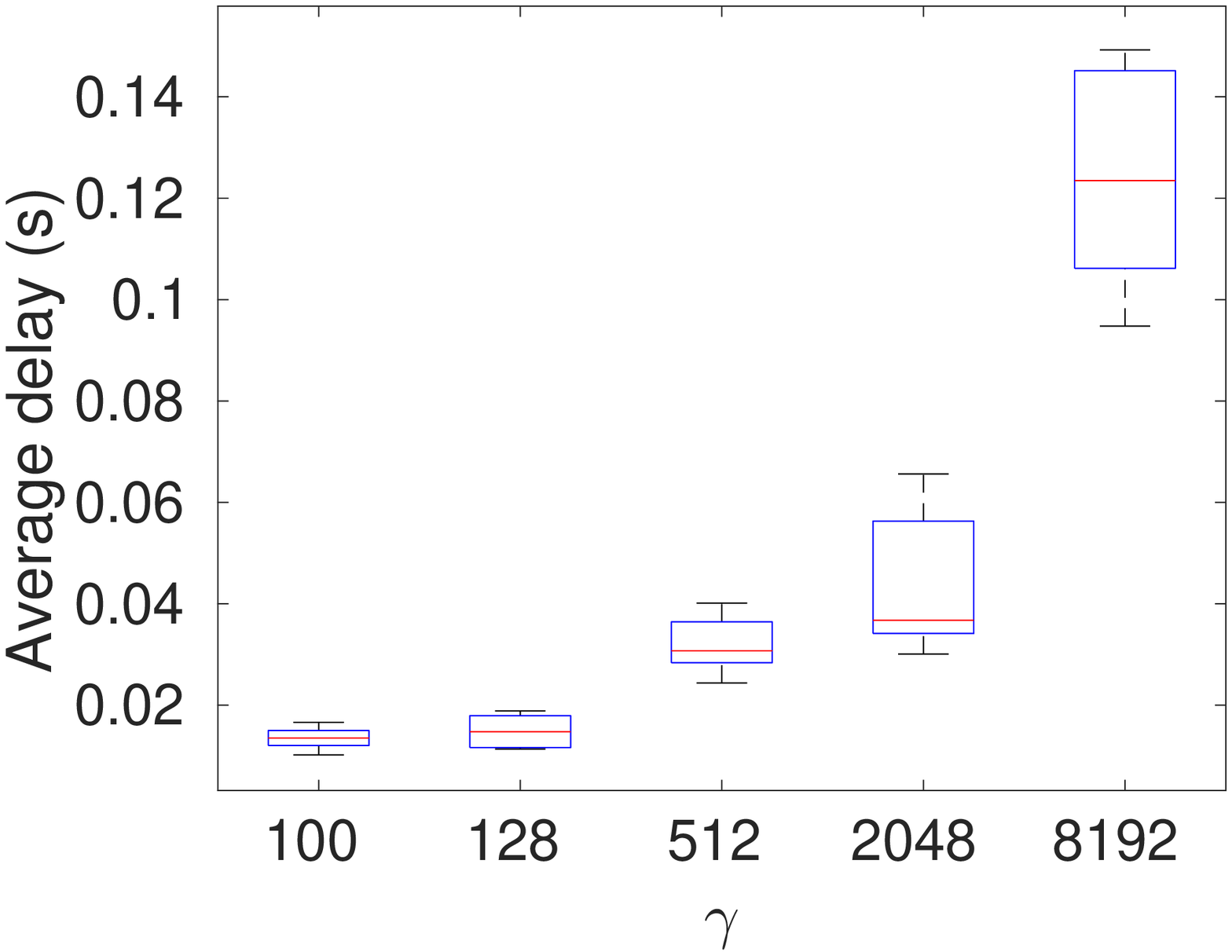}
  \includegraphics[clip, width=0.46\columnwidth]{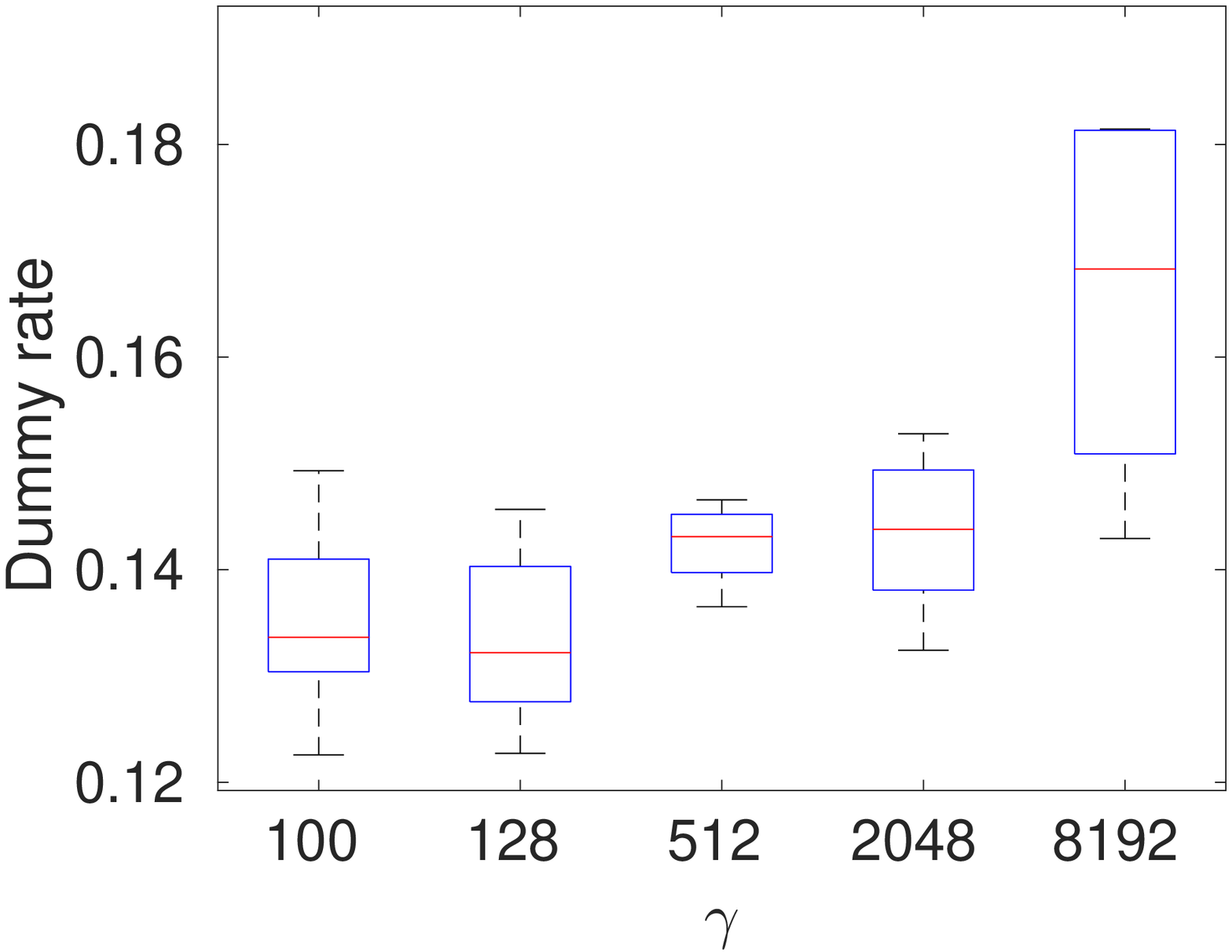}
	\label{fig:tcp_gamma_downlink}
	\caption{Downlink}
\end{subfigure}
\caption[Delay and dummy rates vs $\gamma$ for TCP]{Measurements of delay and fraction of dummy packets vs $\gamma$ for TCP traffic.  The data is measured fetching a 1024MB file, the average over 5 tests is presented for each $\gamma$ ($\alpha = 1$, $P = 1682, n=9615, c = 20$).}
\label{fig:tcp_diff_gamma_rates}
\end{figure}


\begin{figure}
\centering
  \includegraphics[clip, width=0.65\columnwidth]{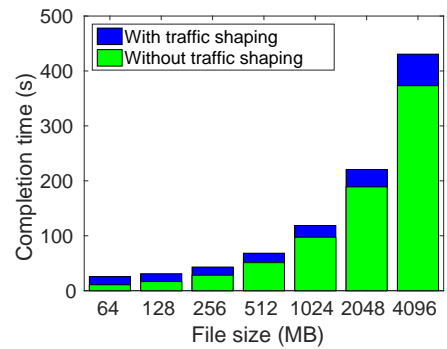}
\caption[Completion time vs file sizes when using TCP]{Completion time vs file sizes when using TCP, $\gamma = 1024$.  Results are shown both for the traffic shaped tunnel and for an unprotected link (no encryption, no traffic shaping). Values are averaged over 5 runs.}
\label{fig:tcp_diff_files_1024}
\end{figure}

Figure \ref{fig:tcp_diff_files_1024} plots the measured completion time vs file size fetched.    For comparison, the corresponding data is also shown for a link with no traffic shaping.   It can be seen that the cost, in terms of increase in completion time, is modest.   Figure \ref{fig:tcp_diff_files_rates} provides additional detail, showing measured data on delay and dummy rate vs file size.    It can be seen that the delay and dummy rate both tend to fall as the file size increases.   The effect here is due to the time that it takes the scheduler to adapt to the arrival of a new flow:  for longer flows this adaptation overhead gets washed out and amortised over many packets but for short flows its effect is more pronounced.  This can be seen in Figure \ref{fig:convergence}, which shows time histories of the number of active traces on both the downlink and uplink (uplink traces are carrying the TCP acks and so are fewer in number).  The experiment is conducted by fetching a file of size 8192MB. The link reaches steady state in about 20 seconds.


\begin{figure}[!t]
\centering
\begin{subfigure}[Uplink]{1.0\columnwidth}
  \includegraphics[clip, width=0.46\columnwidth]{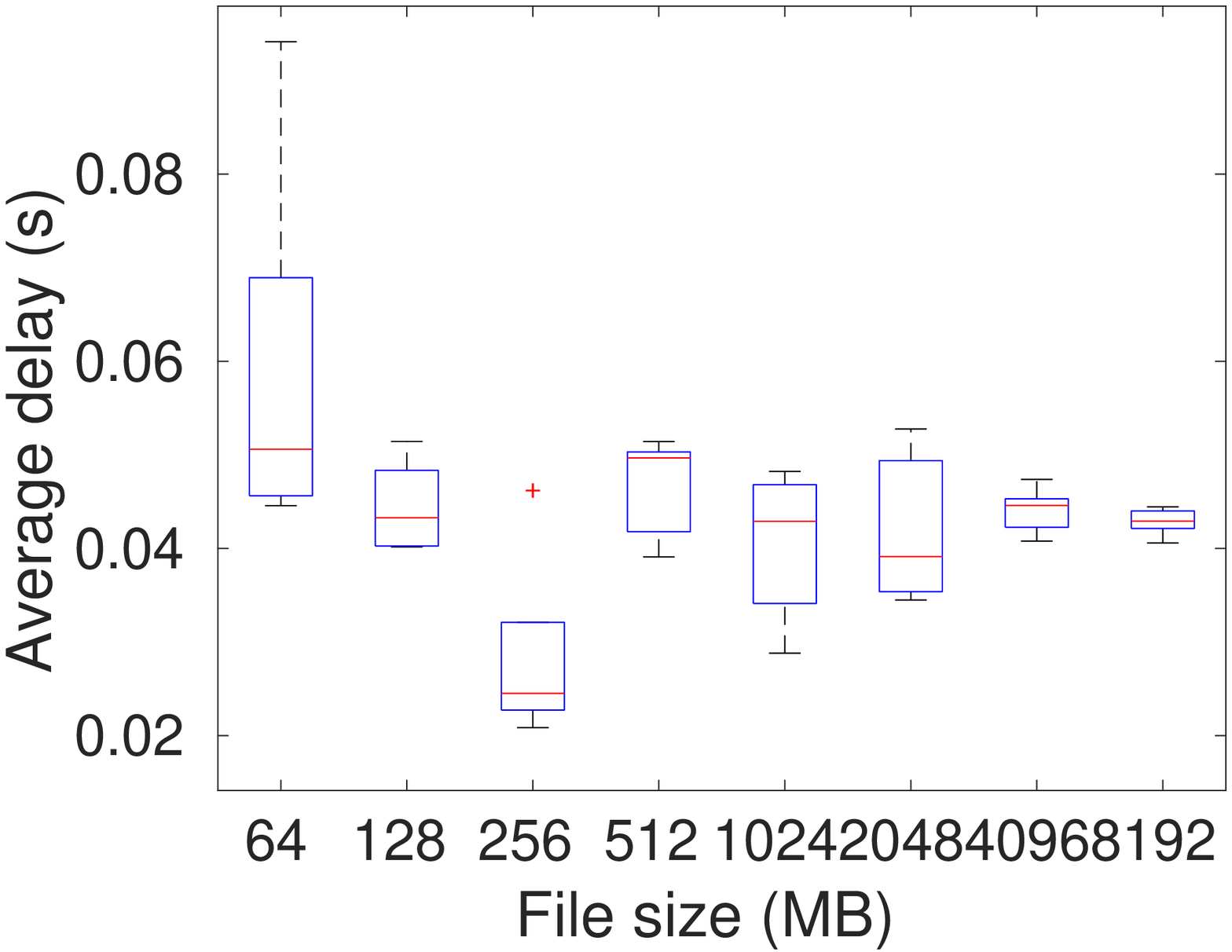}
  \includegraphics[clip, width=0.46\columnwidth]{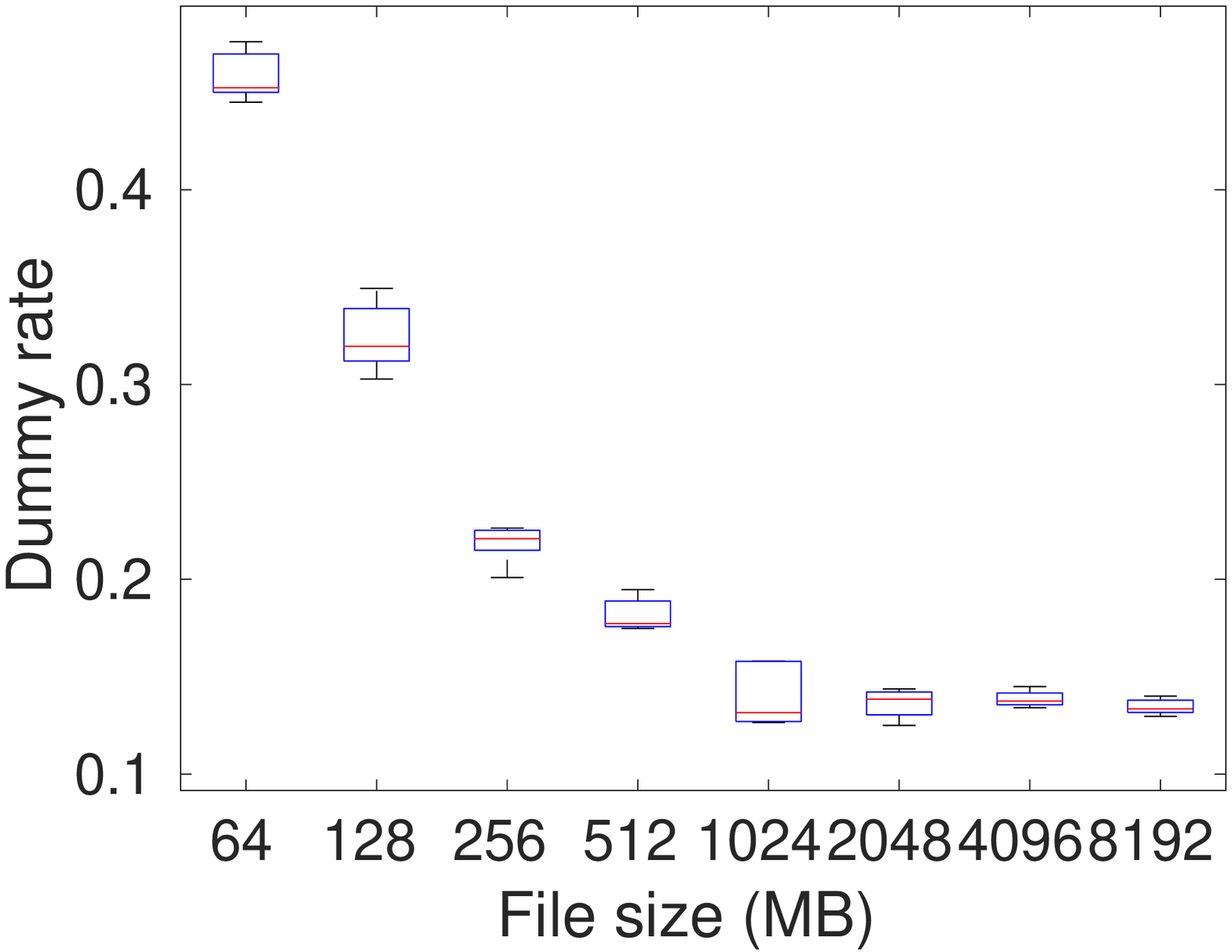}
	\label{fig:tcp_files_uplinkk}
	\caption{Uplink}
\end{subfigure}
\begin{subfigure}[Downlink]{1.0\columnwidth}
  \includegraphics[clip, width=0.46\columnwidth]{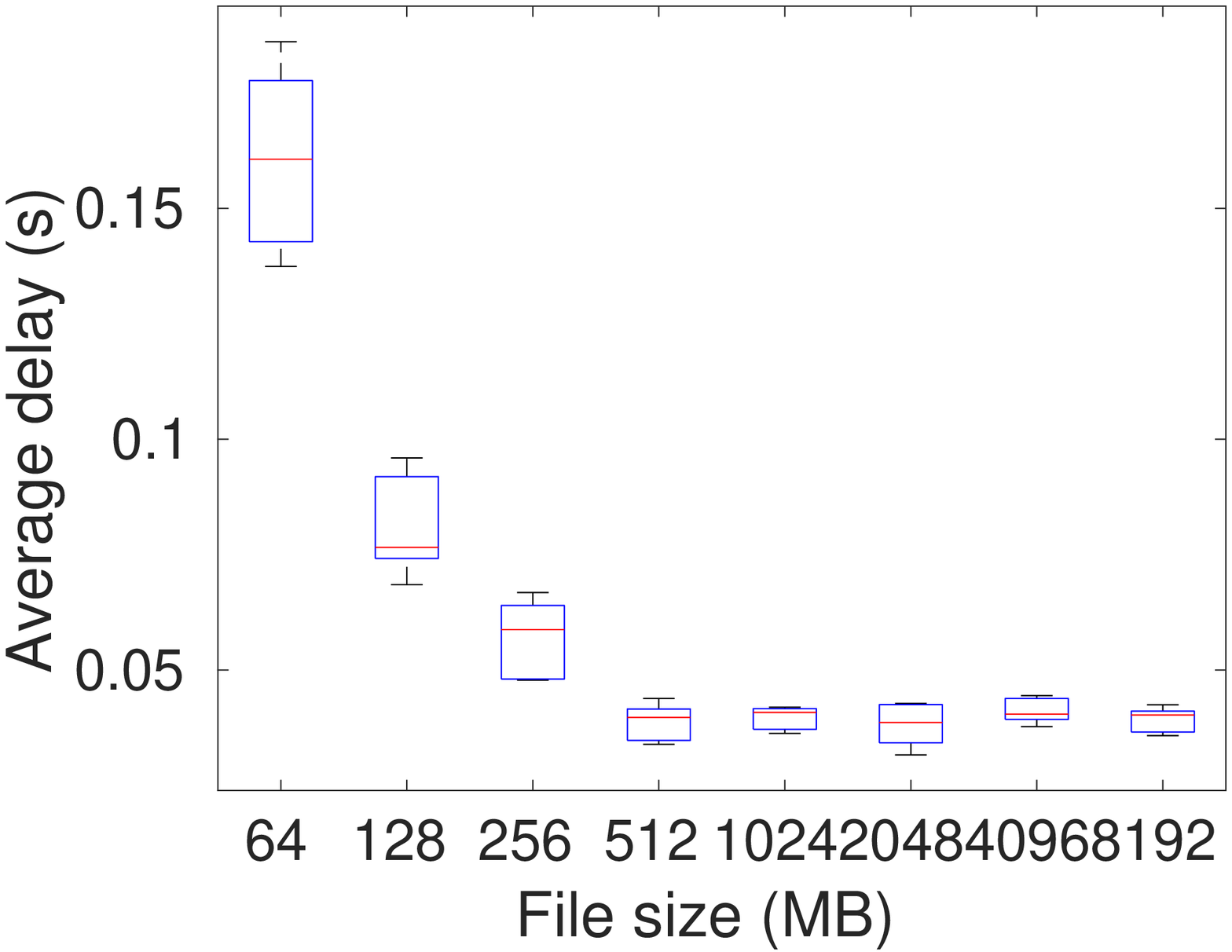}
  \includegraphics[clip, width=0.46\columnwidth]{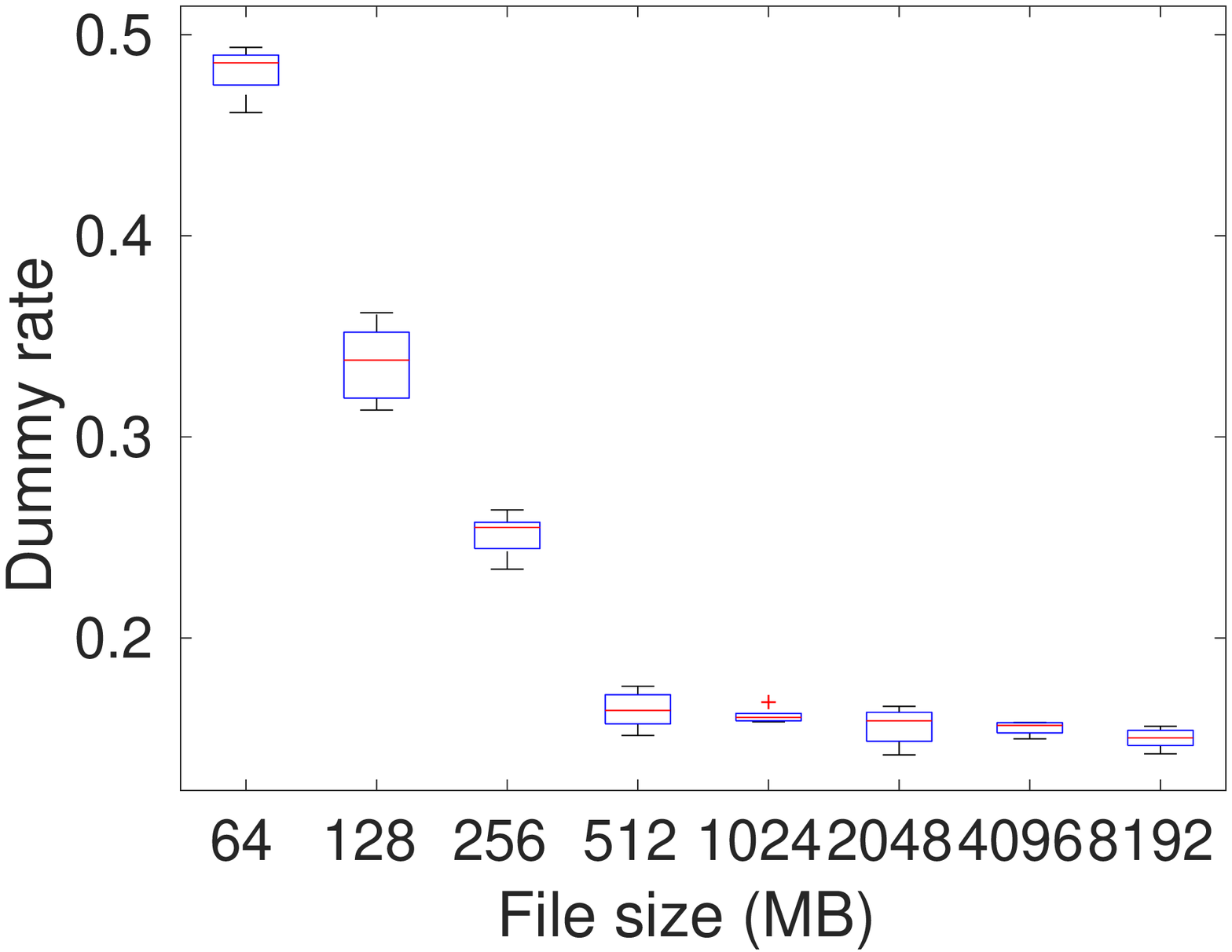}
	\label{fig:tcp_files_downlink}
	\caption{Downlink}
\end{subfigure}
\caption[Dummy and delay rates of fetching files of different sizes]{Comparison between dummy and delay rate when using TCP to fetch files of different sizes, for $\gamma = 1024$. Box plots for values of 5 runs are shown.}
\label{fig:tcp_diff_files_rates}
\end{figure}

\begin{figure}[!t]
\centering
\begin{subfigure}[Uplink]{0.46\columnwidth}
\centering
  \includegraphics[clip, width=0.95\columnwidth]{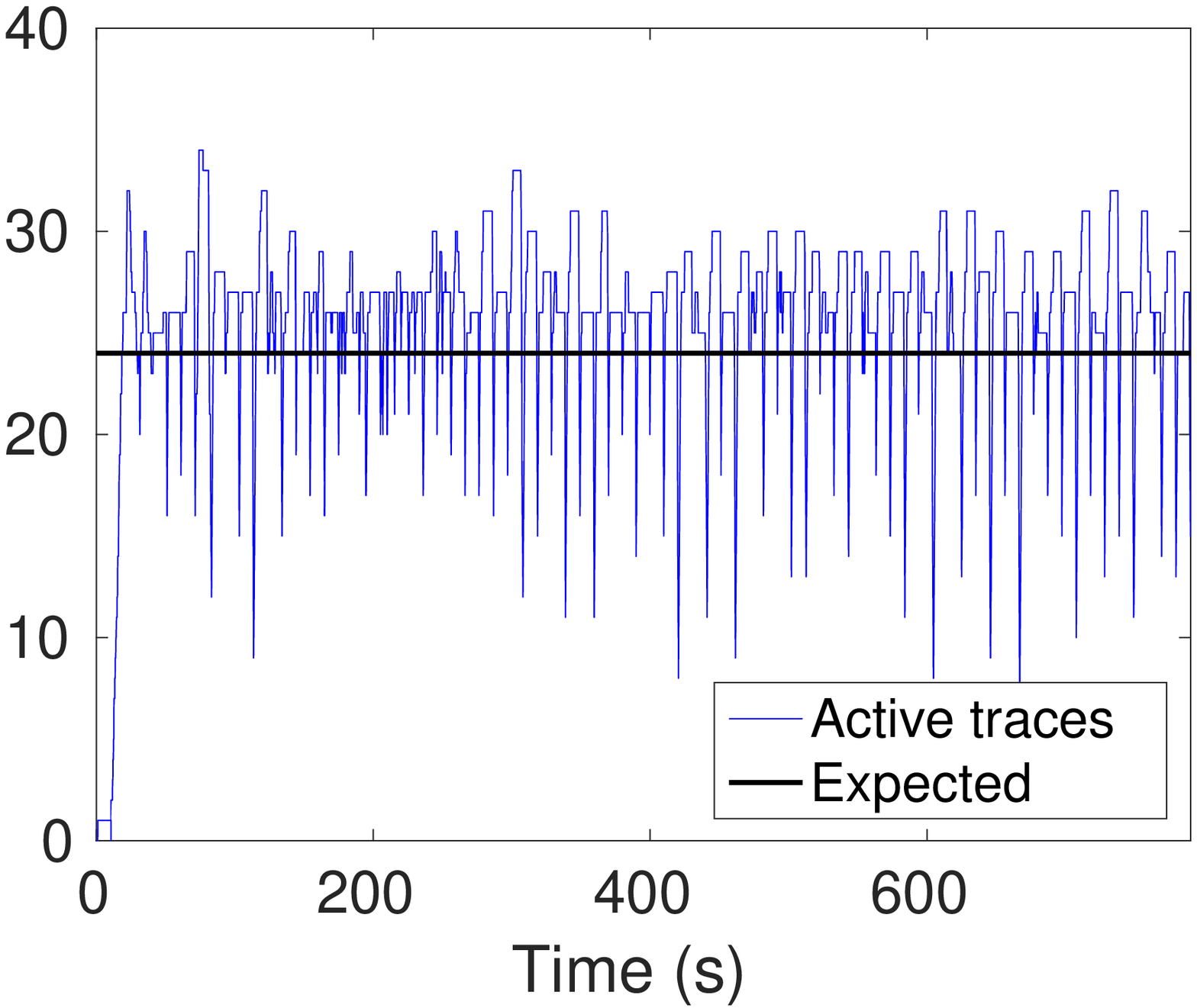}
  	\label{fig:convergence_uplink}
  	\caption{Uplink}
\end{subfigure}
\begin{subfigure}[Downlink]{0.46\columnwidth}
\centering
  \includegraphics[clip, width=0.95\columnwidth]{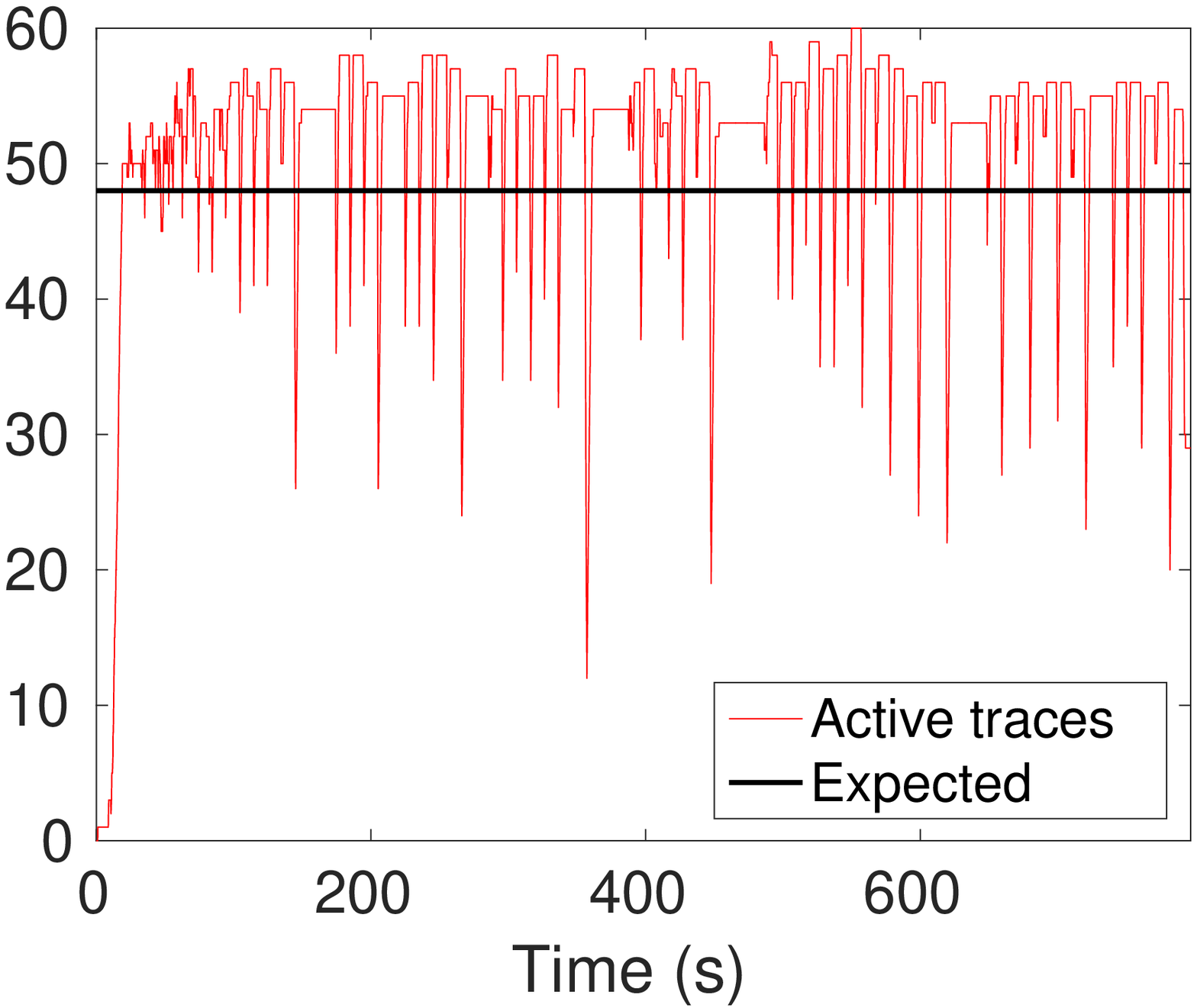}
	\label{fig:convergence_downlink}
	\caption{Downlink}
\end{subfigure}
\caption[Convergence speed of active traces for uplink and downlink]{Illustrating convergence speed of active traces for uplink and downlink. The active traces approach channel capacity only 20 seconds into the fetch. Note that expected value of uplink capacity is half the value of downlink and that is due to TCP protocol sending one ACK for every two data packets. }
\label{fig:convergence}
\end{figure}

{
\section{Evaluation of Privacy Performance}

In this section we begin by collecting experimental measurements of the sequence of traces generated by fetching of each of the Alexa top 100 finance and health web sites in Ireland.    Recall that the sequence of traces generated is affected by interaction between TCP and the traffic shaping applied, and so a simple replay approach is not appropriate. 
 {
Note also that the activation of traces is not synchronized with the start of a web fetch and there may be an offset between the beginning of a fetch and the activation of the first trace whose duration depends on the user's previous network activity and the duration of the silent period (if any) before the start of a fetch.  There is also a random offset between the activation of traces in the outgoing and incoming directions which depends on the link speed, previous traffic history and TCP protocol behaviour. These random offsets lead to variations in the sequence of traces generated by repeated fetches of the same web site.}  The sequence of traces generated also depends on the choice of scheduler and trace parameters, for which we use the values derived from the analysis in the previous section.

Using the collected experimental trace data we carry out a simplified indistinguishability analysis i.e. an analysis of how many combinations of web fetches are consistent with each measured trace.   We then evaluate the performance of the trace-based approach under the state of the art k-NN attack from \cite{wang14}.   This attack uses a large number of features and so is rather computationally demanding but as noted by Panchenko et al \cite{panchenko16} it provides benchmark that outperforms all previous attacks\footnote{The attack introduced in \cite{panchenko16}] uses fewer features and scales better but since it is reported to have broadly similar performance to the k-NN attack it seems reasonable to use the k-NN attack of \cite{wang14} as a baseline for comparison.}. {We also evaluate the performance of the approach against the attack from \cite{panchenko16} which uses fewer features and SVM to classify web pages.}

\subsection{Trace Sequences Generated by Alexa Top 100 Web Sites}
We fetched each of the home pages from the Alexa top 100 finance and health web sites in Ireland 100 times using the Seculink VPN (so 10,000 fetches in total).   Note that several of these pages include dynamic AJAX content as well as adverts etc which can change between fetches.}  Figure \ref{fig:active_vs_time} shows four example trace time histories recorded during these fetches.  It can be seen that the trace time histories in Figures \ref{fig:acvstime_w1} and  \ref{fig:acvstime_w3} are identical and so evidently these two web pages cannot be distinguished by an attacker.   Figure \ref{fig:profiles_vs_webpages} plots the number of distinct trace time histories measured on the uplink and downlink while fetching the 100 web pages and also the number of pages for which each trace time history is observed.   It can be seen that certain trace time histories are generated by 10-30 different web pages and so these web pages are indistinguishable to an attacker.  

\begin{figure}[!t]
\centering
\begin{subfigure}[]{0.48\columnwidth}
\centering
  \includegraphics[clip, width=0.95\columnwidth]{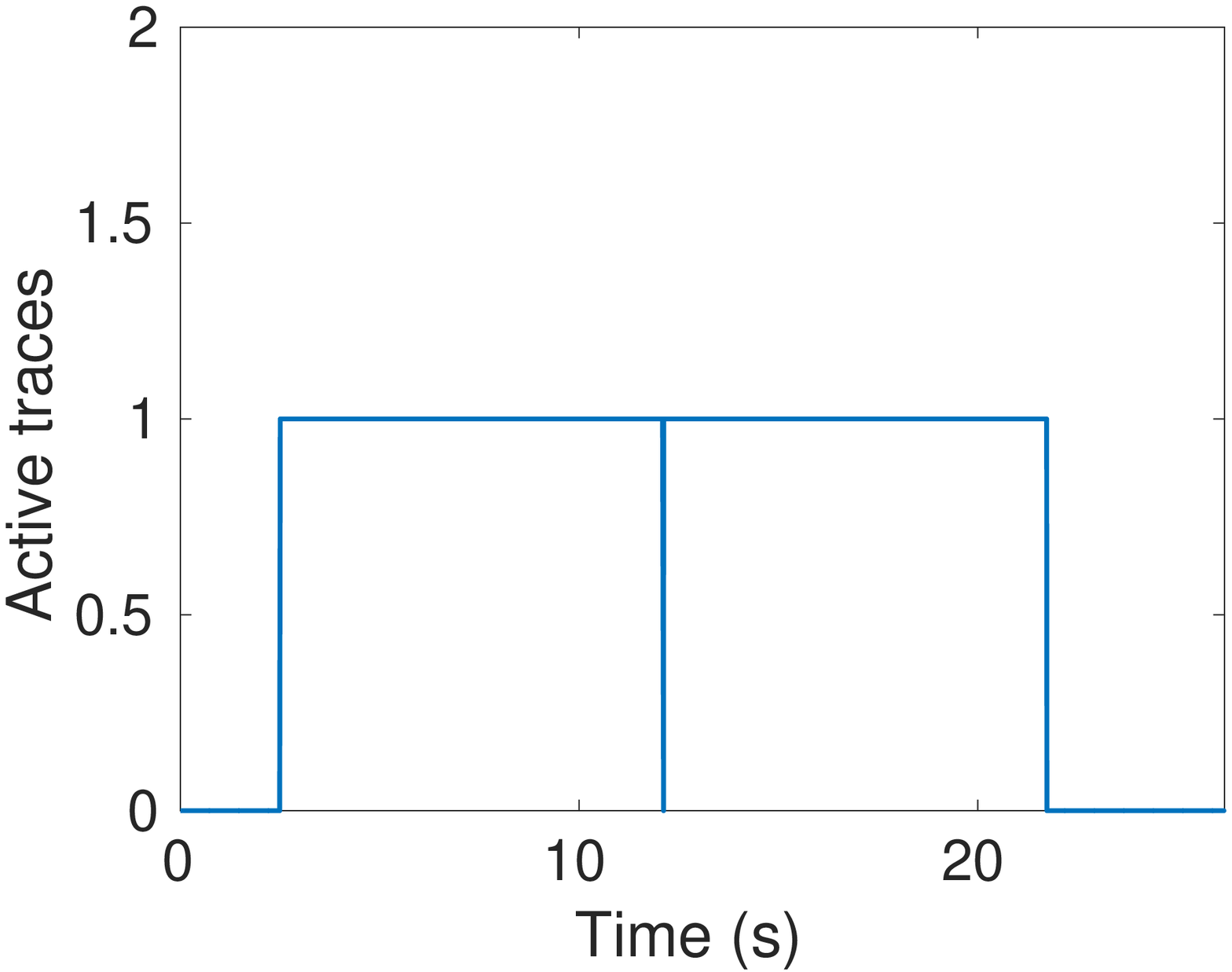}
  	\caption{}
	\label{fig:acvstime_w1}
\end{subfigure}%
\begin{subfigure}[]{0.48\columnwidth}
\centering
  \includegraphics[clip, width=0.95\columnwidth]{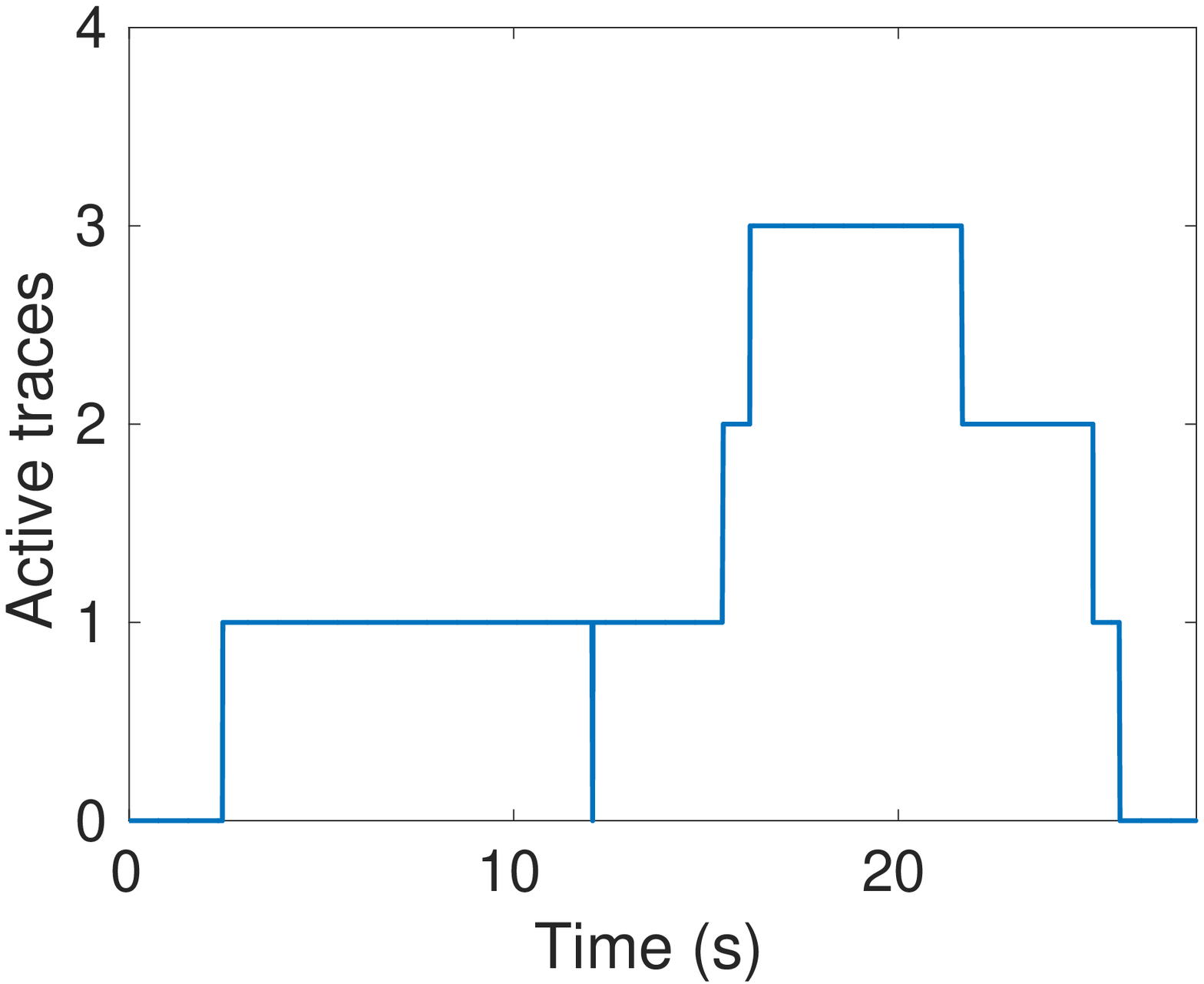}
	\caption{}
	\label{fig:acvstime_w2}
\end{subfigure}
\\
\begin{subfigure}[w3]{0.48\columnwidth}
\centering
  \includegraphics[clip, width=0.95\columnwidth]{traces-vs-time-uplink-test-126.eps}
  	\caption{}
	\label{fig:acvstime_w3}
\end{subfigure}%
\begin{subfigure}[w4]{0.48\columnwidth}
\centering
  \includegraphics[clip, width=0.95\columnwidth]{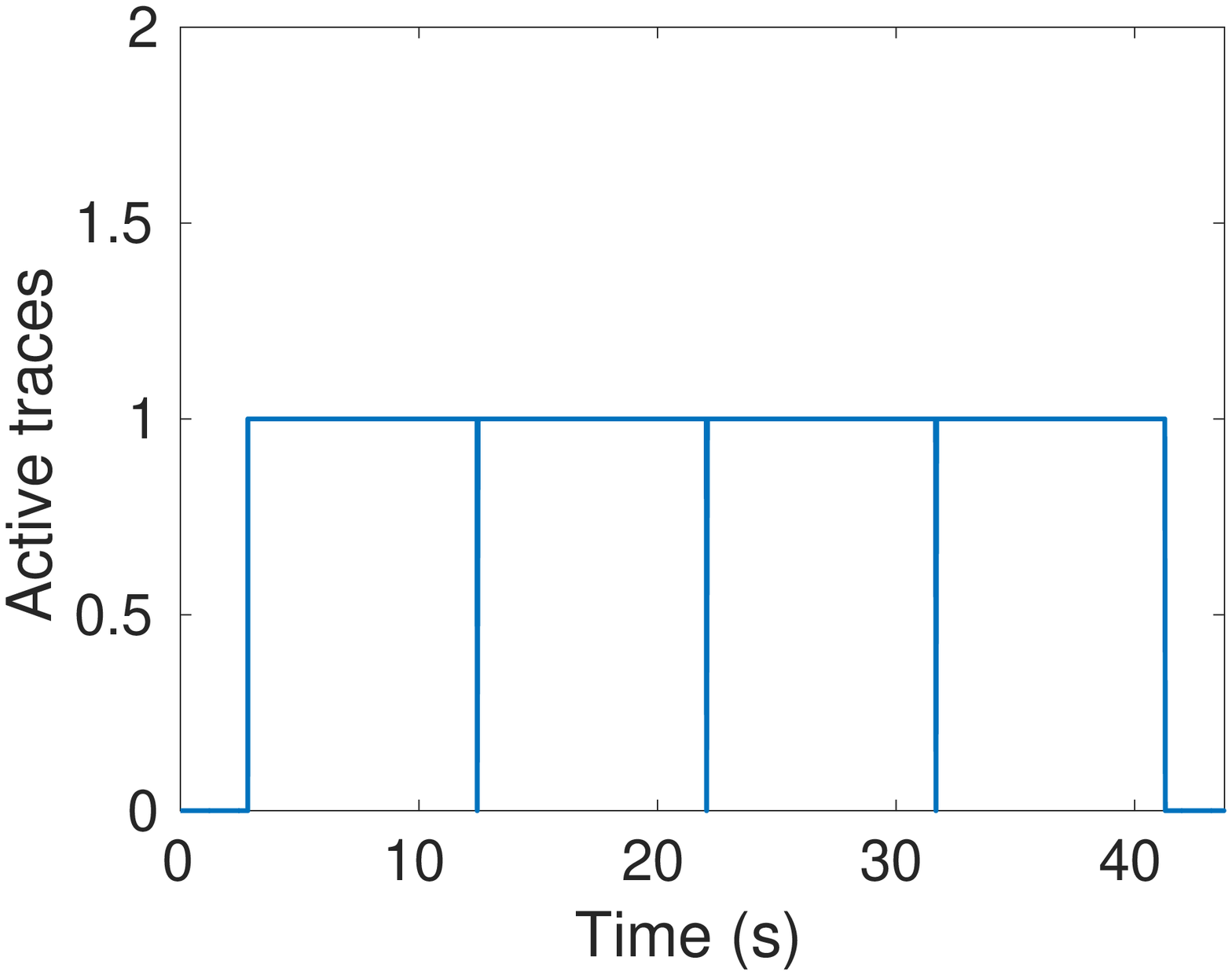}
	\caption{}
	\label{fig:acvstime_w4}
\end{subfigure}
\caption[Trace time histories of fetching four websites]{Examples of trace time histories recorded when fetching the home pages of four websites from the Alexa top 100 finance and health web sites in Ireland. ($P = 1682, n=9615, c = 20, \gamma = 1024$)}
\label{fig:active_vs_time}
\end{figure}

\begin{figure}
\centering
\begin{subfigure}[Uplink]{0.48\columnwidth}
\centering
  \includegraphics[clip, width=0.98\columnwidth]{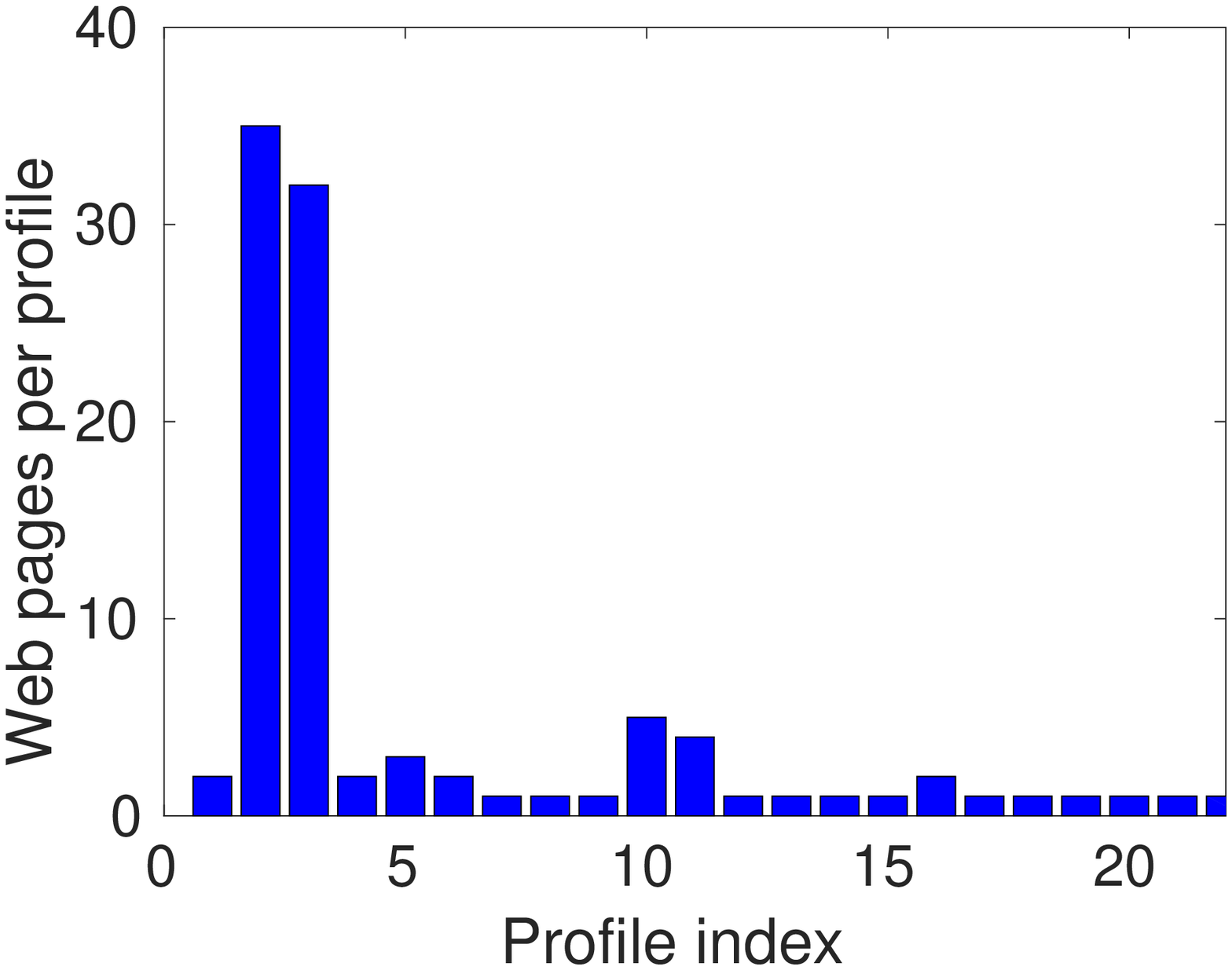}
  	\caption{Uplink}
\end{subfigure}%
\begin{subfigure}[Downlink]{0.48\columnwidth}
\centering
  \includegraphics[clip, width=0.98\columnwidth]{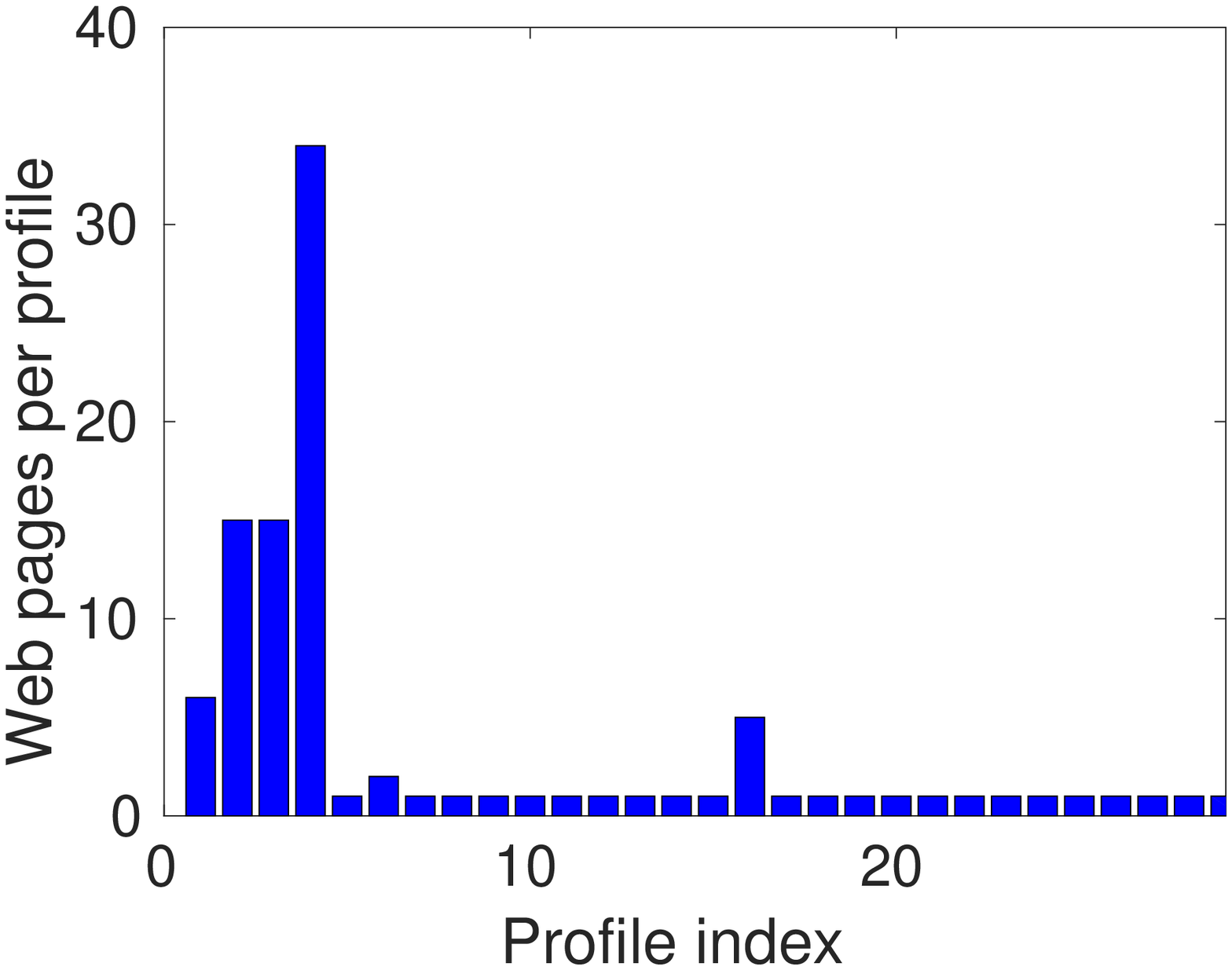}
  	\caption{Downlink}
\end{subfigure}
\caption[Number of pages generating each distinct web trace]{Counts of the number of web pages generating each distinct web trace time history.}
\label{fig:profiles_vs_webpages}
\end{figure}

It can also be seen in Figure \ref{fig:profiles_vs_webpages} that around 20 trace time histories are generated by a single web page, and so potentially vulnerable to attack.   An example is shown in Figure \ref{fig:acvstime_w4}.   Evidently the trace time history in Figure \ref{fig:acvstime_w4} cannot be distinguished from \emph{two} fetches of the web sites in Figures \ref{fig:acvstime_w1} and \ref{fig:acvstime_w3}.   The trace time history in Figure \ref{fig:acvstime_w2} is more complex, and raises the question of whether there exists one or more combinations of web page fetches that yield the same trace time history and so cannot be distinguished from this web fetch.  

\begin{figure}
\centering
\begin{subfigure}[single]{0.48\columnwidth}
\centering
  \includegraphics[clip, width=0.98\columnwidth]{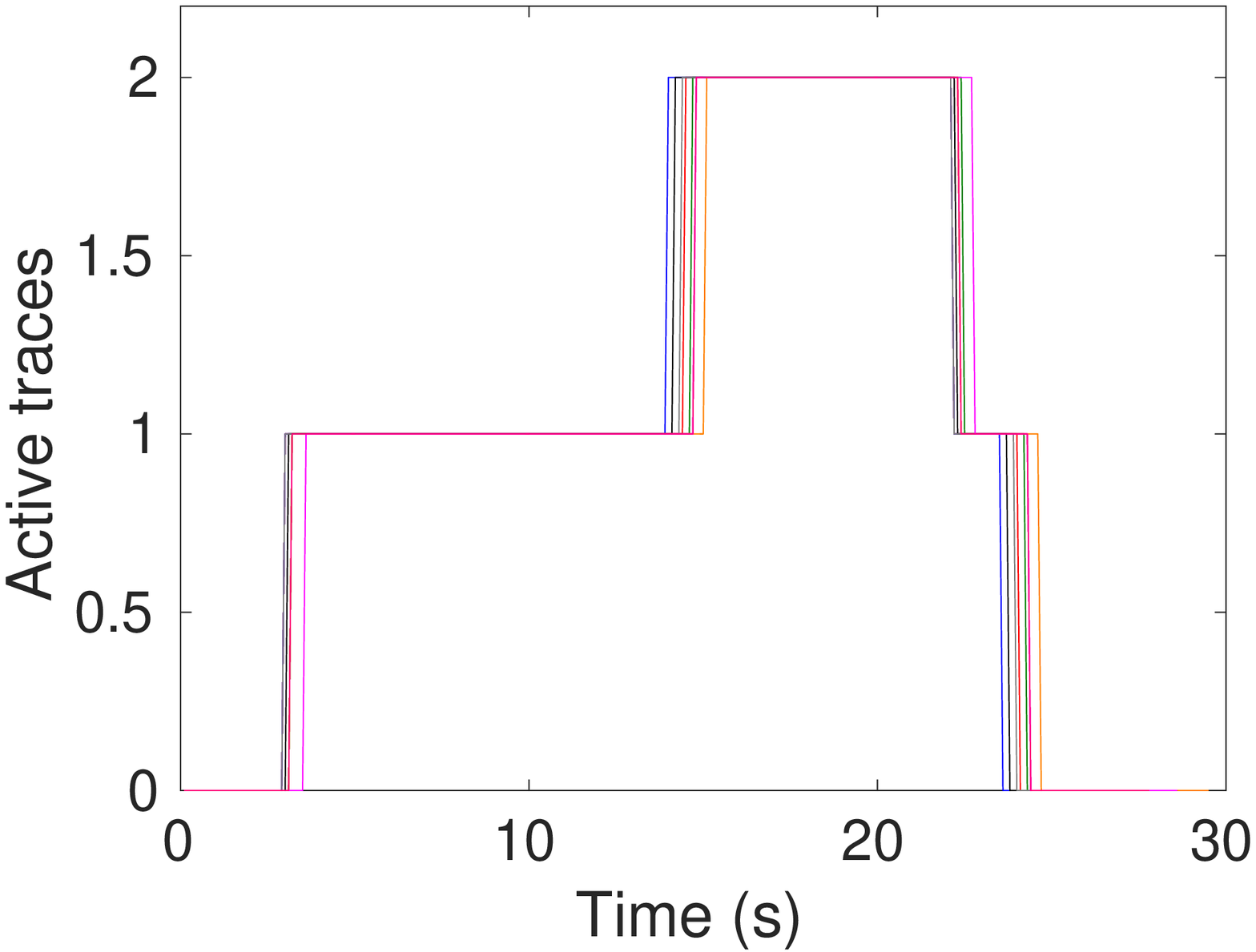}
  	\caption{Single web page}
\end{subfigure}
\begin{subfigure}[multi]{0.48\columnwidth}
\centering
  \includegraphics[clip, width=0.98\columnwidth]{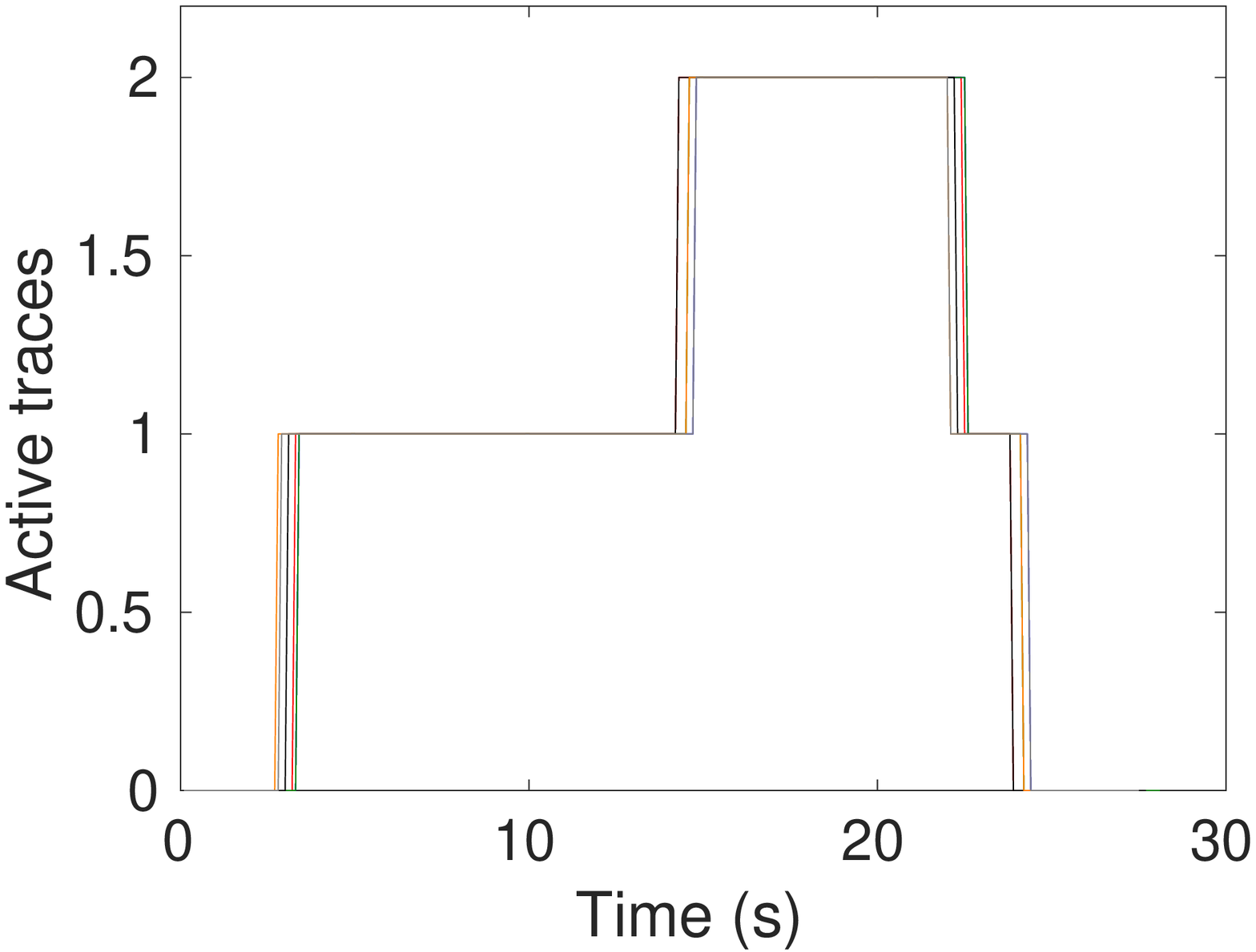}
  	\caption{Concurrent web search}
\end{subfigure}%
\caption[Comparison between trace time histories from two different browsing behaviours.]{{Comparison between trace time histories from two different browsing sessions.  In (a) a single web page is fetched while in (b) three Google search queries fare conducted on concurrent tabs with a 0.2s time gap between fetches. ($P = 1682$, $n = 9615$, $c = 20$, $\gamma = 1024$).}}
\label{fig:sim_patterns}
\end{figure}

{As an illustrative example we collected 10 samples for each of the two following browsing behaviours (i) a web page from our data set is fetched on a single tab and (ii) on three concurrent tabs, the words "invisible", "history" and "browsing" are queried from Google search with a 0.2 second gap between queries. The measured trace time histories are plotted in Figure \ref{fig:sim_patterns}.  It can be seen that the time histories look very similar, thus allowing a user to plausibly deny that they fetched the single sensitive web page since the observed trace time history might equally have been generated by non-sensitive google queries.

{
\subsection{Indistinguishability Analysis}

We can analyse the indistinguishability of the traces associated with the Alexa data more formally as follows.   The analysis is basically that of a packing problem, namely given an observed sequence of traces what combinations of web fetches are compatible with that sequence i.e. what combination of other trace sequences (since each web fetch generates a sequence) can be packed into a given observed sequence of traces.    Let $\mathcal{H}$ denote the set of measured trace sequences generated by the Alexa web pages.   For  trace sequence $h\in \mathcal{H}$ we can determine (by exhaustive search) the combinations $C_{h,1},C_{h,2},\dots,C_{h,m_h}$ of the other trace sequences that are indistinguishable from $h$, where combination $C_{h,i}=\{(h_{i,1},n_1), (h_{i,2},n_2),\dots\}$ with $h_{i,j}\in \mathcal{H}\setminus\{h\}$ and $n_j$ the number of times that $h_{i,j}$ is repeated in the combination.   


Of course not all combinations are equally likely and for plausible deniability we need there to be several combinations that have reasonably high probability.   As already noted, there are two main difficulties with actually carrying out probability calculations however.  One is that the probability of fetching a page is usually unknown and user dependent, although it might be estimated from historical data.  The other is that the pages fetched may be correlated, and again would have to be estimated from historical data.   With these caveats in mind, we can still gain insight under some simplifying assumptions.}

Let $\mathcal{W}_h$ denote the set of single web pages that generate trace sequence $h$, so $|\mathcal{W}_h|$ is the number of single web pages that generate $h$ (see Figure \ref{fig:profiles_vs_webpages}).   Let the web page fetched $W$ be a random variable that takes values in $\cup_{h\in\mathcal{H}}\mathcal{W}_h$.    Assume, for simplicity, that the web pages in $\mathcal{W}_h$ are fetched independently and are equally likely to be fetched.  Then when combination $C_{h,i}$ is observed the probability that page $\omega$ was fetched is,
\begin{align}
&\textsc{P}(W=\omega | C_{h,i}) \notag\\
&\qquad= \sum_{(h,n)\in C_i: \omega\in \mathcal{W}_h} \sum_{j=1}^{n}  {n \choose j} \frac{1}{|\mathcal{W}_{h}|}^j(1-\frac{1}{|\mathcal{W}_{h}|})^{n-j}
\end{align}
Let $C$ be a random variable which is the combination used.  Assume, again for simplicity, that each combination $C_{h,1},C_{h,2},\dots,C_{h,m_h}$ is equally probable when trace $h$ is observed, i.e $\textsc{P}(C=C_{h,i})=1/m_h$.  Then the probability that web page $\omega$ was fetched given that trace time history $h$ was observed is,
\begin{align}
\textsc{P}(W=\omega | h) = \sum_{i=1}^{m_h} \textsc{P}(W=\omega | C_{h,i})/m_h\label{eq:pomega}
\end{align}

When the traces $h\in\mathcal{H}$ are equally likely to be observed then $\textsc{P}(W=\omega ) = \frac{1}{|\mathcal{H}|}\sum_{h\in\mathcal{H}}\textsc{P}(W=\omega | h)$.   Figure \ref{fig:websites_probabilities_mean} plots $\textsc{P}(W=\omega)$ calculated using (\ref{eq:pomega}) for each of the 100 web pages, sorted in increasing order.   It can be seen that this probability is less than 0.08 on the downlink and less than 0.05 on the uplink.   We can conclude therefore that with the foregoing assumptions the user can reasonably deny that page $\omega$ was fetched despite an attacker observing the transmitted packet trace.

\begin{figure}
\centering
\begin{subfigure}[Uplink]{0.48\columnwidth}
\centering
  \includegraphics[clip, width=0.98\columnwidth]{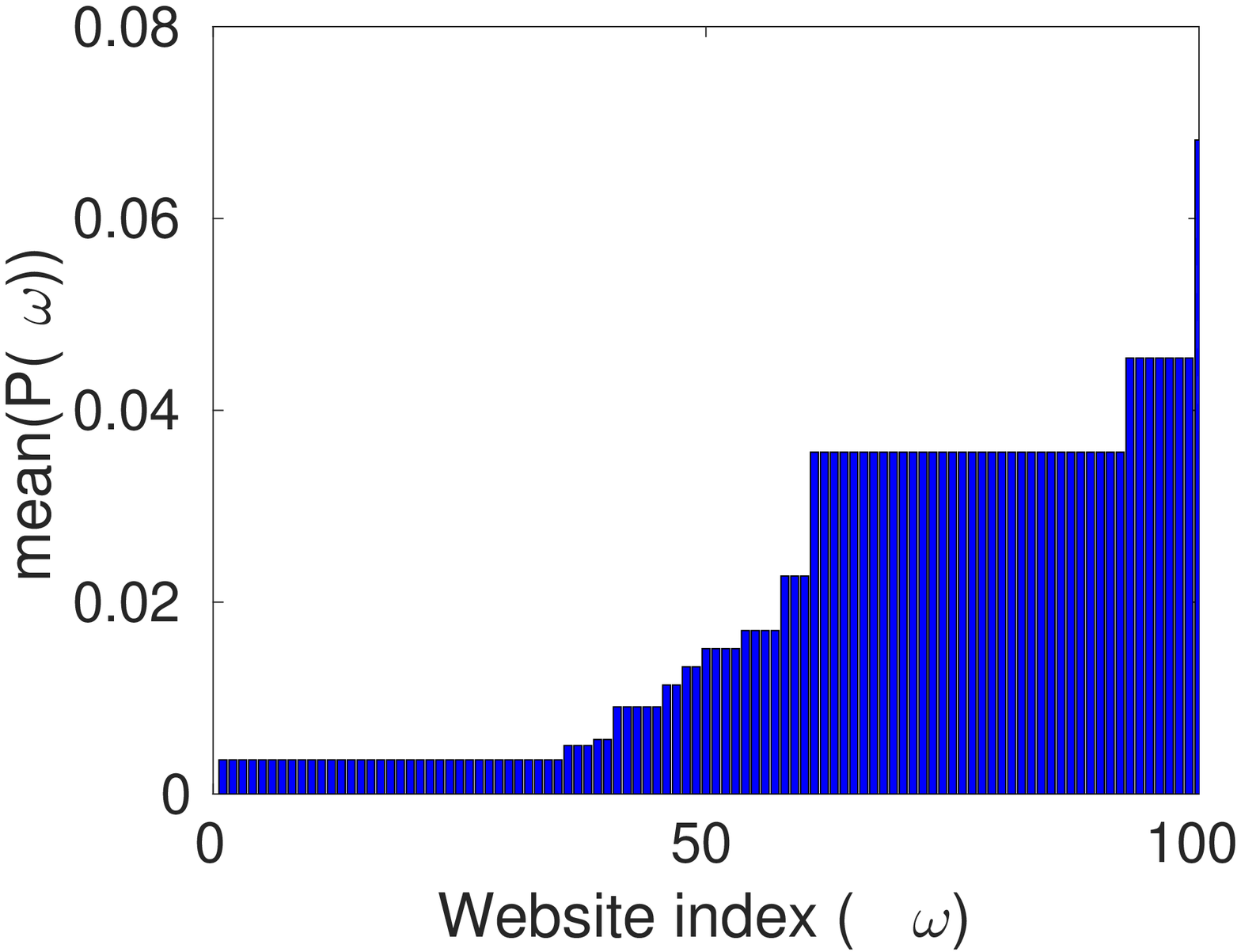}
  	\caption{Uplink}
\end{subfigure}%
\begin{subfigure}[Downlink]{0.48\columnwidth}
\centering
  \includegraphics[clip, width=0.98\columnwidth]{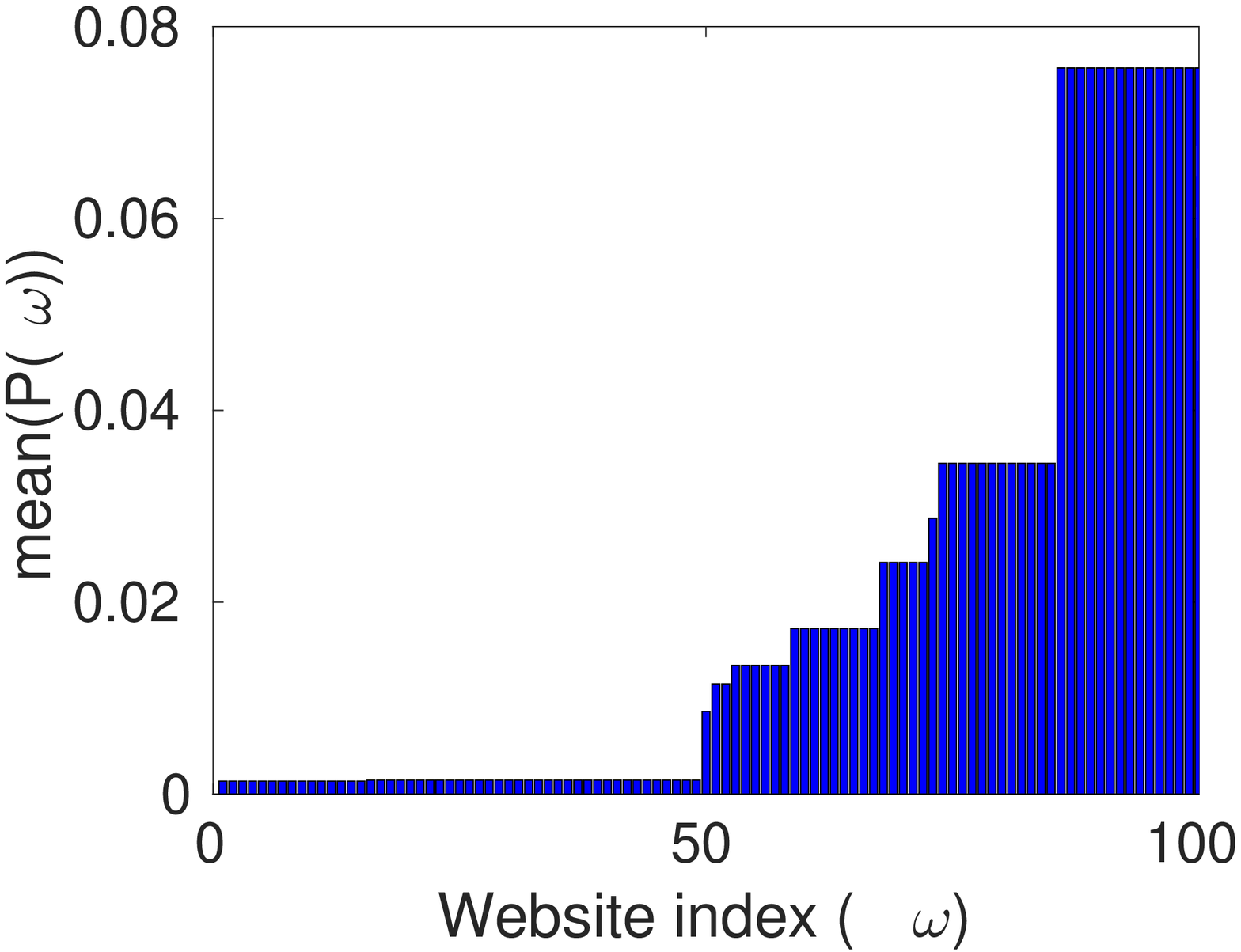}
  	\caption{Downlink}
\end{subfigure}
\caption[Observation probability of $\omega$ assuming equiprobable combinations]{Probability $\textsc{P}(\omega)$ that web page $\omega$ was fetched given an observed trace time history vs $\omega$ and assuming equiprobable combinations.}
\label{fig:websites_probabilities_mean}
\end{figure}

To get a sense of the sensitivity of these values to the assumption of equiprobability, for comparison, let $h^*\in\arg\min_{h\in\mathcal{H}} \textsc{P}(W=\omega | h)$ and suppose that this worst case trace $h^*$ is always observed i.e. $\textsc{P}(W=\omega ) = \textsc{P}(W=\omega | h^*)$.  Figure \ref{fig:websites_probabilities_max} shows the calculated $\textsc{P}(W=\omega|h^*)$ for our measured web fetches.  It can be seen that $\textsc{P}(W=\omega|h^*)$ is higher than in Figure \ref{fig:websites_probabilities_mean}, as expected, and indeed reaches one but only for a small number of web pages.   Hence, even in this worst case a user has fairly level of strong deniability.

\begin{figure}
\centering
\begin{subfigure}[Uplink]{0.48\columnwidth}
\centering
  \includegraphics[clip, width=0.98\columnwidth]{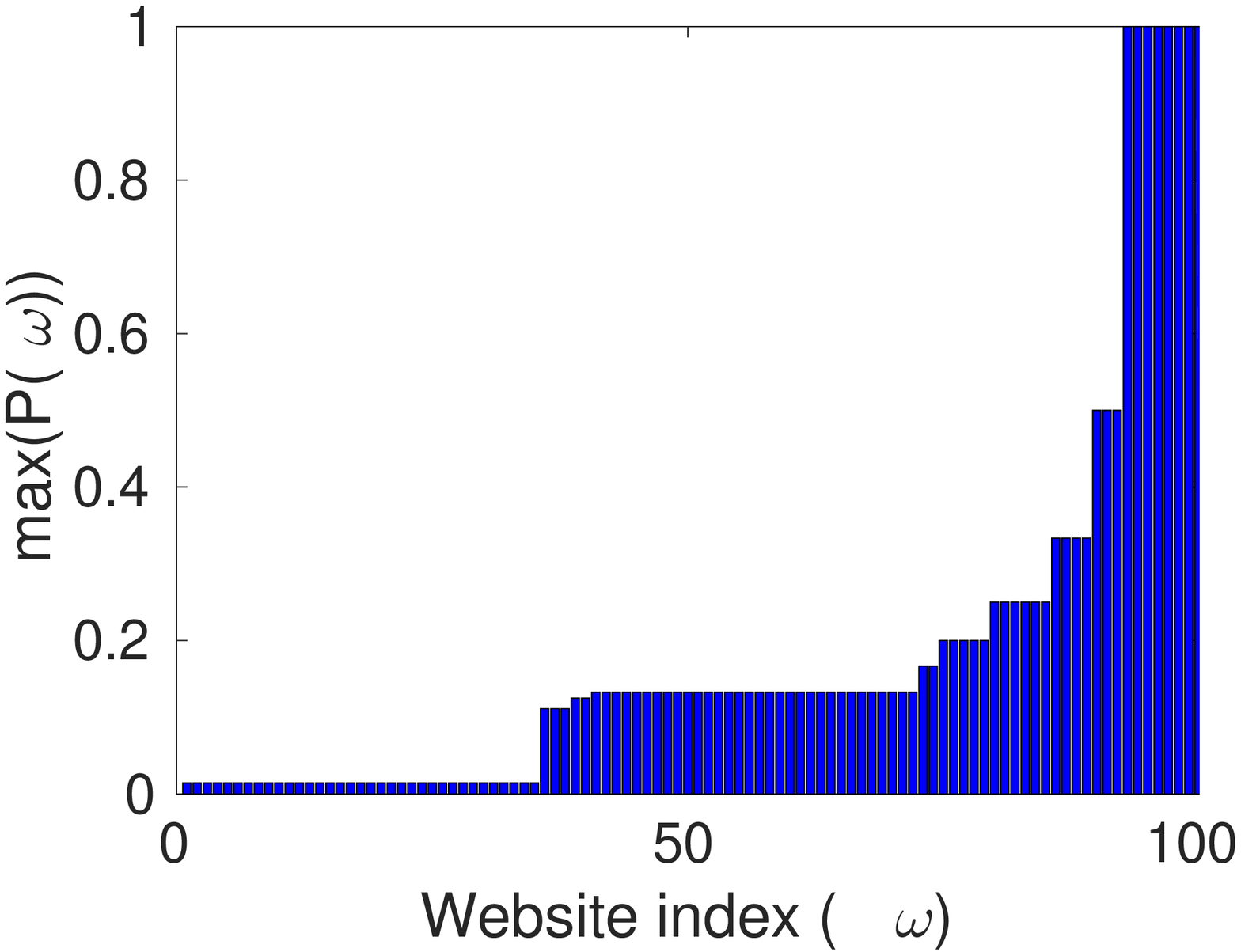}
  	\caption{Uplink}
\end{subfigure}%
\begin{subfigure}[Downlink]{0.48\columnwidth}
\centering
  \includegraphics[clip, width=0.98\columnwidth]{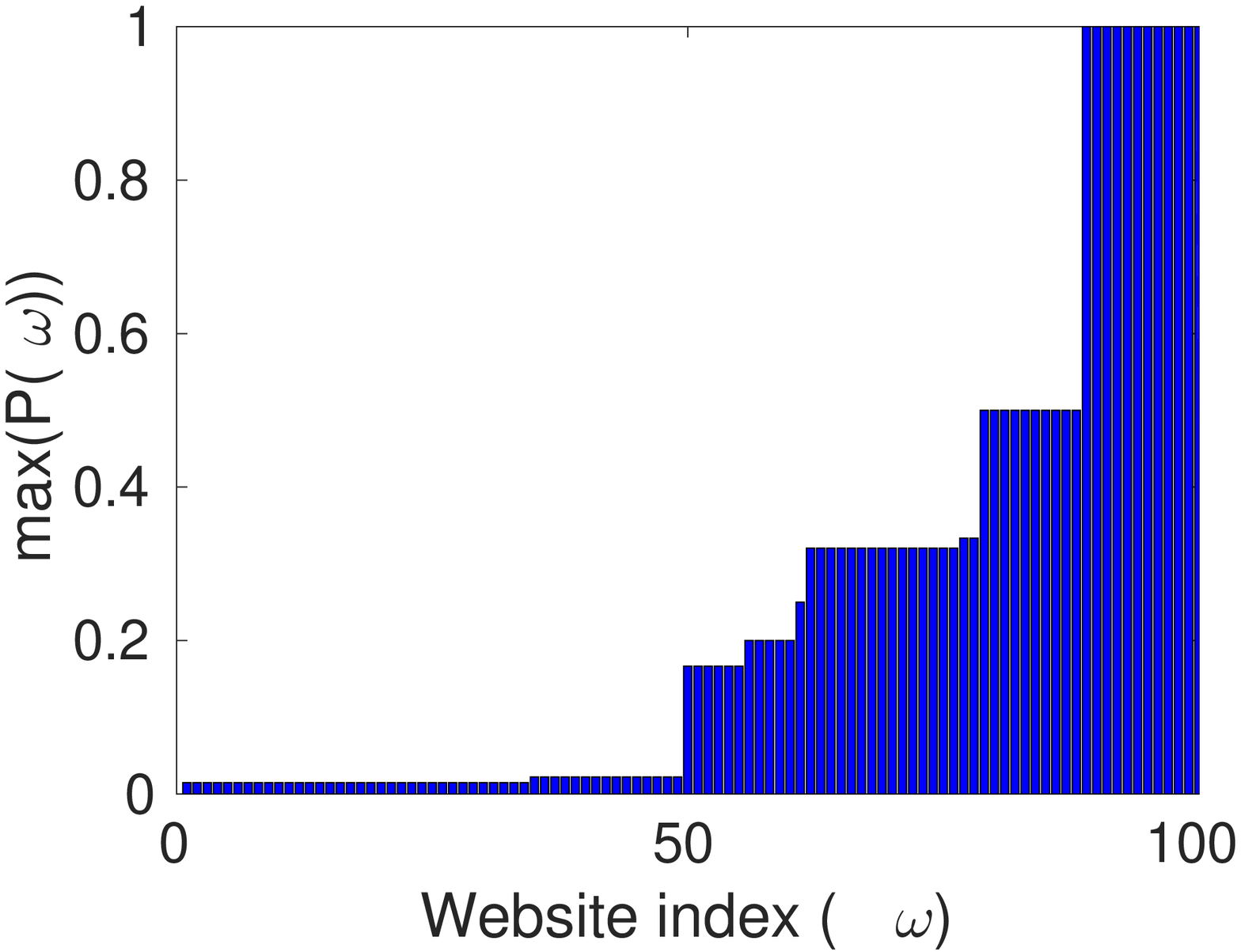}
  	\caption{Downlink}
\end{subfigure}
\caption[Observation probability of $\omega$ assuming worst combination has probability one]{Probability $\textsc{P}(\omega)$ that web page $\omega$ was fetched given an observed trace time history vs $\omega$ and assuming worst combination has probability one.}
\label{fig:websites_probabilities_max}
\end{figure}

{
\subsection{Performance Against State of the Art Attacks}
\label{sec:attack}

In this section we evaluate the performance of the proposed trace-based approach against two state of the art attacks namely the k-NN attack from \cite{wang14} and the SVM attack from \cite{panchenko16}.

The attack from \cite{wang14}} uses a large number of features, including packet timing information and the duration of transmission, and has state of the art performance.   We note that the attack assumes (i) web pages are fetched individually and sequentially, one at a time, and (ii) the start and end of each web fetch is known.   The assumption that web pages are fetched sequentially is a strong one and can be expected to greatly reduce the indistinguishability of the trace sequences generated by the Seculink tunnel (since it excludes from consideration all combinations of web fetches consistent with an observed trace sequence other than those involving a single web fetch).   Knowledge of the start and end of each web fetch also need not be straightforward to obtain in practice, especially when a trace-based defence is used, and we assist the attack by explicitly providing this information.    

\begin{figure}
\centering
\includegraphics[clip,angle=270,width=0.7\columnwidth]{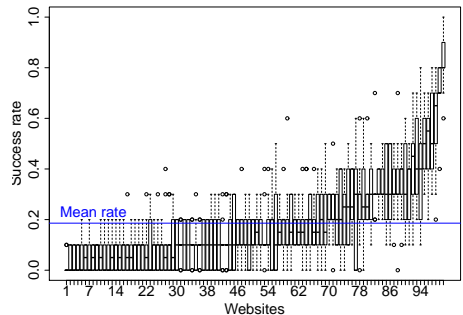}
\caption[Performance of Wang's attack]{{Performance of the attack in \cite{wang14} against the Alexa top 100 websites when using the Seculink tunnel.  Each web site is fetched 100 times and 10-fold cross validation is used (splitting training and test data).   The box and whiskers indicate the median, 25 and 75 percentiles plus outliers for each page.  The mean success rate of the attack is 18\%, indicated by the horizontal line.}}\label{fig:wang-result}
\end{figure}

Figure \ref{fig:wang-result} shows the measured performance of the attack against the Alexa pages fetched via the Seculink tunnel.   Each web site is fetched 100 times and 10-fold cross validation is used where training and test data are both collected from the tunnel.   The median, 25 and 75 percentiles plus outliers of the attack success rate are shown for each page.   We note that the pages considered here include dynamic content such as AJAX, adverts, tweets etc.  The relatively low spread in success rates for each web page seen in Figure \ref{fig:wang-result} suggests that this content has little impact on the performance of proposed attack or defence.

\begin{figure}
\centering
\begin{subfigure}[common]{0.48\columnwidth}
\centering
  \includegraphics[clip, width=0.98\columnwidth]{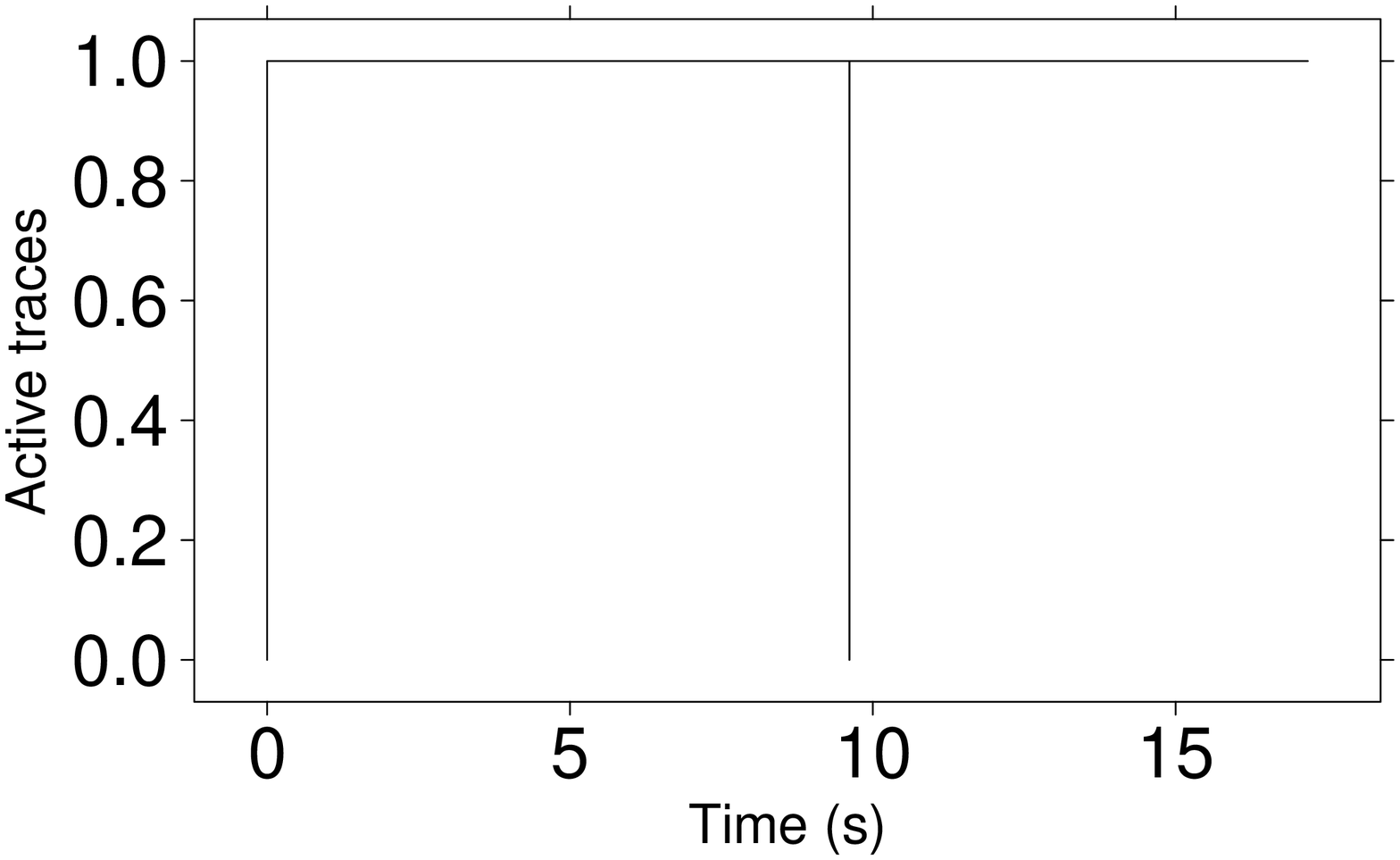}
  	\caption{Most common trace sequence}
	\label{fig:traces-common}
\end{subfigure}
\begin{subfigure}[uncommon]{0.48\columnwidth}
\centering
  \includegraphics[clip, width=0.98\columnwidth]{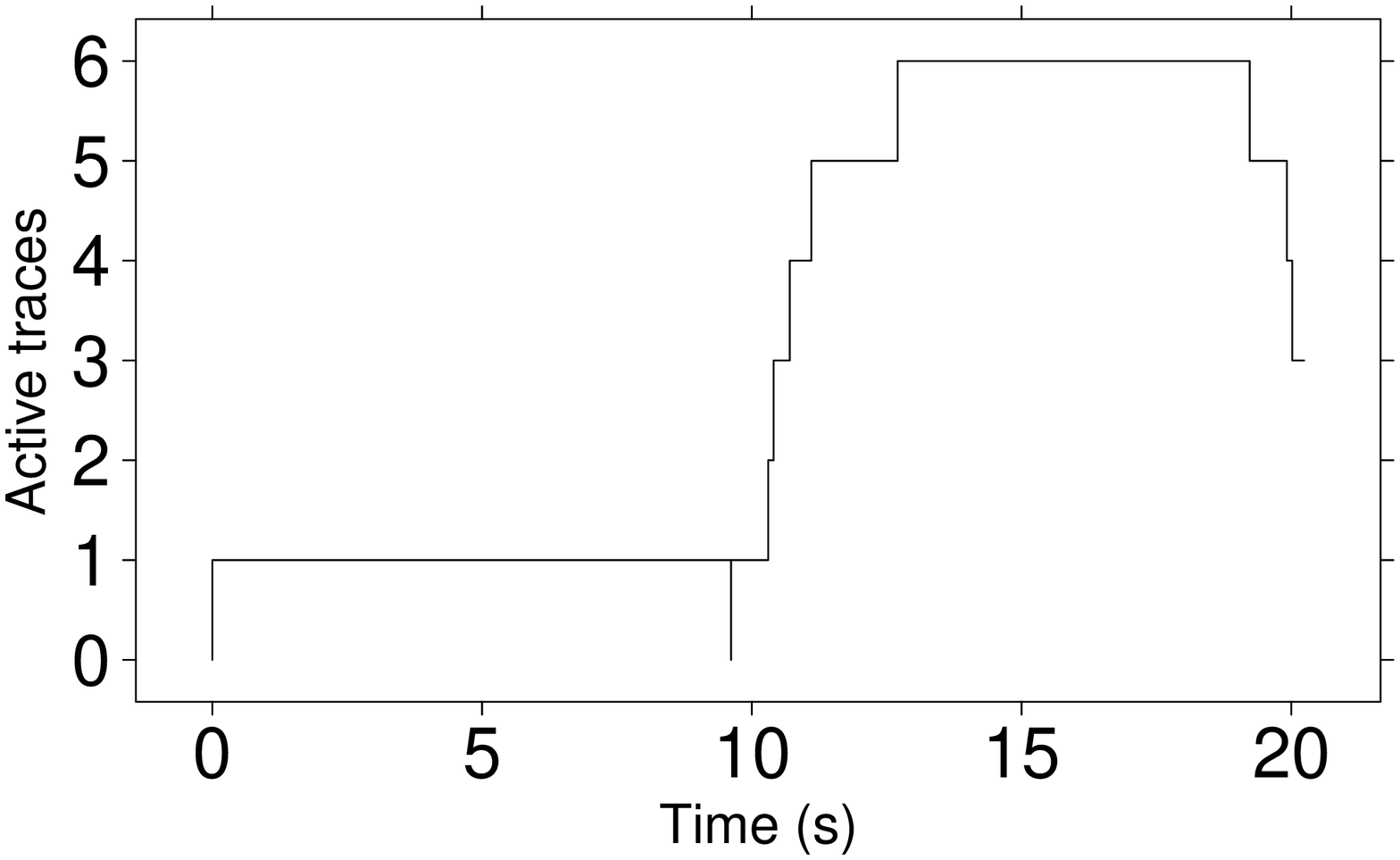}
  	\caption{An uncommon trace pattern}
	\label{fig:traces-uncommon}
\end{subfigure}%
\caption{{Examples of common and complex traces sequences generated by the Alexa web pages. }}
\label{fig:traces-shapes}
\end{figure}

The mean success rate of the attack is 18\%.   This is surprisingly low given the extra help given to the attacker by the assumption of sequential web page fetches and knowledge of the start/end times of each fetch, and so quite encouraging.   

It can be seen that a handful of websites suffer from a higher success rate (the web sites are ordered by success rate in Figure \ref{fig:wang-result}) and it is these which pull the average success rate up.  To try to gain more insight into why the attack has much lower success rates against some pages compared to others we plot the trace sequences of example pages in Figure \ref{fig:traces-shapes}.    Most of websites are served with two back to back traces,  see Figure \ref{fig:traces-common}, which is the most common timing pattern among all of the websites considered.   The websites against which the attack is successful tend to generate more complex trace sequences, see for example Figure \ref{fig:traces-uncommon}.    Since the attacker has the additional information that web pages are fetched sequentially these more complex traces can be used to identify a single web page.  However, a less powerful attacker (and perhaps a more realistic one) which could not rely on this knowledge would be faced with the fact that many combinations of multiple web fetches would also be consistent with such an observed trace -- this is easy to see since most web fetches generate only two back to back traces as shown in Figure \ref{fig:traces-common}.   Hence the attack success rates reported here are likely overly optimistic.

\begin{figure}
\centering
\includegraphics[clip,angle=270,width=0.7\columnwidth]{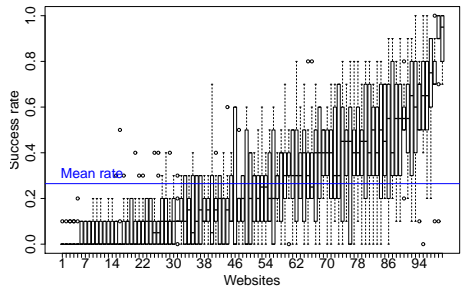}
\caption[Performance of Panchenko's attack]{{Best performance of the attack in \cite{panchenko16} against our baseline Alexa top 100 websites when using the Seculink tunnel. The SVM parameters are $c=2048$ and $\gamma=0.5$. The mean success rate of the attack is 26\%, indicated by the horizontal line.}}\label{fig:panchenko-result}
\end{figure}

{
The attack from \cite{panchenko16} uses around 100 features to capture size, direction and ordering of packets in web samples. Although this attack does not explicitly capture packet timing information, features regarding packet ordering may capture some of this information. Similarly to the attack from \cite{wang14} this attack also assumes that web pages are fetched individually and sequentially and that the beginning and ending of each capture is known to the attacker. Using fingerprints generated from features, the attack uses an SVM with parameters $c$ and $\gamma$ to classify web site samples. The algorithm tries different values for $c$ and $\gamma$ to find those which achieve the highest accuracy.

Figure \ref{fig:panchenko-result} shows the highest measured performance of the attack against the same dataset used with the previous k-NN attack. The SVM parameters for this highest accuracy measurement are $c=2048, \gamma=0.5$. We used 100 features following a recommendation from authors in \cite{panchenko16} that suggests no significant increase in classification is observed beyond this value. Again 10-fold cross validation is used where training and test data are both collected from the tunnel and the median, 25 and 75 percentiles plus outliers of the attack success rate are shown for each page.

The mean success rate of the attack is 26\%.   Again given the extra help given to the attacker by the assumption of sequential web page fetches and knowledge of the start/end times of each fetch, the results are surprisingly low.  For comparison, in \cite{panchenko16} the authors report a success rate of 91\% against Tor traffic in a closed world scenario using 104 features with 90 instances from each website.}

\subsection{Comparison with A State of the Art Defence}
We compare the dummy packet overhead of the trace-based approach against that of the Supersequence defence proposed in \cite{wang14}, which also adopts an indistinguishability approach (referred to as an anonymity set approach in \cite{wang14}) to enhancing privacy.   The Supersequence defence first clusters websites by similarity and then represents each cluster using a candidate supersequence. This requires the prior knowledge of web site packet traces and so as well as requiring the maintenance of a trace database not just for each user/device but also for each location since the packet trace generated by a fetch depends on the properties of the access link) it also need not work well with previously unseen web pages.   We use code kindly supplied by the authors of \cite{wang14} but note that this uses a simulation/replay approach which, as discussed previously, does not take account of the impact of traffic shaping on the web fetch packet trace and so some caution is needed when comparing the overheads reported for the Supersequence defence and for the full Seculink implementation of the trace-based defence.

\begin{figure}[!t]
\centering
\begin{subfigure}[Proposed defence]{0.48\columnwidth}
  \includegraphics[clip, width=1.0\columnwidth]{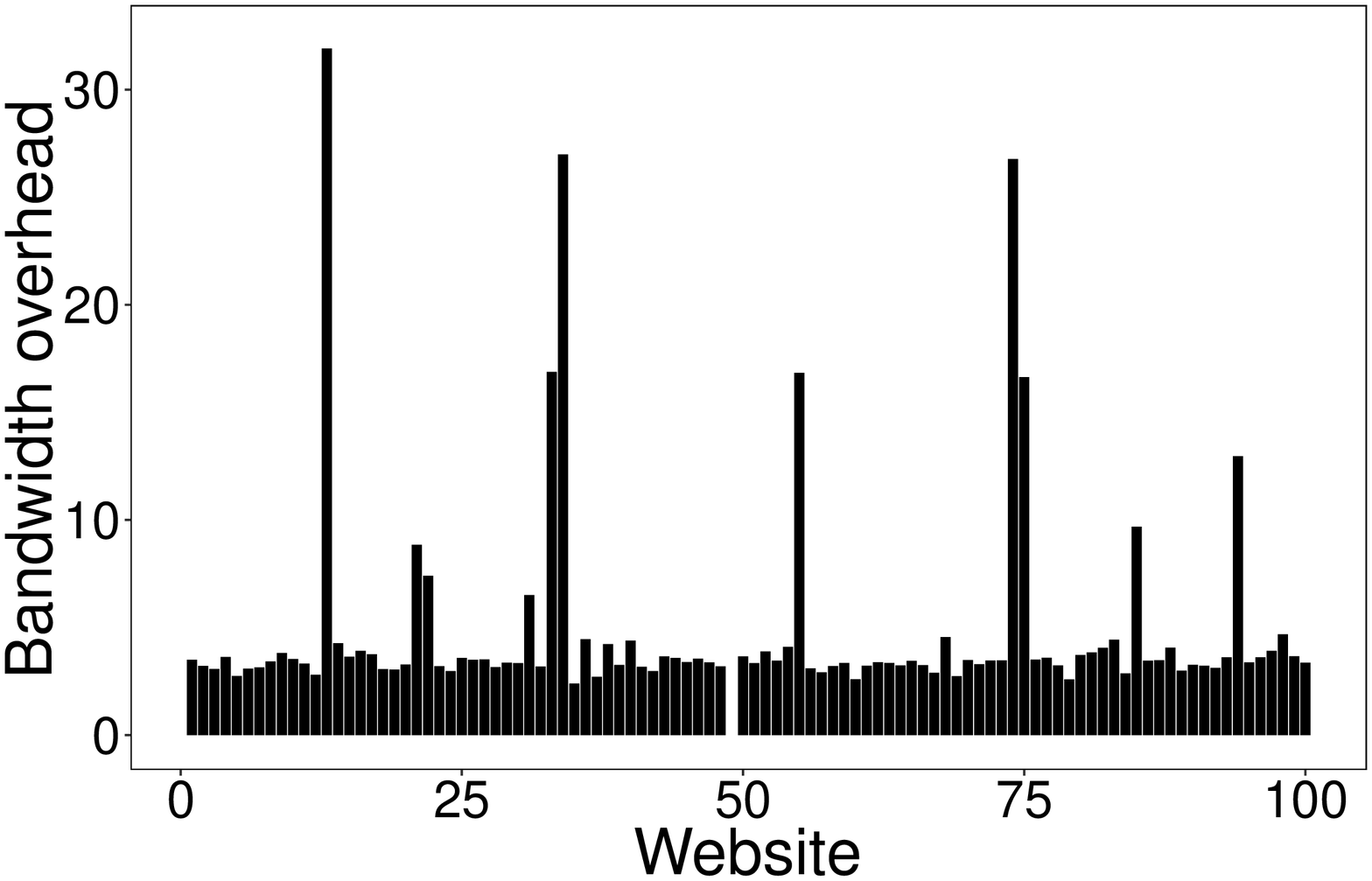}
	\label{fig:proposed-pverhead}
	\caption{Proposed trace-based defence}
\end{subfigure}
\begin{subfigure}[Supersequence defence]{0.48\columnwidth}
  \includegraphics[clip, width=1.0\columnwidth]{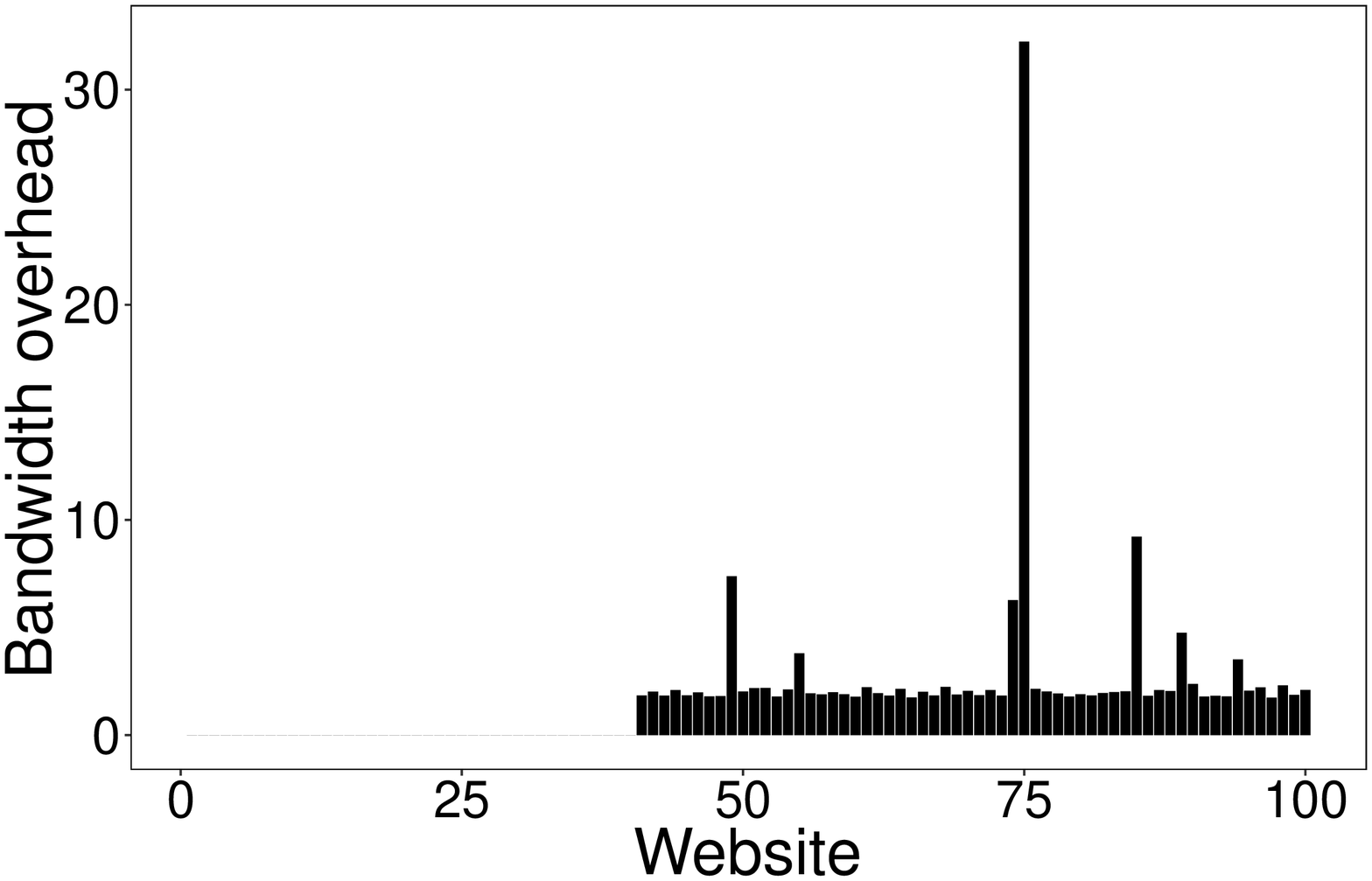}
	\label{fig:wang-overhead}
	\caption{Supersequence defence}
\end{subfigure}
\caption{{Measured bandwidth overhead for the Supersequence defence and the proposed trace-based defence. The Supersequence defence uses the first 40 websites to learn clusters and then defends the following 60 websites.}}
\label{fig:dummy-overhead}
\end{figure}

Figure \ref{fig:dummy-overhead} shows the measured bandwidth overhead (the measure used in \cite{wang14}) of both approaches for the Alexa web sites.   It can be seen that the overheads are broadly similar for both, although generally somewhat higher for the trace-based defence (although bear in mind the caveat that this uses a full implementation rather than a replay-based simulation).    

\section{Conclusions}
In this paper we introduce a low overhead capacity-achieving tunnel that is resistant to traffic analysis in the sense that it provides deniability to users that any specified web page was fetched given that a specified packet trace is observed on the tunnel.   We present a scheduler design for managing the transmission of traces to satisfy user traffic demand while maintaining reasonably low delay and throughput overhead due to dummy packets.   Experimental results are also presented demonstrating the effectiveness of this scheduler under a range of realistic network conditions and real web page fetches.

\section{Acknowledgement}
The assistance of Dr. Tao Wang in facilitating the implementation of existing attacks and defences is gratefully acknowledged.

\bibliography{references}{}
\bibliographystyle{plain}

\begin{IEEEbiography}{Saman Feghhi}
\end{IEEEbiography}

\begin{IEEEbiography}{Douglas Leith}
\end{IEEEbiography}

\end{document}